\documentclass[a4paper,english]{lipics}

\usepackage[utf8]{inputenc}
\usepackage[noadjust,nosort]{cite}
\usepackage{amsmath}
\usepackage{amsfonts}
\usepackage{MnSymbol}
\usepackage{thmtools}
\usepackage{thm-restate}
\usepackage{booktabs}
\usepackage{figlatex}
\usepackage{xcolor}
\usepackage{setspace}
\usepackage{xspace}
\usepackage[super]{nth}
\usepackage{multicol}
\usepackage{wrapfig}
\usepackage{hyperref}
\usepackage[linesnumbered,lined,noend,ruled]{algorithm2e}
\usepackage[capitalise,english,nameinlink]{cleveref} 





\newcommand\rd    \propto


\newcommand\cfl      {\mathrel{\#}}
\newcommand\icfl[1][]{\mathrel{\#^i_{#1}}}

\newcommand\indep    {\mathrel{\meddiamond}}
\newcommand\depen    {\mathrel{\diamondtimes}}
\newcommand\eqdef    {\mathrel{:=}}

\newcommand\ispref   {\mathrel{\trianglelefteq}}

\newcommand\fire[1]  {\mathrel{\raisebox{-1.1pt}{$\xrightarrow{#1}$}}}

\newcommand\iscutoff[1] {\mathop{\mathsf{cutoff}} (#1)}
\newcommand\union[1]    {\mathop{\mathit{union}} (#1)}

\newcommand\bigo[1]     {\mathop{\mathcal{O}} (#1)}

\newcommand\conf[1]     {\mathop{\mathit{conf}} (#1)}

\newcommand\reach[1]    {\mathop{\mathit{reach}} (#1)}
\newcommand\runs[1]     {\mathop{\mathit{runs}} (#1)}

\newcommand\state[1]    {\mathop{\mathit{state}} (#1)}

\newcommand\inter[1]    {\mathop{\mathit{inter}} (#1)}

\newcommand\enabl[1]    {\mathop{\mathit{enabl}} (#1)}
\newcommand\en[1]       {\mathop{\mathit{en}} (#1)}
\newcommand\ex[1]       {\mathop{\mathit{ex}} (#1)}
\newcommand\cex[1]      {\mathop{\mathit{cex}} (#1)}
\newcommand\fcfl[1]     {\mathop{\mathit{\#}} (#1)}
\newcommand\ficfl[2][]  {\mathop{\mathit{\#^i_{#1} }} (#2)}

\newcommand\post[1]     {#1^\bullet}
\newcommand\pre[1]      {{}^\bullet#1}

\newcommand\causes[1]   {\left \lceil #1 \right \rceil}

\newcommand\les         {\mathcal{E}}
\newcommand\unf[1]      {\mathcal{U}_{#1}}

\newcommand\trace[1]    {\mathcal{T}_{#1}}
\newcommand\dgraph[1]   {\mathcal{D}_{#1}}
\newcommand\intsem[1]   {\mathcal{S}_{#1}}


\newcommand\poet     {\textsc{Poet}\@\xspace}
\newcommand\nidhugg  {\textsc{Nidhugg}\@\xspace}

\newcommand\uunf     {\mathcal{U}}
\newcommand\ppref    {\mathcal{P}}

\newcommand\N        {\mathbb{N}}

\newcommand\wlogg    {w.l.o.g.\xspace}

\newcommand\wrt      {w.r.t.\@\xspace}
\newcommand\cf       {cf.\@\xspace}
\newcommand\Wlog     {W.l.o.g.\xspace}
\newcommand\al       {al.\@\xspace}
\newcommand\eg       {e.g.\@\xspace}
\newcommand\etc      {etc.\@\xspace}
\newcommand\ie       {i.e.\@\xspace}

\newcommand\resp     {resp.\@\xspace}
\newcommand\nr       {nr.\@\xspace}
\newcommand\avg      {avg.\@\xspace}

\newcommand\cod[1]   {\texttt{#1}}
\newcommand\tup[1]   {\langle#1\rangle}
\newcommand\set[1]   {{\{ #1 \mathclose \}}}
\newcommand\eqtag[1] {\hfill{} \refstepcounter{equation} \label{#1} (\arabic{equation})}




\crefname{equation}{}{}
\crefname{proposition}{Prop.}{Props.}
\Crefname{proposition}{Proposition}{Propsitions}
\crefname{section}{\S}{\S\S}
\Crefname{section}{Section}{Sections}
\crefname{page}{p.}{pp.}
\Crefname{page}{Page}{Pages}
\crefname{chapter}{Ch.}{Ch.}
\Crefname{chapter}{Chapter}{Chapters}
\crefname{remark}{Rmk.}{Rmks.}
\Crefname{remark}{Remark}{Remarks}
\crefname{appendix}{App.}{App.}
\Crefname{appendix}{Appendix}{Appendix}
\crefname{algorithm}{Alg.}{Alg.}
\Crefname{algorithm}{Algorithm}{Algorithms}
\crefname{line}{line}{line}
\Crefname{line}{Line}{Line}

\crefname{theo}{Theorem}{Theorems}
\crefname{lem}{Lemma}{Lemmas}
\crefname{cor}{Corollary}{Corollaries}
\crefname{prop}{Prop.}{Props.}
\crefname{defn}{Def.}{Defs.}
\Crefname{defn}{Definition}{Definitions}
\crefname{exampl}{Example}{Examples}
\crefname{rmk}{Remark}{Remarks}


\hypersetup{
  bookmarksdepth=2,
  bookmarksnumbered=true,
  bookmarksopen=true,
  bookmarksopenlevel=2,
  colorlinks=true,
  linktocpage=true,
  breaklinks=true,
  pageanchor=true,
  allcolors=[rgb]{0.6,0.0,0.0},
  pdftitle={Unfolding-based Partial Order Reduction},
  pdfauthor={César Rodríguez, Marcelo Sousa, Subodh Sharma, Daniel Kroening}
}


\theoremstyle{plain}

\newtheorem{lem}[theorem]{Lemma}
\newtheorem{cor}[theorem]{Corollary}
\newtheorem{prop}[theorem]{Proposition}
\theoremstyle{definition}
\newtheorem{defn}[theorem]{Definition}
\newtheorem{exampl}[theorem]{Example}
\theoremstyle{remark}

\newcommand{\remove}[1]{}

\ifdefined\VersionWithComments
\newcommand{\cro}[1]{{\color{teal}\textbf{[César}: #1]}}
\newcommand{\ms}[1]{{\color{blue}\textbf{[Marcelo}: #1]}}
\newcommand{\svs}[1]{{\color{orange}\textbf{[Subodh}: #1]}}
\newcommand{\todo}[1]{{\color{red!90!black}\textbf{[TODO}: #1]}}
\else
\newcommand{\cro}[1]{}
\newcommand{\ms}[1]{}
\newcommand{\svs}[1]{}
\newcommand{\todo}[1]{}
\fi

\title{Unfolding-based Partial Order Reduction\footnote{This research was
supported by ERC project~280053 (CPROVER).\newline
This is the long version of a paper of the same title apparead at the
proceedings of CONCUR 2015.}}

\author[1]{César Rodríguez}
\author[2]{Marcelo Sousa}
\author[3]{Subodh Sharma}
\author[4]{\\Daniel Kroening}

\affil[1]{Université Paris 13, Sorbonne Paris Cité, LIPN, CNRS, France}
\affil[2,4]{Department of Computer Science, University of Oxford, UK}
\affil[3]{Indian Institute of Technology Delhi, India}

\authorrunning{C.\ Rodríguez et \al}
\Copyright{César Rodríguez, Marcelo Sousa, Subodh Sharma, Daniel Kroening}
\subjclass{D.2.4 Software/Program Verification}
\keywords{Partial-order reduction, unfoldings, concurrency, model checking}

\serieslogo{}
\volumeinfo
  {}
  {0}
  {}
  {1}
  {1}
  {1}
\EventShortName{}

\begin{document}

\maketitle

\begin{abstract}
Partial order reduction (POR) and net unfoldings are two alternative methods
to tackle state-space explosion caused by concurrency.  In this paper, we
propose the combination of both approaches in an effort to combine their
strengths.  We first define, for an abstract execution model, unfolding
semantics parameterized over an arbitrary independence relation.  Based on
it, our main contribution is a novel stateless POR algorithm that explores
at most one execution per Mazurkiewicz trace, and in general, can explore
exponentially fewer, thus achieving a form of \emph{super-optimality}. 
Furthermore, our unfolding-based POR copes with non-terminating executions
and incorporates \mbox{state-caching}.  Over benchmarks with busy-waits,
among others, our experiments show a dramatic reduction in the number of
executions when compared to a state-of-the-art DPOR.
\end{abstract}

\section{Introduction}
\label{s:intro}

Efficient exploration of the state space of a concurrent system is a
fundamental problem in automated verification.
Concurrent actions often interleave in intractably many ways,
quickly populating the state space with many equivalent but unequal states.
Existing approaches to address this
can essentially be classified as either
partial-order reduction techniques (PORs) or unfolding methods.

POR methods~\cite{Val91,God96,FG05,GFYS07,YWY06,YCGK08,AAJS14,AAJLS15}
conceptually exploit the fact that executing certain transitions can be
postponed owing to their result being independent of the execution sequence
taken in their stead.
They execute a provably-sufficient subset of transitions enabled at
every state, computed either statically~\cite{Val91,God96}
or dynamically~\cite{FG05,AAJS14}.
The latter methods, referred as dynamic PORs (DPORs),
are often \emph{stateless} (\ie, they only store one execution in memory)
and constitute the most promising algorithms of the family.
By contrast, unfolding approaches~\cite{Mcm93,ERV02,BHKTV14,KH14}
model execution by partial orders, bound together by a conflict relation.
They construct finite, complete prefixes by a saturation procedure,
and cope with non-terminating executions using cutoff events~\cite{ERV02,BHKTV14}.

While a POR can employ arbitrarily sophisticated decision procedures to
choose a sufficient subset of transitions to fire,
in most cases~\cite{God96,FG05,GFYS07,YWY06,YCGK08,AAJS14,AAJLS15}
the \emph{commutativity of transitions} is the
enabling mechanism underlying the chosen procedure.
Commutativity, or independence, is thus a mechanism and not necessarily an
irreplaceable component of a POR~\cite{Val91,HW11}.\footnote{Though it is a very
popular one, all PORs based on persistent sets~\cite{God96},
for instance, are based on commutativity.}
PORs that exploit such commutativity conceptually
establish an equivalence relation on the sequential executions of the system
and explore at least one representative of each class, thus discarding equivalent
executions.
In this work we restrict our attention to exclusively PORs that exploit commutativity.

Despite impressive advances in the field,
both unfoldings and PORs have shortcomings.
We now give six of them.
Current unfolding algorithms
(1)~need to solve an \mbox{NP-complete} problem when adding events to the
unfolding~\cite{Mcm93}, which seriously limits the performance of existing
unfolders as the structure grows.
They are also (2)~inherently \emph{stateful},
\ie, they cannot selectively discard visited events from memory,
quickly running out of it.
PORs, on the other hand, explore Mazurkiewicz traces~\cite{Maz87}, which
(3)~often outnumber the events in the corresponding unfolding by an
exponential factor (\eg, \cref{f:2read}~(d) gives an unfolding
with~$2n$ events and~$\bigo{2^n}$ traces).
Furthermore, DPORs often
(4)~explore the same states repeatedly~\cite{YCGK08},
and combining them with stateful search,
although achieved for non-optimal DPOR~\cite{YCGK08,YWY06},
is difficult because of the dynamic nature of DPOR~\cite{YWY06}.
More on this in~\cref{m:overview}.
The same holds when extending DPORs to (5) cope with non-terminating executions
(note that a solution to (4) does not necessarily solve (5)).
Lastly, (6) existing stateless PORs do not exploit additional available
memory (RAM) for any other purpose.

Either readily available solutions or promising directions to address these six
problems can be found in, respectively, the opposite approach.
PORs inexpensively add events to the current execution,
contrary to unfoldings~(1).
They easily discard events from memory when backtracking, which addresses~(2).
On the other hand, while PORs explore Mazurkiewicz traces
(\emph{maximal configurations}), unfoldings explore events
(\emph{local configurations}), thus addressing~(3).
Explorations of repeated states and pruning of non-terminating executions
is elegantly achieved in unfoldings by means of cutoff events.
This solves~(4) and~(5).

Some of these solutions indeed seem, at present, incompatible with each other.
We do not mean that the combination of POR and unfoldings immediately
addresses the above problems.
However, since both unfoldings and PORs share many fundamental
similarities, tackling these problems in a unified framework is likely to
shed light on them.

This paper lays out a DPOR algorithm on top of an unfolding structure.
Our main result is a novel stateless, optimal DPOR that
explores at most once every Mazurkiewicz trace,
and often many fewer owing to cutoff events (cutoffs stop traces that could
later branch into multiple traces).
It also copes with non-terminating systems
and exploits all available RAM with a \emph{cache memory} of
events, speeding up revisiting events.
This provides a solution to (4), (5), (6), and a partial solution to~(3).
Our algorithm can alternatively be viewed as a stateless unfolding exploration,
partially addressing~(1) and~(2).

Our result reveals DPORs as algorithms exploring an
object that has richer structure than a plain directed graph.
Specifically, unfoldings provide a solid notion of event \emph{across multiple
executions}, and a clear notion of conflict.
Our algorithm indirectly maps important POR notions to concepts in unfolding theory.

\begin{exampl}
\label{m:overview}
We illustrate problems (3), (4), and (5), and show how our DPOR deals with them.
The following code is the skeleton of a \mbox{producer-consumer} program.
Two concurrent producers write in, \resp, \verb!buf1! and \verb!buf2!.
The consumer access the buffers in sequence.
\noindent
\begin{center}
\small
\begin{minipage}[t]{0.33\textwidth}
\begin{verbatim}
while (1):
  lock(m1)
  if (buf1 < MAX): buf1++
  unlock(m1)
\end{verbatim}
\end{minipage}%
\vrule~%
\begin{minipage}[t]{0.33\textwidth}
\begin{verbatim}
while (1):
  lock(m2)
  if (buf2 < MAX): buf2++
  unlock(m2)
\end{verbatim}
\end{minipage}%
\vrule~%
\begin{minipage}[t]{0.31\textwidth}
\begin{verbatim}
while (1):
  lock(m1)
  if (buf1 > MIN): buf1--
  unlock(m1)
  // same for m2, buf2
\end{verbatim}
\end{minipage}%
\end{center}
Lock and unlock operations on both mutexes \verb!m1! and \verb!m2! create many
Mazurkiewicz traces. However, most of them have isomorphic \emph{suffixes}, \eg,
producing two items in \verb!buf1! and consuming one reaches the same state as
only producing one. After the common state, both traces explore identical
behaviours and only one needs to be explored. We use cutoff events,
inherited from unfolding theory~\cite{ERV02,BHKTV14}, to dynamically
stop the first trace and continue only with the second.
This addresses (4) and (5),  and partially deals with (3).
Observe that cutoff events are a form of semantic pruning,
in contrast to the syntactic pruning introduced by, \eg, bounding the depth of
loops, a common technique for coping with non-terminating executions in DPOR.
With cutoffs, the exploration can build unreachability \emph{proofs},
while depth bounding renders DPOR incomplete, \ie, it can only
\emph{find bugs}.
\end{exampl}

Our first step is to formulate PORs and unfoldings in the same framework. 
PORs are often presented for abstract execution models, while unfoldings
have mostly been considered for Petri nets, where the definition is
entangled with the syntax of the net.  We make a second contribution here. 
We define, for a general execution model, event structure
semantics~\cite{NPW81} parametric on a given independence relation.

\Cref{s:por} sets up basic notions and \cref{s:unfsem} presents our
parametric event-structure semantics.  In \cref{s:unfalgo} we introduce our
DPOR, \cref{s:impro} improves it with cutoff detection and discusses event
caching.  Experimental results are in~\cref{s:experiments} and related work
in~\cref{s:related}.  We conclude in~\cref{s:concl}.  All lemmas cited along
the paper and proofs of all stated results can be found in the appendixes.

%

\section{Execution Model and Partial Order Reductions}
\label{s:por}

We set up notation and recall general notions about PORs.
We consider an abstract model of (concurrent) computation.
A \emph{system} is a tuple
$M \eqdef \tup{\Sigma, T, \tilde s}$ formed by
a set~$\Sigma$ of \emph{global states},
a set~$T$ of \emph{transitions} and some
\emph{initial global state} $\tilde s \in \Sigma$.
Each transition $t \colon \Sigma \to \Sigma$ in~$T$ is
a \emph{partial} function accounting for how the occurrence of~$t$
transforms the state of~$M$.

A transition $t \in T$ is \emph{enabled} at a state~$s$ if~$t(s)$ is defined.
Such~$t$ can \emph{fire} at~$s$, producing a new state $s' \eqdef t (s)$.
We let $\enabl s$ denote the set of transitions enabled at~$s$.
The \emph{interleaving semantics} of $M$ is the
directed, edge-labelled graph
$\intsem M \eqdef \tup{\Sigma, {\to}, \tilde s}$ where
$\Sigma$ are the global states,
$\tilde s$ is the initial state and
${\to} \subseteq \Sigma \times T \times \Sigma$
contains a triple $\tup{s, t, s'}$, denoted by $s \fire t s'$, iff~$t$ is
enabled at~$s$ and~$s' = t(s)$.
Given two states $s, s' \in \Sigma$, and
$\sigma \eqdef t_1.t_2 \ldots t_n \in T^*$
($t_1$~concatenated with~$t_2$, \ldots until~$t_n$),
we denote by $s \fire{\sigma} s'$
the fact that there exist states $s_1, \ldots, s_{n-1} \in \Sigma$ such that
$s \fire{t_1} s_1$, \ldots, $s_{n-1} \fire{t_n} s'$.

A \emph{run} (or \emph{interleaving}, or \emph{execution})
of $M$ is any sequence $\sigma \in T^*$ such that
$\tilde s \fire \sigma s$ for some $s \in \Sigma$.
We denote by~$\state \sigma$ the state~$s$ that $\sigma$ reaches,
and by~$\runs M$ the set of runs of~$M$, also referred to as the
\emph{interleaving space}.
A state~$s \in \Sigma$ is \emph{reachable}
if $s = \state \sigma$ for some $\sigma \in \runs M$;
it is a \emph{deadlock} if $\enabl s = \emptyset$, and in that
case~$\sigma$ is called \emph{deadlocking}.
We let $\reach M$ denote the set of reachable states in~$M$.
For the rest of the paper,
we fix a system~$M \eqdef \tup{\Sigma, T, \tilde s}$ and assume that
$\reach M$ is finite.


The core idea behind PORs\footnote{To be completely correct we should say
``\textit{PORs that exploit the independence of transitions}''.}
is that certain transitions can be
seen as commutative operators, \ie, changing their order of occurrence
does not change the result.
Given two transitions $t, t' \in T$ and one state $s \in \Sigma$,
we say that $t, t'$ \emph{commute at} $s$ iff
\begin{itemize}
\item
  if $t \in \enabl s$ and $s \fire t s'$,
  then $t' \in \enabl s$ iff $t' \in \enabl{s'}$; and
\item
  if $t, t' \in \enabl s$,
  then there is a state $s'$ such that
  $s \fire{t.t'} s'$ and
  $s \fire{t'.t} s'$.
\end{itemize}
For instance, the lock operations on \verb!m1! and \verb!m2!
(\cref{m:overview}), commute on every state, as they update different variables.
Commutativity of transitions at states identifies an equivalence relation
on the set~$\runs M$.
Two runs $\sigma$ and $\sigma'$ of the same length
are \emph{equivalent}, written $\sigma \equiv \sigma'$,
if they are the same sequence modulo swapping commutative transitions.
Thus equivalent runs reach the same state.
POR methods explore a fragment of $\intsem M$ that contains at least one run
in the equivalence class of each run that reaches each deadlock state.
This is achieved by means of a so-called~\emph{selective search}~\cite{God96}.
Since employing commutativity can be expensive, PORs often use
\emph{independence relations},
\ie, sound under-approximations of the commutativity relation.
In this work, partially to simplify presentation, we use unconditional
independence.

Formally,
an \emph{unconditional independence relation} on $M$ is any symmetric and
irreflexive relation ${\indep} \subseteq T \times T$ such that
if $t \indep t'$, then $t$ and $t'$ commute at \emph{every} state
$s \in \reach M$.
If $t, t'$ are not independent according to $\indep$, then they are
\emph{dependent}, denoted by $t \depen t'$.

Unconditional independence identifies an equivalence
relation $\equiv_{\indep}$ on the set $\runs M$.
Formally, $\equiv_{\indep}$ is defined as the transitive closure of the
relation $\equiv^1_{\indep}$, which in turn is defined as
$\sigma \equiv^1_{\indep} \sigma'$ iff
there is $\sigma_1, \sigma_2 \in T^*$ such that
$\sigma = \sigma_1.t.t'.\sigma_2$, 
$\sigma' = \sigma_1.t'.t.\sigma_2$,
and $t \indep t'$.
From the properties of $\indep$,
one can immediately see that $\equiv_{\indep}$ refines $\equiv$,
\ie, if $\sigma \equiv_{\indep} \sigma'$, then $\sigma \equiv \sigma'$.

Given a run $\sigma \in \runs M$, the
equivalence class of $\equiv_{\indep}$ to which $\sigma$ belongs
is called the \emph{Mazurkiewicz~trace} of~$\sigma$~\cite{Maz87},
denoted by $\trace{\indep,\sigma}$.
Each trace $\trace{\indep,\sigma}$
can equivalently be seen as
a labelled partial order $\dgraph{\indep,\sigma}$, traditionally
called the \emph{dependence graph} (see~\cite{Maz87} for a formalization),
satisfying that
a run belongs to the trace iff it is a linearization
of~$\dgraph{\indep,\sigma}$.

Sleep sets~\cite{God96} are another method for state-space reduction.
Unlike selective exploration, they
prune successors by looking at the past of the exploration, not the future.

\section{Parametric Partial Order Semantics}
\label{s:unfsem}

An unfolding is, conceptually, a tree-like structure of partial orders.
In this section, given an independence relation $\indep$ (our parameter) and a
system~$M$,
we define an unfolding semantics $\unf{M,\indep}$ with the following property:
each constituent partial order of $\unf{M,\indep}$ will correspond to
one dependence graph $\dgraph{\indep,\sigma}$, for some $\sigma \in \runs M$.
For the rest of this paper,
let $\indep$ be an arbitrary unconditional independence relation on~$M$.
We use prime event structures~\cite{NPW81},
a non-sequential, event-based model of concurrency, to define the
unfolding $\unf{M,\indep}$ of~$M$.

\begin{defn}[LES]
Given a set $A$, an \emph{$A$-labelled event structure}
($A$-LES, or LES in short)
is a tuple $\les \eqdef \tup{E, <, \cfl, h}$
where $E$ is a set of \emph{events},
${<} \subseteq E \times E$ is a strict partial order on $E$,
called \emph{causality relation},
$h \colon E \to A$ labels every event with an element of $A$, and
${\cfl} \subseteq E \times E$ is the
symmetric, irreflexive \emph{conflict relation}, satisfying
\begin{itemize}
\item
  for all $e \in E$, $\set{e' \in E \colon e' < e}$ is finite, and
  \eqtag{e:les1}
\item
  for all $e,e',e'' \in E$, if $e \cfl e'$ and $e' < e''$, then $e \cfl e''$.
  \eqtag{e:les2}
\end{itemize}
\end{defn}

The \emph{causes} of an event $e \in E$ are the set
$\causes e \eqdef \set{e' \in E \colon e' < e}$
of events that need to happen before~$e$ for~$e$ to happen.
A \emph{configuration} of $\mathcal{E}$ is any finite set~$C \subseteq E$
satisfying:
\begin{itemize}
\item
  (causally closed)
  for all $e \in C$ we have $\causes e \subseteq C$;
  \eqtag{c:con1}
\item
  (conflict free)
  for all $e, e' \in C$, it holds that $\lnot e \cfl e'$.
  \eqtag{c:con2}
\end{itemize}
Intuitively, configurations represent partially-ordered executions.
In particular, the \emph{local configuration} of $e$ is the
$\subseteq$-minimal configuration that contains $e$, \ie
$[e] \eqdef \causes e \cup \set e$.
We denote by $\conf \les$ the set of configurations of $\les$.
Two events $e, e'$ are in \emph{immediate conflict}, $e \icfl{} e'$,
iff $e \cfl e'$ and both $\causes e \cup [e']$ and $[e] \cup \causes{e'}$
are configurations.
%
Lastly,
given two LESs
$\les \eqdef \tup{E, <, \cfl, h}$ and
$\les' \eqdef \tup{E', <', \cfl', h'}$, we say that
$\les$ is a \emph{prefix} of~$\les'$, written $\les \ispref \les'$,
when
$E \subseteq E'$,
$<$ and $\cfl$ are the projections of $<'$ and $\cfl'$ to~$E$,
and $E \supseteq \set{e' \in E' \colon e' < e \land e \in E}$.

Our semantics will unroll the system~$M$ into a LES $\unf{M,\indep}$
whose events are labelled by transitions of~$M$.
Each configuration of $\unf{M,\indep}$ will correspond to the
dependence graph $\dgraph{\indep,\sigma}$ of some $\sigma \in \runs M$.
For a LES $\tup{E, <, \cfl, h}$, we define the
\emph{interleavings} of $C$ as
$\inter C \eqdef
\set{h(e_1), \ldots, h(e_n) \colon e_i, e_j \in C \land e_i < e_j \implies i < j}$.
Although for arbitrary LES
$\inter C$ may contain sequences not in $\runs M$,
the definition of $\unf{M,\indep}$ will ensure
that $\inter C \subseteq \runs M$.
Additionally, since all sequences in $\inter C$ belong to the same trace,
all of them reach the same state.
Abusing the notation, we define $\state C \eqdef \state \sigma$
if $\sigma \in \inter C$.
The definition is neither well-given nor unique for arbitrary
LES, but will be so for the unfolding.

We now define $\unf{M,\indep}$.
Each event will be inductively identified by a canonical name
of the form $e \eqdef \tup{t,H}$, where $t \in T$ is a transition of~$M$
and $H$ a configuration of $\unf{M,\indep}$.
Intuitively, $e$ represents the occurrence of $t$
after the \emph{history} (or the causes) $H \eqdef \causes e$.
%
The definition will be inductive.
The base case inserts into the unfolding a special \emph{bottom event}
$\bot$ on which every event causally depends.
The inductive case iteratively extends the unfolding with one event.
We define
the set $\mathcal{H}_{\les,\indep,t}$ of candidate \emph{histories} for a
transition~$t$ in an LES $\les$ as the set which
contains exactly all configurations~$H$ of~$\les$ such that
\begin{itemize}
\item
  transition $t$ is enabled at $\state H$, and 
\item
  either $H = \set \bot$ or all $<$-maximal events $e$ in~$H$
  satisfy that $h(e) \depen t$,
\end{itemize}
where~$h$ is the labelling function in~$\les$.
Once an event $e$ has been inserted into the unfolding,
its associated transition $h(e)$ may be dependent with $h(e')$ for
some~$e'$ already present and outside the history of~$e$.
Since the order of occurrence of $e$ and $e'$ matters, we need to prevent
their occurrence within the same configuration,
as configurations represent equivalent executions.
So we introduce a conflict between~$e$ and~$e'$.
The set $\mathcal{K}_{\les,\indep,e}$
of \emph{events conflicting} with $e \eqdef \tup{t,H}$
thus contains any event $e'$ in $\les$ with
$e' \notin [e]$ and $e \notin [e']$ and $t \depen h(e')$.

Following common practice~\cite{EH08},
the definition of $\unf{M,\indep}$ proceeds in two steps.
We first define (\cref{d:finite.prefix}) the collection of all prefixes of the unfolding.
Then we show that there exists only one \mbox{$\ispref$-maximal} element in
the collection, and define it to be \emph{the} unfolding (\cref{d:unf}).

\begin{defn}[Finite unfolding prefixes]
\label{d:finite.prefix}
The set of \emph{finite unfolding prefixes} of $M$ under the independence
relation $\indep$
is the smallest set of LESs that satisfies the following conditions:
\begin{enumerate}
\item
  The LES having exactly one event $\bot$, empty causality and conflict
  relations, and $h (\bot) \eqdef \varepsilon$
  is an unfolding prefix.
\item
  Let~$\les$ be an unfolding prefix containing
  a history~$H \in \mathcal{H}_{\les,\indep,t}$ for some transition~$t \in T$.
  Then, the LES $\tup{E, <, {\cfl}, h}$
  resulting from extending~$\les$ with a new event~$e \eqdef \tup{t, H}$
  and satisfying the
  following constraints is also an unfolding prefix of $M$:
  \begin{itemize}
  \item for all $e' \in H$, we have $e' < e$;
  \item
    for all $e' \in \mathcal{K}_{\les,\indep,e}$, we have $e \cfl e'$; and
    $h (e) \eqdef t$.
  \end{itemize}
\end{enumerate}
\end{defn}

Intuitively, each unfolding prefix contains the dependence graph
(configuration) of one or more executions of~$M$ (of finite length).
The unfolding starts from $\bot$, the ``root'' of the tree,
and then iteratively adds events enabled by some configuration
until saturation, \ie, when no more events can be added.
Observe that the number of unfolding prefixes as per \cref{d:finite.prefix}
will be finite iff all runs of~$M$ terminate.
Due to lack of space, we give the definition of \emph{infinite} unfolding
prefix in \cref{x:unfsem},
as the main ideas of this section are well conveyed using only finite
prefixes.
In the sequel, by \emph{unfolding prefix} we mean a finite or infinite one.

Our first task is checking that each unfolding prefix is indeed a LES
(\cref{l:pref.les}).
Next one shows that the configurations of every unfolding prefix
correspond the Mazurkiewicz traces of the system, \ie,
for any configuration $C$,
$\inter C = \trace{\indep,\sigma}$ for some $\sigma \in \runs M$
(\cref{l:soundun}).
This implies that the definition of $\inter C$ and $\state C$ is well-given
when $C$ belongs to an unfolding prefix.
The second task is defining \emph{the} unfolding $\unf{M,\indep}$ of~$M$.
Here, we prove that the set of unfolding prefixes equipped with
relation~$\ispref$ forms a complete \mbox{join-semilattice}
(\cref{l:complete.lattice}).
This implies the existence of a unique
\mbox{$\ispref$-maximal} element:

\begin{defn}[Unfolding]
\label{d:unf}
The \emph{unfolding} $\unf{M,\indep}$ of~$M$ under the independence relation
$\indep$ is the \emph{unique} \mbox{$\ispref$-maximal}
element in the set of unfolding prefixes of~$M$ under~$\indep$.
\end{defn}

Finally we verify that the definition is well given
and that the unfolding is \emph{complete},
\ie, every run of the system is
represented by a unique configuration of the unfolding.

\begin{restatable}{theo}{completeun}
\label{r:unf.complete}
The unfolding $\unf{M,\indep}$ exists and is unique.
Furthermore, for any non-empty run $\sigma$ of~$M$,
there exists a unique configuration~$C$ of
$\unf{M,\indep}$ such that $\sigma \in \inter C$.
\end{restatable}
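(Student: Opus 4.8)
The plan is to handle the two assertions separately. The existence and uniqueness of $\unf{M,\indep}$ come essentially for free: by \cref{l:complete.lattice} the unfolding prefixes ordered by $\ispref$ form a complete join-semilattice, which therefore has a greatest element; this is the unique $\ispref$-maximal prefix, and \cref{d:unf} names it $\unf{M,\indep}$. So the real work is the ``furthermore'' part, which I would prove for a fixed non-empty run $\sigma = t_1 \ldots t_n \in \runs M$ by establishing, separately and both by induction on $n$, the existence and the uniqueness of a configuration $C$ with $\sigma \in \inter C$. Throughout I would use that a configuration of any prefix of $\unf{M,\indep}$ is in particular a configuration of $\unf{M,\indep}$ (immediate from the definition of $\ispref$), and that \cref{l:soundun} identifies the interleavings of a configuration with a genuine Mazurkiewicz trace.

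For existence I would build a chain $\set\bot = C_0 \subseteq C_1 \subseteq \cdots \subseteq C_n$ in which each $C_i$ arises from $C_{i-1}$ by adding one event $e_i$ with $h(e_i) = t_i$, and $t_1 \cdots t_i \in \inter{C_i}$. Given $C_{i-1}$, I would set $D_i := \set{e_j : j < i \land t_j \depen t_i}$ and $H_i := \set\bot \cup \bigcup_{e \in D_i} [e]$. It is routine that $H_i$ is a configuration (a union of local configurations inside the configuration $C_{i-1}$) and that every $<$-maximal event of $H_i$ lies in $D_i$, hence carries a label dependent with $t_i$; so $H_i$ meets the ``maximal events'' clause for being a candidate history of $t_i$. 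The substantive step is to check that $t_i$ is enabled at $\state{H_i}$: since $H_i$ is down-closed in $C_{i-1}$, concatenating a linearization of $H_i$ with a linearization of $C_{i-1} \setminus H_i$ yields a linearization of $C_{i-1}$, which by \cref{l:soundun} lies in $\trace{\indep, t_1 \cdots t_{i-1}}$ and hence reaches $\state{C_{i-1}}$; this linearization is of the form $\rho.u$ with $\rho \in \inter{H_i}$ and every transition of $u$ independent of $t_i$ (by construction of $D_i$), so, since $t_i$ is enabled at $\state{C_{i-1}} = \state{\rho.u}$ and independent transitions commute at every reachable state, commuting $t_i$ leftward over $u$ shows it enabled at $\state{\rho} = \state{H_i}$. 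Then $H_i$ is a genuine candidate history, so $e_i := \tup{t_i, H_i}$ (with $\causes{e_i} = H_i$) is an event of a finite prefix by \cref{d:finite.prefix}, hence of $\unf{M,\indep}$; I would finish by checking that $C_i := C_{i-1} \cup \set{e_i}$ is a configuration --- causal closure is trivial, conflict-freeness holds because any $e' \in C_{i-1}$ with $h(e') \depen t_i$ belongs to $D_i \subseteq H_i = \causes{e_i}$ and so is not among the events conflicting with $e_i$, and $e_i \notin C_{i-1}$ because $t_i \depen t_i$ forces any earlier event labelled $t_i$ into $H_i$ --- and that $t_1 \cdots t_i \in \inter{C_i}$ since $\causes{e_i} \subseteq C_{i-1}$. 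Taking $C := C_n$ then works.

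For uniqueness I would take an arbitrary configuration $C$ of $\unf{M,\indep}$ with $\sigma \in \inter C$, write $e^{(i)}$ for the event at position $i$ in that linearization (so $h(e^{(i)}) = t_i$), observe that the down-sets $\set{\bot, e^{(1)}, \ldots, e^{(i)}}$ are configurations with $t_1 \cdots t_i$ as an interleaving, and prove by induction on $i$ that $e^{(i)} = e_i$ for the $e_i = \tup{t_i, H_i}$ of the previous paragraph; since events are identified by their canonical names this forces $C = C_n$. The only thing to show is $\causes{e^{(i)}} = H_i$. The inclusion ``$\subseteq$'' follows because every $<$-maximal event of $\causes{e^{(i)}}$ has a label dependent with $t_i$ (candidate-history clause) and occupies a position below $i$, hence equals some $e_j \in D_i$ by the inductive hypothesis, and the remaining events of $\causes{e^{(i)}}$ lie below such events, so they lie in $H_i$ as well. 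For ``$\supseteq$'', given $j < i$ with $t_j \depen t_i$, the events $e^{(j)}$ and $e^{(i)}$ both lie in $C$ so are not in conflict, and $e^{(i)} \not< e^{(j)}$ by positions; if moreover $e^{(j)} \not< e^{(i)}$ they would be concurrent events of $\unf{M,\indep}$ with dependent labels, contradicting that \cref{d:finite.prefix} puts a conflict between any two events with dependent labels that lie outside each other's local configuration; hence $e^{(j)} < e^{(i)}$, so $[e^{(j)}] \subseteq \causes{e^{(i)}}$, which gives $H_i \subseteq \causes{e^{(i)}}$.

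I expect the main obstacle to be the enabledness claim in the existence step: it is the one place where genuine Mazurkiewicz-trace reasoning is needed --- rearranging a run along the underlying partial order and then commuting $t_i$ past a block of independent transitions --- and it leans crucially on \cref{l:soundun}. A secondary point that I would isolate as a small lemma about \cref{d:finite.prefix}, used in the ``$\supseteq$'' inclusion above, is that two concurrent events of $\unf{M,\indep}$ necessarily carry independent labels, regardless of the order in which they were inserted; this follows by inspecting the minimal finite prefix containing both of them.
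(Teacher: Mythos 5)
Your proof is correct and follows the same overall route as the paper: existence and uniqueness of $\unf{M,\indep}$ via \cref{l:complete.lattice}, and then induction on the length of the run for the ``furthermore'' part. The differences are in how the inductive step is carried out, and they are worth noting because on two points you are more complete than the paper's own argument. First, you build the history of the new event bottom-up, as $H_i = \set\bot \cup \bigcup_{e \in D_i}[e]$ over the dependent predecessors, whereas the paper builds it top-down by peeling $<$-maximal events of $C'$ whose labels are independent of $t_{k+1}$ until all remaining maximal events are dependent with it; the two constructions yield the same history, but yours makes the ``all maximal events are dependent'' clause immediate. Second, and more importantly, you explicitly verify that $t_i$ is enabled at $\state{H_i}$ by linearizing $C_{i-1}$ as $\rho.u$ with $\rho \in \inter{H_i}$ and commuting $t_i$ leftward over the independent block $u$ --- the paper asserts that a valid $H$ is always reached but never checks the enabledness clause of the candidate-history definition --- and you prove uniqueness in the inductive step by showing $\causes{e^{(i)}} = H_i$ position by position, whereas the paper only argues uniqueness in the base case and leaves the step case to the reader. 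The auxiliary fact you isolate (concurrent events of the unfolding carry independent labels) is indeed what is needed for the ``$\supseteq$'' inclusion, and it follows from $\mathcal{K}_{\les,\indep,e}$ in \cref{d:finite.prefix} together with \cref{l:f.finite} (or directly from \cref{e:union} in the infinite case). In short: same decomposition, but your write-up closes the two gaps the paper glosses over.
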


\SetKwFunction{explore}{Explore}

\newcommand\node[3]{%
{\scriptsize%
\if\relax\detokenize{#1}\relax
$\emptyset$\else #1\fi
\hspace{1pt}|\hspace{1pt}
\if\relax\detokenize{#2}\relax
$\emptyset$\else #2\fi
\hspace{1pt}|\hspace{1pt}
\if\relax\detokenize{#3}\relax
$\emptyset$\else #3\fi}%
}

\begin{figure}[t]
\centering
\includegraphics[scale=1.00]{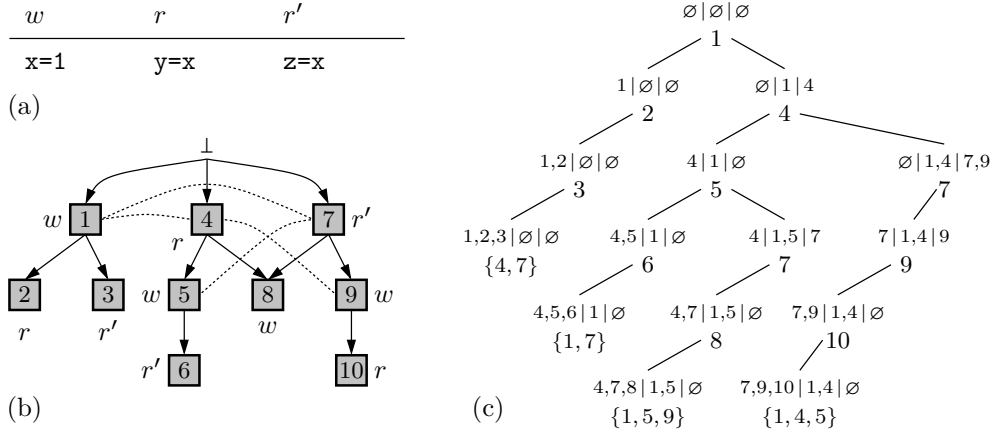}
\caption{Running example.
(a) A concurrent program; (b) its unfolding semantics.
(c) The exploration performed by \cref{a:a1},
where each node \node{C}{D}{A} represents
one call to the function \explore{$C,D,A$}.
The set $X$ underneath each leaf node
is such that the value of variable $U$ in \cref{a:a1} at the leaf is
$U = C \cup D \cup X$.
At \node{}{}{}, the alternative taken is $\set{4}$, and
at \node{4}{1}{} it is $\set{7}$.}
\label{f:running}
\end{figure}

\begin{exampl}[Programs]
\Cref{f:running}~(a) shows a concurrent program,
where process $w$ writes global variable and processes~$r$ and~$r'$ read it.
We can associate various semantics to it.
Under an empty independence relation, the unfolding would be the
computation tree, where executions would be totally ordered.
Considering (the unique transition of) $r$ and $r'$ independent,
and $w$ dependent on them, we get the unfolding
shown in~\cref{f:running}~(b).

Events are numbered from~1 to~10, and labelled with a transition.
Arrows represent causality between events and dotted lines immediate conflict.
The Mazurkiewicz trace of each deadlocking
execution is represented by a unique \mbox{$\subseteq$-maximal}
configuration, \eg, the run $w.r.r'$ yields
configuration $\set{1,2,3}$,
where the two possible interleavings reach the same state.
The canonic name of, \eg, event~1 is $\tup{w, \set \bot}$.
For event~2 it is
$\tup{r, \set{\bot, 1}}$.
Let $\ppref$ be the unfolding prefix that contains events $\set{\bot,1,2}$.
\Cref{d:finite.prefix} can extend it with three possible events: 3,~4, and~7.
Consider transition~$r'$. Three configurations of $\ppref$ enable $r'$:
$\set{\bot}, \set{\bot,1}$ and~$\set{\bot,1,2}$.
But since $\lnot (h(2) \depen r')$, 
only the first two will be in $\mathcal{H}_{\ppref,\indep,r'}$,
resulting in events
$3 \eqdef \tup{r', \set{\bot,1}}$ and
$7 \eqdef \tup{r', \set \bot}$.
Also, $\mathcal{K}_{\ppref,\indep,7}$ is $\set{1}$,
as $w \depen r'$.
The~4 maximal configurations are
$\set{1,2,3}$,
$\set{4,5,6}$,
$\set{4,7,8}$ and
$\set{7,9,10}$,
\resp reaching the states
$\tup{x,y,z}$ =
$\tup{1,1,1}$,
$\tup{1,0,1}$,
$\tup{1,0,0}$ and
$\tup{1,1,0}$,
assuming that variables start at~0.
\end{exampl}

\begin{exampl}[Comparison to Petri Net Unfoldings]
In contrast to our parametric semantics,
classical unfoldings of Petri nets~\cite{ERV02} 
use a fixed independence relation, specifically
the complement of the following one (valid only for safe nets):
given two transitions $t$ and $t'$,
\[
t \mathrel{\depen_n} t'
\text{ iff }
(\post t \cap \pre{t'} \ne \emptyset) \text{~ or ~}
(\post{t'} \cap \pre{t} \ne \emptyset) \text{~ or ~}
(\pre{t'} \cap \pre{t} \ne \emptyset),
\]
where $\pre t$ and $\post t$ are respectively the \emph{preset} and \emph{postset}
of~$t$.
Classic Petri net unfoldings (of safe nets)
are therefore a specific instantiation of our semantics.
A well known limitation of classic unfoldings
are transitions that ``\textit{read}'' places, \eg,~$t_1$ and~$t_2$
in~\cref{f:2read}~(a).
Since $t_1 \depen_n t_2$, the
classic unfolding, \cref{f:2read}~(b), sequentializes all their occurrences.
A solution to this is the so-called \emph{place replication (PR)
unfolding}~\cite{MR95}, or alternatively \emph{contextual unfoldings} (which
anyway internally are of asymptotically the same size as the PR-unfolding).

This problem vanishes with our parametric unfolding.
It suffices to use a dependency relation
${\depen'_n} \subset {\depen_n}$ that makes transitions that
``\textit{read}'' common places independent.
The result is that our unfolding, \cref{f:2read}~(c),
can be of the same size as the PR-unfolding, \ie,
exponentially more compact than the classic unfolding.
For instance, when \cref{f:2read}~(a) is generalized to~$n$ \emph{reading}
transitions, the classic unfolding would
have~$\bigo{n!}$ copies of~$t_3$, while ours would have~$\bigo{2^n}$.
The point here is that our semantics naturally accommodate a more suitable notion
of independence without resorting to specific ad-hoc tricks.

Furthermore, although this work is restricted to \emph{unconditional} independence,
we conjecture that an adequately
restricted \emph{conditional} dependence would suffice, \eg, the one of~\cite{KP92}.
Gains achieved in such setting would be difficult with classic
unfoldings. 
\end{exampl}

\begin{figure}[t]
\begin{center}
\includegraphics[scale=1.00]{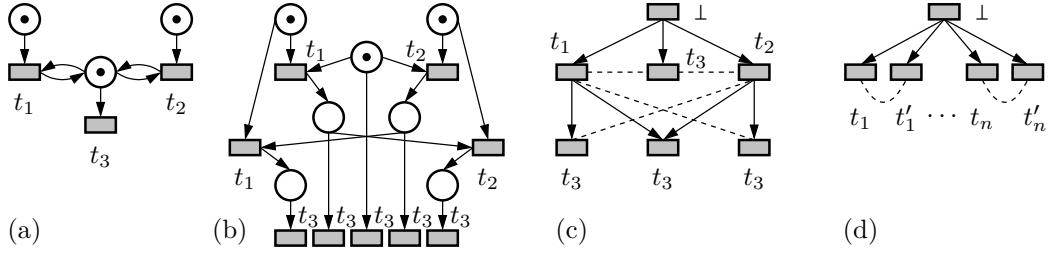}
\end{center}
\caption{(a) A Petri net; (b) its classic unfolding; (c) our parametric semantics.}
\label{f:2read}
\end{figure}

\section{Stateless Unfolding Exploration Algorithm}
\label{s:unfalgo}

We present a 
DPOR algorithm to explore an arbitrary event
structure (\eg, the one of~\cref{s:unfsem}) instead of sequential executions.
Our 
algorithm explores one configuration at a time and organizes the exploration
into a binary tree. \Cref{f:running}~(c) shows an example.
The algorithm is 
optimal~\cite{AAJS14}, in the sense that
no configuration is ever visited twice in the tree.

For the rest of the paper, let
$\unf{\indep,M} \eqdef \tup{E, <, {\cfl}, h}$ be the unfolding of~$M$
under~$\indep$, which we abbreviate as~$\uunf$.
For this section we assume that $\uunf$ is finite, \ie,
that all computations of~$M$ terminate.
This is only to ease presentation, we relax this assumption in~\cref{s:cutoffs}.

We give some new definitions.
Let~$C$ be a configuration of~$\uunf$.
The \emph{extensions} of $C$, written $\ex C$, are all those
events outside~$C$ whose causes are included in~$C$. Formally,
$
\ex C \eqdef
\set{e \in E \colon e \notin C \land \causes e \subseteq C}
$.
We let $\en C$ denote the set of events \emph{enabled} by~$C$, \ie,
those corresponding to the transitions enabled at~$\state C$,
formally defined as
$
\en C \eqdef
\set{e \in \ex C \colon C \cup \set e \in \conf{\uunf}}
$.
All those events in $\ex C$ which are
not in~$\en C$ are the
\emph{conflicting extensions},
$
\cex C \eqdef
\set{e \in \ex C \colon \exists e' \in C,\, e \icfl{} e'}
$.
Clearly, sets~$\en C$ and~$\cex C$ partition the set~$\ex C$.
Lastly, we define
$\ficfl e \eqdef \set{e' \in E \colon e \icfl{} e'}$,
and
$\ficfl[U] e \eqdef \ficfl e \cap U$.
The difference between both is that
$\ficfl e$ contains events from \emph{anywhere} in the unfolding structure,
while $\ficfl[U] e$ can only \emph{see} events in~$U$.


\SetKwProg{Proc}{Procedure}{}{}
\SetKwIF{If}{ElseIf}{Else}{if}{}{else if}{else}{end}
\SetKwFor{ForEach}{foreach}{}{end}
\SetKwFor{For}{for}{}{end}
\SetInd{0.1em}{0.9em}
\SetNlSty{}{\color{gray}}{}
\SetVlineSkip{0em}

\SetKwFunction{extend}{Extend}
\SetKwFunction{extendcut}{Extend'}
\SetKwFunction{remove}{Remove}
\SetKwFunction{ena}{en}
\SetKwFunction{alternatives}{Alt}
\SetKwFunction{k}{k} 

The algorithm is given in \cref{a:a1}.
\explore{$C,D,A$}, the main procedure, is given
the configuration that is to be explored as the parameter $C$.
The parameter~$D$ (for \emph{disabled}) is the set of set of events
that have already been explored and prevents that \explore{} repeats work.
It can be seen as a \textit{sleep set}~\cite{God96}.
Set~$A$ (for \emph{add}) is occasionally used to guide the
direction of the exploration.

Additionally,
a global set~$U$ stores all events presently known
to the algorithm. Whenever some event can safely be discarded from memory,
\remove will
move it from~$U$ to~$G$ (for \emph{garbage}). Once in~$G$, it can be
discarded at any time, or be preserved in $G$ in order to save
work when it is re-inserted in~$U$. Set~$G$ is thus
our \emph{cache memory} of events.

\begin{algorithm}[t]
\DontPrintSemicolon
\setstretch{1.1}

Initially, set $U \eqdef \set \bot$, set $G \eqdef \emptyset$,
and call \explore{$\set \bot$, $\emptyset$, $\emptyset$}. \;

\BlankLine

\begin{multicols}{2}
\Proc{\explore{$C, D, A$}}{
\extend{$C$} \;
\lIf{$\ena{$C$} = \emptyset$}{
   \KwRet
}
\label{l:a1ret}

\If{$A = \emptyset$}{
   Choose $e$ from \ena{$C$} \;
}
\Else{
   Choose $e$ from $A \cap \ena{$C$}$ \;
   \label{l:a1choose}
}
\explore{$C \cup \set e, D, A \setminus \set e$} \;
\label{l:a1l}
\If{$\exists J \in \alternatives{$C, D \cup \set e$}$}{
   \explore{$C, D \cup \set e, J \setminus C$}
   \label{l:a1r}
}
\remove{$e, C, D$}
}

\Proc{\extend{$C$}}{
Add $\ex C$ to $U$
}

\BlankLine
\BlankLine
\Proc{\remove{$e, C, D$}}{
Move $\set e \setminus Q_{C,D,U}$ from $U$ to~$G$ \;
\ForEach{$\hat e \in \ficfl[U] e$}{
   Move $[\hat e] \setminus Q_{C,D,U}$ from~$U$ to~$G$ \;
}
}
\end{multicols}
\medskip

\caption{An unfolding-based POR exploration algorithm.}
\label{a:a1}
\end{algorithm}

The key intuition in \cref{a:a1} is as follows.
A call to \explore{$C,D,A$} visits all maximal configurations of~$\uunf$
which contain~$C$ and do not contain~$D$; and the first one explored will
contain~$C \cup A$.
\Cref{f:running}~(c) gives one execution,
tree nodes are of the form \node{C}{D}{A}.

The algorithm first updates~$U$ with all extensions of~$C$ (procedure \extend).
If~$C$ is a maximal configuration, then there is nothing to do, it backtracks.
If not, it chooses an event in~$U$ enabled at~$C$,
using the function~$\ena{$C$} \eqdef \en C \cap U$.
If~$A$ is empty, any enabled event can be taken.
If not, $A$ needs to be explored and $e$ must come from the intersection.
Next it makes a recursive call (left subtree), where it explores
\emph{all} configurations containing all events in $C \cup \set e$ and
no event from~$D$.
Since \explore{$C,D,A$} had to visit all maximal configurations containing $C$,
it remains to visit those containing~$C$ but not~$e$,
but only if there exists at least one!
Thus, we determine whether~$\uunf$ has a maximal configuration that
contains~$C$,
does not contain~$D$ and
does not contain~$e$.
Function \alternatives will return a set of events that witness the existence of
such configuration (iff one exists).
If one exists, we make a second recursive call (right subtree).
Formally, we call such witness an \emph{alternative}:

\begin{defn}[Alternatives]
\label{d:alt}
Given a set of events~$U \subseteq E$,
a configuration $C \subseteq U$,
and a set of events $D \subseteq U$,
an \emph{alternative} to $D$ after $C$ is any configuration
$J \subseteq U$ satisfying that
\begin{itemize}
\item
  $C \cup J$ is a configuration
  \eqtag{e:alt2}
\item
  for all events $e \in D$, there is
  some $e' \in C \cup J$ such that $e' \in \ficfl[U] e$.
  \eqtag{e:alt1}
\end{itemize}
\end{defn}

Function \alternatives{$X, Y$} returns all alternatives (in~$U$)
to~$Y$ after~$X$.
Notice that it is called as \alternatives{$C, D \cup \set e$}
from~\cref{a:a1}. Any returned alternative~$J$ witnesses
the existence of a maximal configuration $C'$
(constructed by arbitrarily extending $C \cup J$)
where $C' \cap (D \cup \set e) = \emptyset$.

Although \alternatives reasons about maximal configurations of $\uunf$, thus
potentially about events which have not yet been seen,
it can only look at events in~$U$.
So the set~$U$ needs to be large enough to contain enough \textit{conflicting
events} to satisfy~\cref{e:alt1}.
Perhaps surprisingly, it suffices to store only
events seen (during the past exploration) in immediate conflict with~$C$ and $D$.
Consequently, when the algorithm calls \remove, to clean from~$U$ events that are no longer
necessary (\ie, necessary to find alternatives in the future),
it needs to preserve at least those conflicting events.
Specifically, \remove will preserve in~$U$ the following events:
\[
Q_{C,D,U} \eqdef
C \cup D \cup \bigcup_{e \in C \cup D, e' \in \ficfl[U] e} [e'].
\]
That is, events in~$C$, in~$D$ and events in conflict with those.
An alternative definition that makes $Q_{C,D,U}$ smaller
would mean that \remove discards more events, which could
prevent a future call to \alternatives from discovering a maximal configuration
that needs to be explored.

We focus now on the correctness of \cref{a:a1}.
Every call to \explore{$C,D,A$}
explores a tree,
where the recursive calls at lines \cref{l:a1l} and \cref{l:a1r} respectively
explore the left and right subtrees
(proof in \cref{c:btree}).
Tree nodes are tuples~$\tup{C,D,A}$ corresponding to the arguments of calls
to~\explore, \cf~\cref{f:running}.
We refer to this object as the \emph{call tree}.
For every node, both~$C$ and~$C \cup A$ are configurations, and~$D \subseteq \ex C$
(\cref{l:general}).
As the algorithm goes down in the tree
it monotonically increases the size of either~$C$ or~$D$.
Since~$\uunf$ is finite, this implies that the algorithm terminates:

\begin{restatable}[Termination]{theo}{thmtermination}
   \label{t:a1.termination}
   Regardless of its input, \cref{a:a1} always stops.
\end{restatable}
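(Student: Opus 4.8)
The plan is to exhibit a well-founded measure on the call tree that strictly decreases along every parent-to-child edge, so that the tree has finite depth; combined with the fact that every node has at most two children (established in \cref{c:btree}), König's lemma gives finiteness of the whole call tree, and termination follows. The candidate measure is $\mu(\tup{C,D,A}) \eqdef (|\ex C \setminus (C \cup D)|, \text{something})$ — more precisely, I expect to use the pair $(|E \setminus C|, |\ex C \setminus D|)$ ordered lexicographically, or a single integer such as $2|E| - 2|C| - |D|$, exploiting that $C$ and $D$ are always subsets of the (finite) event set $E$ and that $D \subseteq \ex C$ by \cref{l:general}.

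**First I would** recall from \cref{l:general} that for every node $\tup{C,D,A}$ of the call tree we have that $C$ is a configuration, $C \cup A$ is a configuration, and $D \subseteq \ex C$; in particular $C$ and $D$ are disjoint subsets of the finite set $E$, so any quantity like $|C| + |D|$ is bounded. **Then I would** analyze the two recursive calls. At \cref{l:a1l} the child is $\explore{C \cup \set e, D, A \setminus \set e}$ with $e \in \ena C \subseteq \en C$, so $e \notin C$ and $C \cup \set e$ is a strictly larger configuration: $|C|$ strictly increases (and $D$ does not shrink, but note $D \subseteq \ex C \subseteq \ex{(C \cup \set e)}$ must be re-checked — actually $D$ stays the same set). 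At \cref{l:a1r} the child is $\explore{C, D \cup \set e, J \setminus C}$: here $C$ is unchanged but $e \notin D$ (since $e \in \ex C$ and $D \subseteq \ex C$ with... — the point is $e$ was chosen from $\ena C$ which is disjoint from the sleep set $D$, so $e \notin D$), hence $D \cup \set e$ is strictly larger than $D$. So along the left edge $|C|$ strictly increases, and along the right edge $|C|$ is constant while $|D|$ strictly increases; the lexicographic pair $(|C|, |D|)$ thus strictly increases, and it is bounded above by $(|E|, |E|)$. Every root-to-leaf path therefore has length at most $2|E|$.

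**The main obstacle** I anticipate is making airtight the claim that the chosen event $e$ is genuinely "new" on each branch — i.e. that $e \notin C$ and $e \notin D$ — and that the invariants of \cref{l:general} (notably $D \subseteq \ex C$) are preserved by the recursive calls so that the measure is well-defined at every node. The $e \notin C$ part is immediate since $e \in \en C \subseteq \ex C$ and $\ex C \cap C = \emptyset$ by definition of $\ex C$. For $e \notin D$: the algorithm picks $e$ from $\ena C = \en C \cap U$ (possibly intersected with $A$), and one must argue $D$ is disjoint from $\en C$ — this should follow because $D$ plays the role of a sleep set and the algorithm only ever adds to $D$ events that were just explored, but I would double-check this against the precise statement of \cref{l:general} or prove the small auxiliary fact $D \cap \en C = \emptyset$ separately. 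Once the invariants and the strict monotonicity of $(|C|,|D|)$ are in hand, finiteness of $E$ (guaranteed by the standing assumption that $\uunf$ is finite, which in turn follows from $\reach M$ being finite and all computations terminating) bounds the tree depth, the binary branching bounds the breadth, and \cref{a:a1} halts.

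\begin{rmk}
If one prefers to avoid König's lemma, the single scalar $\Phi(\tup{C,D,A}) \eqdef (|E| - |C|) + (|\ex C| - |D|)$ works directly: it is a natural number (using $D \subseteq \ex C$ and $C \cap \ex C = \emptyset$), it strictly decreases on the left child (where $|C|$ grows by one and $|\ex{(C \cup \set e)}| \le |\ex C| + |C| \le$ ... — this term needs a uniform bound, so the lexicographic version is cleaner) and strictly decreases on the right child (where $|D|$ grows). The lexicographic formulation sidesteps the need to bound $|\ex C|$ as $C$ grows.
\end{rmk}
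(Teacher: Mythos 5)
Your proposal is correct and follows essentially the same route as the paper: the paper's \cref{l:finitepath} shows every path in the call tree is finite because left steps strictly grow $C$ and right steps strictly grow $D \subseteq \ex C$ within the finite unfolding, and \cref{c:dag} then invokes König's lemma exactly as you do. Your explicit lexicographic measure $(|C|,|D|)$ (or the sum $|C|+|D|$, which increases by one per step and is bounded by $|E|$ since $C$ and $\ex C$ are disjoint) is just a cleaner packaging of the same argument, and your flagged obligation $e \notin D$ is indeed the one point the paper leaves implicit in \cref{e:step3}; it follows from \cref{e:basic3} and \cref{e:basic6} as you sketch.
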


Next we check that \cref{a:a1} never visits twice the same
configuration,
which is why it is called an \emph{optimal}~POR~\cite{AAJS14}.
We show that for every node in the call tree, the set of configurations
in the left and right subtrees are disjoint (\cref{l:c3c4}).
This implies that:

\begin{restatable}[Optimality]{theo}{thmoptimality}
   \label{t:a1.optimality}
   Let $\tilde C$ be a maximal configuration of~$\uunf$.
   Then \explore{$\cdot,\cdot,\cdot$} is called \emph{at most once}
   with its first parameter being equal to~$\tilde C$.
\end{restatable}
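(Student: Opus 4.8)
The plan is to exploit the call-tree structure (established in the appendix, \cref{c:btree}) and the disjointness lemma \cref{l:c3c4}, then argue by a structural induction on the tree. The statement claims that for a fixed maximal configuration $\tilde C$ of $\uunf$, the procedure \explore{} is called with first argument $\tilde C$ at most once across the whole run of \cref{a:a1}. First I would recall that the initial invocation \explore{$\set\bot,\emptyset,\emptyset$} generates a binary call tree whose nodes are triples $\tup{C,D,A}$, where the recursive call at \cref{l:a1l} produces the left child and the one at \cref{l:a1r} (when it fires) the right child. Each leaf is a node where $\ena{C}=\emptyset$, which by \cref{l:general} (since $D\subseteq\ex C$ and $C$ is a configuration) means $C$ is a maximal configuration of $\uunf$; conversely each call to \explore{} with a maximal first argument is a leaf. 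So the claim is equivalent to saying: no two distinct leaves of the call tree carry the same first component $\tilde C$.

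The key step is \cref{l:c3c4}: for every internal node $\tup{C,D,A}$, the set of configurations appearing as first components of leaves in its left subtree is disjoint from the set of those in its right subtree. Granting this, I would argue as follows. Suppose for contradiction two distinct leaves $\ell_1 \ne \ell_2$ both have first component $\tilde C$. Let $v$ be their least common ancestor in the call tree; $v$ is internal, and (up to relabeling) $\ell_1$ lies in the left subtree of $v$ and $\ell_2$ in the right subtree of $v$ — this is where I use that every internal node has at most the two children produced at \cref{l:a1l} and \cref{l:a1r}, so the two leaves must diverge into distinct subtrees at $v$. Then $\tilde C$ belongs to both the left-subtree leaf-configuration set and the right-subtree leaf-configuration set of $v$, contradicting \cref{l:c3c4}. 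Hence at most one leaf carries $\tilde C$, which is exactly the statement.

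To make the reduction to \cref{l:c3c4} airtight I would also need the monotonicity facts recorded just before the theorem: going down the tree, $C$ or $D$ strictly grows (so, together with finiteness of $\uunf$, \cref{t:a1.termination} guarantees the call tree is finite and the induction/least-common-ancestor argument is well-founded), and along the left branch the new event $e$ chosen from $\ena{C}$ is added to $C$, so that the left subtree explores only configurations containing $C\cup\set e$ while the right subtree explores only configurations not containing $e$ — this is the semantic content underlying \cref{l:c3c4} and is what I would cite to justify that lemma's hypotheses are met at $v$. The main obstacle is \cref{l:c3c4} itself, i.e.\ proving that the alternative $J$ returned by \alternatives{$C,D\cup\set e$} genuinely witnesses a maximal configuration avoiding $D\cup\set e$ (\cref{d:alt}, conditions \cref{e:alt2}–\cref{e:alt1}) and that consequently the right subtree's leaves all avoid $e$ while the left subtree's leaves all contain $e$; but since that lemma is established in the appendix, the proof here is just the short least-common-ancestor contradiction sketched above.
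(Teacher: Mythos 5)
Your proposal is correct and follows essentially the same route as the paper: the paper's own argument (via \cref{l:atmost1}) also takes the two hypothetical nodes with first component $\tilde C$, locates the point where their root paths diverge into a left and a right child, and derives the contradiction that the event chosen there lies both in $\tilde C$ and in $\ex{\tilde C}$ — which is exactly the content of \cref{l:c3c4} that you invoke at the least common ancestor. The only cosmetic difference is that the paper re-derives the disjointness inline from \cref{e:step1,e:step2,e:step3,e:basic4} instead of citing \cref{l:c3c4} directly.
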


Parameter~$A$ of \explore plays a central role in making \cref{a:a1} optimal.
It is necessary to ensure that, once the algorithm decides to explore some
alternative~$J$, such an alternative is visited first. Not doing so makes it
possible to extend~$C$ in such a way that no maximal configuration
can ever avoid including events in~$D$. Such a configuration,
referred as a \textit{sleep-set blocked} execution in~\cite{AAJS14},
has already been explored before.

Finally, we ensure that \cref{a:a1} visits every maximal configuration
of~$\uunf$.
This essentially reduces to showing that it
makes the second recursive call, \cref{l:a1r},
whenever there exists some unexplored maximal configuration 
not containing $D \cup \set e$.
The difficulty of proving so (\cref{l:compl.step})
comes from the fact that \cref{a:a1} \emph{only} sees events in~$U$.
Due to space constraints,
we omit an additional result on the memory
consumption, \cf \cref{x:memory.lazy}.

\begin{restatable}[Completeness]{theo}{thmcompleteness}
   \label{t:a1.completeness}
   Let $\tilde C$ be a maximal configuration of~$\uunf$.
   Then \explore{$\cdot,\cdot,\cdot$} is called \emph{at least once}
   with its first parameter being equal to~$\tilde C$.
\end{restatable}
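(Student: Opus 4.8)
The plan is to prove the contrapositive of the critical step: whenever a call \explore{$C,D,A$} is made and there exists a maximal configuration $\tilde C$ of $\uunf$ with $C \subseteq \tilde C$ and $\tilde C \cap D = \emptyset$, then the call tree rooted at this node contains a node whose first parameter is $\tilde C$. The top-level call is \explore{$\set\bot, \emptyset, \emptyset$}, and every maximal configuration contains $\bot$ and avoids $\emptyset$, so the theorem follows by applying this claim at the root. I would prove the claim by well-founded induction, using the termination measure already established for \cref{t:a1.termination} (the exploration goes down the call tree monotonically increasing $|C|$ or $|D|$, and $\uunf$ is finite). The base case is $C = \tilde C$: then \ena{$C$} $= \emptyset$ since $\tilde C$ is maximal, so the node \emph{is} the witness. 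For the inductive step, $C \subsetneq \tilde C$, so \ena{$C$} is nonempty and the algorithm picks an event $e$.

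There are two cases. If $e \in \tilde C$, then $\tilde C$ still contains $C \cup \set e$ and avoids $D$ (and the updated $A \setminus \set e$ is still compatible, since $\tilde C$ avoided nothing it needed to—here one invokes \cref{l:general} to know $C \cup A$ is a configuration and that the first maximal configuration explored contains $C \cup A$, so $A \subseteq \tilde C$ can be assumed by an auxiliary invariant, or one strengthens the induction hypothesis to only require $C \cup A \subseteq \tilde C$). Then the left recursive call at \cref{l:a1l} satisfies the induction hypothesis and produces the witness. If $e \notin \tilde C$, then $\tilde C$ is a maximal configuration containing $C$ and avoiding $D \cup \set e$; I must show the algorithm takes the right branch, i.e.\ that \alternatives{$C, D \cup \set e$} is nonempty, and that some alternative it returns leads, through the recursive call at \cref{l:a1r}, back to $\tilde C$. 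This is where the main obstacle lies.

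The hard part is this: \alternatives can only see events in the global set $U$, whereas $\tilde C$ may contain events never yet inserted. I need to exhibit a concrete alternative $J \subseteq U$ to $D \cup \set e$ after $C$ — that is, $C \cup J$ is a configuration and every event of $D \cup \set e$ is in immediate conflict (within $U$) with some event of $C \cup J$. The natural candidate for the "conflicting witnesses" is: for each $d \in D \cup \set e$, since $d \notin \tilde C$ while $d \in \ex{C}$ (by \cref{l:general}), causal closure forces some event of $\tilde C$ to be in conflict with $d$, and minimality arguments on event structures (using \eqref{e:les2} and the immediate-conflict machinery of \cref{s:unfsem}) let me pick one in \emph{immediate} conflict with $d$; collect these and their causes into $J$. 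The genuine difficulty is showing $J \subseteq U$ — i.e.\ that all these witnessing conflicting events and their causal predecessors were actually inserted into $U$ by earlier \extend calls and were \emph{not} evicted by \remove. This is exactly the role of the invariant $Q_{C,D,U}$: I would establish, as a separate lemma (the analogue of the paper's \cref{l:compl.step}), that $U$ always contains, for every event seen in immediate conflict with the current $C$ or $D$, that event together with its local configuration — and then argue that the witnesses built from $\tilde C$ fall in this set because they lie on the boundary between $\tilde C$ and $D \cup \set e$, which the past exploration necessarily touched. Once $J \in$ \alternatives{$C, D \cup \set e$} is secured, the right call \explore{$C, D \cup \set e, J \setminus C$} satisfies the induction hypothesis for target $\tilde C$ — noting $C \subseteq \tilde C$, $\tilde C \cap (D \cup \set e) = \emptyset$, and $C \cup (J\setminus C) = C \cup J \subseteq \tilde C$ can be arranged by choosing the minimal such $J$ — and induction delivers the witness node.

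One subtlety to handle carefully along the way: the induction hypothesis must be stated with the parameter $A$ included, namely "if $C \cup A \subseteq \tilde C$ and $\tilde C \cap D = \emptyset$ then $\tilde C$ appears below", because the choice of $e$ in \cref{l:a1choose} is constrained to $A \cap \ena{C}$ when $A \neq \emptyset$; I must check $A \cap \ena{C} \neq \emptyset$ in that situation, which follows from $A \subseteq \tilde C$, $C \subsetneq \tilde C$, and the fact that any configuration strictly between $C$ and $\tilde C$ has an enabled event of $\tilde C$ inside $A$ — this is a standard event-structure interval argument and should be isolated as a small lemma rather than reproved inline.
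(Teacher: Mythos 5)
Your overall decomposition is the same as the paper's: a ``step'' claim (at each node either $C$ is maximal, or the chosen $e$ lies in $\tilde C$ and the left child exists, or it does not and the right child exists) combined with a walk-down argument, and you correctly locate the crux in showing that the witness set $J$ lives inside $U$ when \alternatives is called. Two of your choices, however, would break the proof as written. First, the strengthened induction hypothesis $C \cup A \subseteq \tilde C$ is neither maintainable nor needed: at the right branch the algorithm recurses with $A' = J \setminus C$ for \emph{whatever} alternative $J$ the function \alternatives happens to return, and you cannot ``arrange'' $C \cup J \subseteq \tilde C$ by picking a minimal one --- the prover does not get to choose $J$. The paper's invariant is only $C \subseteq \tilde C$ and $D \cap \tilde C = \emptyset$; the case split is on whether the event $e$ the algorithm actually picked lies in $\tilde C$, and the nonemptiness of $A \cap \en C$ follows from $C \cup A$ being a configuration (\cref{l:general}, \cref{l:twoconfs}), with no reference to $\tilde C$.

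Second, and more seriously, your induction scheme (well-founded induction on the termination measure, i.e.\ structural induction over descendants) cannot discharge the crux. To show that a witness $e'_i$ in immediate conflict with $e_i \in D$ is present in $U$, one must show it was at some point \emph{inserted} by \extend, i.e.\ that some earlier visited configuration had $e'_i$ among its conflicting extensions; the retention invariant $Q_{C,D,U}$ only prevents eviction, it does not create the event. The paper proves insertion by applying the completeness claim to the \emph{left subtrees of the ancestors} $b_i$ at which the events $e_i$ entered $D$ --- nodes that precede $b$ in the traversal but are not descendants of $b$ --- which is why \cref{l:compl.step} inducts on the in-order relation $\lessdot$ of the call tree rather than down the tree. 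Moreover, ``the past exploration necessarily touched'' the witnesses is precisely the hard content: when no maximal configuration extending $C \cup \set{e_i} \cup \causes{e'_i}$ avoids $D_i$, the paper must abandon the originally chosen witness and find a replacement via the minimality argument on $\fcfl{e_i} \cap \hat C \supseteq \fcfl{e_j} \cap \hat C$, possibly substituting $e'_i$ by a causal predecessor of $e'_m$ in immediate conflict with $e_i$. None of this is recoverable from the $Q_{C,D,U}$ invariant alone, so as it stands the proposal has a genuine gap at its central step.
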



\section{Improvements}
\label{s:impro}

\subsection{State Caching}

Stateless model checking algorithms explore only one configuration of $\uunf$ at
a time, thus potentially under-using remaining available memory.
A desirable property for an algorithm is the capacity to exploit
all available memory without imposing the liability of actually requiring it.
The algorithm 
in \cref{s:unfalgo} satisfies this property. The set~$G$,
storing events discarded from~$U$, can be cleaned at discretion, \eg, when the
memory is approaching full utilisation. Events cached in~$G$ are exploited in
two different ways.

First,
whenever an event in~$G$ shall be included again in~$U$,
we do not need to reconstruct it in memory (causality, conflicts, \etc).
In extreme cases, this might happen frequently.
Second,
using the result of the next section,
cached events help prune the number of maximal configurations to visit.
This means that our POR potentially visits \emph{fewer} final states
than the number of configurations of~$\uunf$,
thus conforming to the requirements of a \emph{super-optimal DPOR}.
The larger~$G$ is, the fewer configurations will be explored.

\subsection{Non-Acyclic State Spaces}
\label{s:cutoffs}

In this section we remove the assumption that $\unf{M,\indep}$ is finite.
We employ the notion of cutoff events~\cite{Mcm93}.
While cutoffs are a standard tool for unfolding pruning, their application
to our framework brings unexpected problems.

The core question here is preventing \cref{a:a1} from getting stuck in the
exploration of an infinite configuration.
We need to create the illusion that maximal configurations are finite.
We achieve this by substituting procedure \extend in \cref{a:a1} with another
procedure \extendcut that operates as \extend except that it only
adds to~$U$ an event from $e \in \ex C$ if the predicate $\iscutoff{e,U,G}$
evaluates to false.
We define~$\iscutoff{e,U,G}$ to hold iff
there exists some event $e' \in U \cup G$ such that
\begin{equation}
\label{e:cutoff}
\state{[e]} = \state{[e']}
\text{ ~ and ~ }
|[e']| < |[e]|.
\end{equation}
We refer to $e'$ as the \emph{corresponding} event of~$e$, when it exists.
This definition declares $e$ cutoff as function of~$U$ and~$G$.
This has important consequences.
An event $e$ could be declared cutoff while exploring one maximal configuration and
non-cutoff while exploring the next, as the corresponding
event might have disappeared from $U \cup G$.
This is in stark contrast to the classic unfolding construction, where events are
declared cutoffs \emph{once and for all}.
The main implication is that the standard
argument~\cite{Mcm93,ERV02,BHKTV14} invented by McMillan for proving
completeness fails.
We resort to a completely different argument for proving
completeness of our algorithm (see~\cref{x:cutoffs}),
which we are forced to skip in view of the lack of space.

We focus now on the correction of \cref{a:a1} using \extendcut instead of
\extend.
A \emph{causal cutoff} is any event~$e$ for which there is some $e' \in [e]$
satisfying \cref{e:cutoff}.
It is well known that causal cutoffs define a finite prefix of~$\uunf$ as per
the classic saturation definition~\cite{BHKTV14}.
Also, $\iscutoff{e,U,G}$ always holds for causal cutoffs,
regardless of the contents of~$U$ and~$G$.
This means that the modified algorithm can only explore configurations
from a finite prefix. It thus necessarily terminates.
As for optimality, it is unaffected by the use of cutoffs, existing proofs for
\cref{a:a1} still work.
Finally, for completeness we prove the following result,
stating that local reachability (\eg, fireability of transitions of~$M$) is
preserved:

\begin{restatable}[Completeness]{theo}{thmcutoffcompleteness}
   \label{t:cutoff.completeness}
   For any reachable state $s \in \reach M$,
   \cref{a:a1} updated with the cutoff mechanism described above explores one
   configuration~$C$ such that for some $C' \subseteq C$ it holds that
   $\state{C'} = s$.
\end{restatable}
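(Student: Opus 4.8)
The plan is to reduce the claim to ordinary completeness of \cref{a:a1} (i.e.\ \cref{t:a1.completeness}, which tells us the algorithm visits every maximal configuration) by showing that the cutoff mechanism never ``hides'' a reachable state of~$M$ behind a cutoff event without having already recorded some configuration that reaches that state. Fix a reachable state $s \in \reach M$. By completeness of the unfolding (\cref{r:unf.complete}), there is a configuration $C_s$ of $\uunf$ with $\state{C_s} = s$; take $C_s$ of minimal size among all such configurations, and extend it to some maximal configuration $\tilde C \supseteq C_s$ of $\uunf$. If during the whole run of the modified algorithm no event that \extendcut refuses to add ever lies inside (a visited prefix of) $\tilde C$, then the modified algorithm behaves on $\tilde C$ exactly as the unmodified one, so by \cref{t:a1.completeness} it calls \explore{} with first parameter $\tilde C$, and then $C' \eqdef C_s \subseteq \tilde C$ witnesses the claim. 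So the real work is the case where some event on the way to reaching~$s$ gets declared cutoff.

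The key step is the following well-foundedness argument. Suppose, towards a contradiction, that the algorithm never explores any configuration $C$ with a subconfiguration reaching~$s$. Among all configurations $C_s'$ of $\uunf$ with $\state{C_s'} = s$, pick one of minimal size $|C_s'|$ (this is $C_s$, reused). Consider the first moment in the exploration at which the algorithm, trying to grow towards $C_s$, is blocked: there is a configuration $C \subsetneq C_s$ already built by the algorithm and an event $e \in \ex C \cap C_s$ that \extendcut refuses to add because $\iscutoff{e,U,G}$ holds, i.e.\ there is $e' \in U \cup G$ with $\state{[e']} = \state{[e]}$ and $|[e']| < |[e]|$. Here I would invoke the standard unfolding fact (used in \cite{ERV02,BHKTV14}) that from $\state{[e']} = \state{[e]}$ one can ``graft'' the part of $C_s$ above $[e]$ onto $[e']$: there is a configuration $C_s''$ with $\state{C_s''} = \state{C_s} = s$ and $|C_s''| < |C_s|$, contradicting minimality of $C_s$. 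The subtlety that must be handled carefully is that $e'$ is only known to be in $U \cup G$ at \emph{that} moment, not necessarily on the path currently being explored, and that (unlike in McMillan's construction) cutoff status is not permanent — so I cannot simply quote the classic completeness proof; I instead argue directly that $e'$ (or rather $[e']$) \emph{was} part of some configuration the algorithm built or will build, and use the induction on $|[\cdot]|$-size to descend until no event is a cutoff, at which point the previous paragraph applies. The ``causal cutoff'' observation quoted just before the theorem — that causal cutoffs yield a finite prefix and are always cutoffs regardless of $U,G$ — underpins the base case: the descent terminates because $|[e]|$ strictly decreases at each grafting step.

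The main obstacle I expect is precisely making the grafting/minimality argument robust against the \emph{dynamic} nature of $\iscutoff{\cdot,U,G}$: an event can be a cutoff now and not later, so the set of ``reachable-via-explored-configuration'' states is not obviously monotone along the exploration, and one cannot argue ``once some configuration reaching $s$ is recorded, we are done'' as bluntly as in the static setting. I would address this by phrasing the induction not over the run of the algorithm but over the structure of $\uunf$: take a minimal-size configuration reaching $s$, and show by induction on its size that \emph{some} configuration reaching $s$ is explored, using at the inductive step that whatever blocks the minimal one must be a (dynamic) cutoff whose corresponding event~$e'$ has $\state{[e']}=s'$ for a suitable prefix state and strictly smaller local configuration, feeding the induction hypothesis. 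The finiteness of the causal-cutoff prefix guarantees the induction bottoms out. Everything else — that \explore{} with the relevant $\tilde C$ really gets called once the path is unobstructed, and that $C' = C_s$ then works — is immediate from \cref{t:a1.completeness} and the definition of $\state{\cdot}$ on configurations.
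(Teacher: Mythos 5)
Your second step --- the minimality/grafting argument --- is sound and is essentially the paper's own device in different clothing: the paper observes that any event ever declared a dynamic cutoff is in particular a \emph{static} $\prec$-cutoff (since $U \cup G \subseteq E$), so the canonical $\prec$-prefix consists entirely of events that are \emph{never} declared cutoff (``red'' events), and that prefix is marking-complete. Your minimal-size configuration $C_s$ is exactly such a red configuration, and your one-step contradiction (a dynamic cutoff in $C_s$ yields a strictly smaller configuration reaching $s$) correctly establishes that some configuration reaching $s$ contains no event that the modified algorithm will ever refuse. Up to this point the two arguments match.

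The genuine gap is in what you dismiss as ``immediate from \cref{t:a1.completeness}.'' That theorem cannot be invoked here: it assumes $\uunf$ is finite (here it is generally infinite --- that is the whole point of cutoffs), it speaks of the \emph{unmodified} algorithm, and it asserts that every \emph{maximal} configuration is visited --- whereas a maximal configuration $\tilde C \supseteq C_s$ may be infinite and is never visited by the cutoff-pruned exploration. Nor does the modified algorithm ``behave exactly as the unmodified one'' on an unobstructed branch: whether the right recursive call at \cref{l:a1r} is made depends on \alternatives{} finding a witness inside the \emph{global} set $U$, whose contents are changed everywhere by the cutoff pruning. The paper therefore has to re-prove the completeness induction from scratch (\cref{l:cutoff.step,l:cutoff.findit}), with a weakened invariant --- instead of ``$C \subseteq \hat C$ and $D \cap \hat C = \emptyset$'' for a maximal $\hat C$, it maintains ``$C \cup \hat C$ is a configuration and every event of $D$ has an immediate conflict in $\hat C$'' for a \emph{red} configuration $\hat C$ --- and the delicate part of that induction is showing that the conflicting witnesses $e'_i$ needed to form an alternative are present in $U$ at the moment \alternatives{} is called, which now relies crucially on those witnesses being red (so that \extendcut actually adds them when they are discovered). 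None of this is supplied by your proposal; it is where the bulk of the paper's proof lives, so you should either prove these adapted lemmas or find a genuinely different argument for why the algorithm explores a superset of your minimal $C_s$.
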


Lastly, we note that this cutoff approach imposes no liability on what events
shall be kept in the prefix, set~$G$ can be cleaned at discretion.
Also, redefining~\cref{e:cutoff} to use adequate orders~\cite{ERV02} is
straightforward, \cf~\cref{x:cutoffs} (in our proofs we actually assume adequate orders).

\section{Experiments}
\label{s:experiments}

As a proof of concept,
we implemented our algorithm in a new
explicit-state model checker baptized \poet (Partial Order Exploration
Tool).\footnote{Source code and benchmarks
available from:
\url{http://www.cs.ox.ac.uk/people/marcelo.sousa/poet/}.}
Written in Haskell, a lazy functional language,
it analyzes programs from a restricted fragment of the C language and supports POSIX threads.
The analyzer accepts deterministic programs, implements a variant of \cref{a:a1} where
the computation of the alternatives is memoized,
and supports cutoffs events with the criteria defined in \cref{s:impro}.

We ran \poet on a number of multi-threaded C programs.
Most of them are adapted from benchmarks of the Software Verification
Competition~\cite{URL:SVCOMP15};
others are used in related works~\cite{GFYS07,YCGK08,AAJS14}.
We investigate the characteristics of average program unfoldings (depth,
width, \etc) as well as the frequency and impact of cutoffs on the exploration.
We also compare \poet with \nidhugg~\cite{AAJLS15}, a state-of-the-art 
stateless model checking for multi-threaded C programs that implements
Source-DPOR~\cite{AAJS14}, an efficient but non-optimal DPOR.
All experiments were run on an Intel Xeon CPU with 2.4\,GHz and 4\,GB memory.
\Cref{table:resultsacylic,table:resultscyclic} give our experimental data for
programs with acyclic and non-acyclic state spaces,
respectively.

\newcommand\newrow{\\[-4pt]}
\newcommand\param[1]{\footnotesize(#1)}

\begin{table}[!t]

\caption{Programs with acyclic state space.
Columns are:
$|P|$:              \nr of threads;
$|I|$:              \nr of explored traces;
$|B|$:              \nr of sleep-set blocked executions;
$t(s)$:             running time;
$|E|$:              \nr of events in $\uunf$;
$|E_{\text{cut}}|$: \nr of cutoff events;
$|\Omega|$:         \nr of maximal configurations;
$\tup{|U_\Omega|}$: \avg \nr of events in~$U$ when exploring a maximal configuration.
A $*$ marks programs containing bugs.
{\tt <7K} reads as ``\emph{fewer than 7000}''.
}
\label{table:resultsacylic}

\setlength\tabcolsep{3.3pt}
\def\sep{\hspace{10pt}}
\def\tinysep{\hspace{4pt}}

\centering
\tt
\begin{tabular}{lr@{\sep}rrr@{\sep}rrrr@{\sep}rrrrr}
\toprule
  \multicolumn{2}{l}{\rm \normalsize Benchmark}
& \multicolumn{3}{l}{\rm \normalsize \nidhugg}
& \multicolumn{4}{l}{\rm \normalsize \poet (without cutoffs)}
& \multicolumn{5}{l}{\rm \normalsize \poet (with cutoffs)}
\\[-2pt]
  \cmidrule(r){1-2}
  \cmidrule(r){3-5}
  \cmidrule(r){6-9}
  \cmidrule(r){10-14}
  \rm Name 
& $|P|$
& $|I|$
& $|B|$
& $t(s)$
& $|E|$
& $|\Omega|$
& $\tup{|U_\Omega|}$
& $t(s)$
& $|E|$
& $|E_{\text{cut}}|$
& $|\Omega|$
& $\tup{|U_\Omega|}$
& $t(s)$
\\[-2pt]
\midrule
\rm\sc Stf            & 3  & 6    & 0   & 0.06 & 121    & 6    & 79     & 0.04 & 121   & 0   & 6    & 79    & 0.06 \newrow
\rm\sc Stf$*$         & 3  & --   & --  & 0.05 & --     & --   & --     & 0.02 & --    & --  & --   & --    & 0.03 \newrow
\rm\sc Spin08         & 3  & 84   & 0   & 0.08 & 2974   & 84   & 1506   & 2.04 & 2974  & 0   & 84   & 1506  & 2.93 \newrow
\rm\sc Fib            & 3  & 8953 & 0   & 3.36 & <185K  & 8953 & 92878  & 305  & <185K & 0   & 8953 & 92878 & 704 \newrow
\rm\sc Fib$*$         & 3  & --   & --  & 0.74 & --     & --   & --     & 81.0 & --    & --  & --   & --    & 133 \newrow
\rm\sc Ccnf\param 9   & 9  & 16   & 0   & 0.05 & 49     & 16   & 46     & 0.07 & 49    & 0   & 16   & 46    & 0.06 \newrow
\rm\sc Ccnf\param{17} & 17 & 256  & 0   & 0.15 & 97     & 256  & 94     & 5.76 & 97    & 0   & 256  & 94    & 6.09 \newrow
\rm\sc Ccnf\param{19} & 19 & 512  & 0   & 0.28 & 109    & 512  & 106    & 22.5 & 109   & 0   & 512  & 106   & 22.0 \newrow 
\rm\sc Ssb            & 5  & 4    & 2   & 0.05 & 48     & 4    & 38     & 0.03 & 46    & 1   & 4    & 37    & 0.03 \newrow
\rm\sc Ssb\param 1    & 5  & 22   & 14  & 0.06 & 245    & 23   & 143    & 0.11 & 237   & 4   & 23   & 140   & 0.11 \newrow
\rm\sc Ssb\param 3    & 5  & 169  & 67  & 0.12 & 2798   & 172  & 1410   & 3.51 & 1179  & 48  & 90   & 618   & 0.90 \newrow
\rm\sc Ssb\param 4    & 5  & 336  & 103 & 0.15 & <7K  & 340  & 3333   & 20.3 & 2179    & 74  & 142  & 1139  & 2.07 \newrow
\rm\sc Ssb\param 8    & 5  & 2014 & 327 & 0.85 & <67K & 2022 & 32782  & 4118 & <12K    & 240 & 470  & 6267  & 32.1
\\[-2pt]
\bottomrule
\end{tabular}
\end{table}

\begin{table}[!t]

\caption{Programs with non-terminating executions. Column $b$ is the loop
bound. The value is chosen based on experiments described in~\cite{AAJLS15}.}
\label{table:resultscyclic}

\setlength\tabcolsep{4pt}
\def\sep{\hspace{36pt}}

\centering
\tt
\begin{tabular}{lr@{\sep}rrrr@{\sep}rrrrr}
\toprule
  \multicolumn{2}{l}{\rm \normalsize Benchmark}
& \multicolumn{4}{l}{\rm \normalsize \nidhugg}
& \multicolumn{5}{l}{\rm \normalsize \poet (with cutoffs)}
\\[-2pt]
  \cmidrule(r){1-2}
  \cmidrule(r){3-6}
  \cmidrule(r){7-11}
  \rm Name 
& $|P|$
& $b$
& $|I|$
& $|B|$
& $t(s)$
& $|E|$
& $|E_{\text{cut}}|$
& $|\Omega|$
& $\tup{|U_\Omega|}$
& $t(s)$
\\[-2pt]
\midrule
\rm\sc {Szymanski}          & 3 & -- & 103    & 0  & 0.07  & 1121  & 313 & 159 & 591   & 0.36 \newrow
\rm\sc {Dekker}             & 3 & 10 & 199    & 0  & 0.11  & 217   & 14  & 21  & 116   & 0.07 \newrow
\rm\sc {Lamport}            & 3 & 10 & 32     & 0  & 0.06  & 375   & 28  & 30  & 208   & 0.12 \newrow
\rm\sc {Peterson}           & 3 & 10 & 266    & 0  & 0.11  & 175   & 15  & 20  & 100   & 0.05 \newrow
\rm\sc {Pgsql}              & 3 & 10 & 20     & 0  & 0.06  & 51    & 8   & 4   & 40    & 0.03 \newrow
\rm\sc {Rwlock}             & 5 & 10 & 2174   & 14 & 0.83  & <7317 & 531 & 770 & 3727  & 12.29 \newrow
\rm\sc {Rwlock\param{2}$*$} & 5 & 2  & --     & -- & 7.88  & --    & --  & --  & --    & 0.40 \newrow
\rm\sc {Prodcons}           & 4 & 5  & 756756 & 0  & 332.62 & 3111  & 568 & 386 & 1622  & 5.00 \newrow
\rm\sc {Prodcons\param{2}}  & 4 & 5  & 63504  & 0  & 38.49 & 640   & 25  & 15  & 374   & 1.61
\\[-2pt]
\bottomrule
\end{tabular}

\end{table}

For programs with acyclic state spaces (\cref{table:resultsacylic}), 
\poet with and without cutoffs seems to perform the same
exploration when the unfolding has no cutoffs, as expected.
Furthermore, the number of explored executions also coincides with \nidhugg 
when the latter reports 0 sleep-set blocked executions (\cf, \cref{s:unfalgo}),
providing experimental evidence of \poet's optimality.

The unfoldings of most programs in \cref{table:resultsacylic} do not contain
cutoffs.  All these programs are deterministic, and many of them
highly sequential (\textsc{Stf}, \textsc{Spin08}, \textsc{Fib}),
features known to make cutoffs unlikely.
\mbox{\textsc{Ccnf}($n$)} are concurrent programs composed of $n-1$ threads
where thread~$i$ and~$i+1$ race on writing one variable, and are independent of
all remaining threads.
Their unfoldings resemble \cref{f:2read}~(d), with $2^{(n-1)/2}$ traces
but only $\bigo n$ events.
Saturation-based unfolding methods would win here over both \nidhugg and \poet.

In the $\textsc{ssb}$ benchmarks, \nidhugg encounters
sleep-set blocked executions, thus performing sub-optimal exploration.
By contrast, \poet finds many cutoff events and
achieves a \emph{super-optimal} exploration, exploring fewer traces
than both \poet without cutoffs and \nidhugg.
The data shows that this \emph{super-optimality} results in substantial savings in runtime.

For non-acyclic state spaces (\cref{table:resultscyclic}), 
unfoldings are infinite. We thus
compare \poet with cutoffs and \nidhugg with a loop bound.
Hence, while \nidhugg performs bounded model checking,
\poet does complete verification.
The benchmarks include classical mutual exclusion protocols
($\textsc{Szymanski}, \textsc{Sekker}, \textsc{Lamport}$ and $\textsc{Peterson}$),
where \nidhugg is able to leverage an important static optimization that replaces each spin loop 
by a load and assume statement~\cite{AAJLS15}. Hence, the number of traces and maximal configurations is 
not comparable.
Yet \poet, which could also profit from this static optimization,
achieves a significantly better reduction thanks to cutoffs alone.
Cutoffs dynamically prune redundant unfolding branches and arguably constitute a more
robust approach than the load and assume syntactic substitution.
The substantial reduction in 
number of explored traces, several orders of magnitude in some cases,
translates in clear runtime improvements.
Finally, in our experiments, both tools were able to successfully discover assertion violations in 
\textsc{stf$*$}, \textsc{fib$*$} and \textsc{rwlock(2)$*$}.

In our experiments,
\poet's average maximal memory consumption (measured in events)
is roughly half of the size of the unfolding.
We also notice that most of these unfoldings are quite narrow and deep
($|E_{\text{cut}}| \div |E|$ is low) when
compared with standard benchmarks for Petri nets.
This suggests that they could be amenable for saturation-based unfolding
verification, possibly pointing the opportunity of applying these methods
in software verification.

\section{Related Work}
\label{s:related}

This work focuses on explicit-state POR,
as opposed to symbolic POR techniques exploited inside SAT
solvers,~\eg,~\cite{KWG09,GFYS07}.
Early POR statically computed the necessary transitions to fire at every
state~\cite{Val91,God96}.
Flanagan and Godefroid~\cite{FG05} first proposed to compute persistent sets
dynamically (DPOR).
However, even when combined with sleep~sets~\cite{God96}, DPOR was still unable
to explore exactly one interleaving per Mazurkiewicz trace.
Abdulla~et~\al~\cite{AAJS14,AAJLS15} recently proposed the first solution to this,
using a data structure called wakeup trees.
Their DPOR is thus optimal (ODPOR) in this sense.

Unlike us, ODPOR operates on an interleaved execution model.
Wakeup trees store chains of dependencies that assist the algorithm in
reversing races throughly.
Technically, each branch roughly correspond to one of our alternatives.
According to~\cite{AAJS14}, constructing and managing wakeup trees is
expensive.
This seems to be related with the fact that wakeup trees store canonical
linearizations of configurations, and need to canonize executions before
inserting them into the tree to avoid duplicates.
Such checks become simple linear-time verifications when seen as partial-orders.
Our alternatives are computed dynamically and exploit these partial orders,
although we do not have enough experimental data to compare with wakeup trees.
Finally, our algorithm is able to visit up to exponentially fewer
Mazurkiewicz traces (owing to cutoff events),
copes with non-terminating executions, and profits from
state-caching. The work in~\cite{AAJS14} has none of these features.

Combining DPOR with stateful search is challenging~\cite{YWY06}.
Given a state~$s$, DPOR relies on a complete exploration from~$s$ to determine
the necessary transitions to fire from~$s$,
but such exploration could be pruned if a state is revisited,
leading to unsoundness.
Combining both methods requires addressing this difficulty,
and two works did it~\cite{YWY06,YCGK08}, but for non-optimal DPOR.
By contrast, incorporating cutoff events into~\cref{a:a1} was straightforward.

Classic, saturation-based unfolding algorithms are also
related~\cite{Mcm93,ERV02,BHKTV14,KH14}.  They are inherently stateful,
cannot discard events from memory, but explore events instead of
configurations, thus may do exponentially less work.  They can
furthermore guarantee that the number of explored events will be at most the
number of reachable states, which at present seems a difficult goal for
PORs.  On the other hand, finding the events to extend the unfolding is
computationally harder.  In~\cite{KH14}, Kähkönen and Heljanko use
unfoldings for concolic testing of concurrent programs.  Unlike ours, their
unfolding is not a semantics of the program, but rather a means for
discovering all concurrent program paths.

While one goal of this paper is establishing an (optimal) POR exploiting
the same commutativity as some non-sequential semantics, a longer-term goal is
building formal connections between the latter and PORs.
Hansen and~Wang~\cite{HW11} presented a characterization of
(a class of) stubborn sets~\cite{Val91} in terms of configuration structures, another
non-sequential semantics more general than event structures.
We shall clarify that while we restrict ourselves to commutativity-based PORs,
they attempt a characterization of stubborn sets, which do not necessarily rely
on commutativity.

\section{Conclusions}
\label{s:concl}


In the context of commutativity-exploiting POR,
we introduced an optimal DPOR that leverages on cutoff events to prune
the number of explored Mazurkiewicz traces,
copes with non-terminating executions, and uses state caching to
speed up revisiting events.
The algorithm 
provides a new view to DPORs as
algorithms exploring an object with richer structure.
In future work, we plan exploit this richer structure to further reduce the
number of explored traces for both PORs and saturation-based unfoldings.

\bibliographystyle{plain}
\bibliography{refs}

\newpage
\appendix

\section{Proofs: Unfolding Semantics}
\label{x:unfsem}

In \cref{s:unfsem} we defined the set of finite unfolding prefixes of~$M$ under
the independence relation~$\indep$.
If~$M$ has only terminating executions, \ie, all elements in $\runs M$ are
finite,
then all unfolding prefixes are finite.
However, if it has non-terminating executions, then we need to also consider its
infinite unfolding prefixes.
We will achieve this in \cref{d:prefix}.
First we need a technical definition and some results about it.
Let
\[
F \eqdef \set{
\tup{E_1, <_1, \cfl_1, h_1},
\tup{E_2, <_2, \cfl_2, h_2},
\ldots}
\]
be a finite or infinite set of unfolding prefixes of~$M$ under~$\indep$.
We define the \emph{union} of all of them as the LES
$\union F \eqdef \tup{E, {<}, {\cfl}, h}$, where
\[
E \eqdef \bigcup_{1 \le i} E_i
\qquad
{<} \eqdef \bigcup_{1 \le i} {<_i}
\qquad
h \eqdef \bigcup_{1 \le i} h_i,
\]
and $\cfl$ is the $\subseteq$-minimal relation on $E \times E$ that
satisfies \cref{e:les2} and such that
$e \cfl e'$ holds
for any two events $e, e' \in E$ if
\begin{equation}
\label{e:union}
e \notin [e'] \text{ and }
e' \notin [e] \text{ and }
h(e) \depen h(e').
\end{equation}
Since every element of $F$ is a LES, clearly $\union F$ is also a LES,
\cref{e:les1} and \cref{e:les2} are trivially satisfied.
Notice that all events in $E_1, E_2, E_3, \ldots$ are pairs of the form
$\tup{t,H}$, and the union of two or more $E_i$'s will merge many
\textit{equal events}.  Indeed, two events
$e_1 \eqdef \tup{t_1, H_1}$ and
$e_2 \eqdef \tup{t_2, H_2}$
are equal iff $t_1 = t_2$ and $H_1 = H_2$.

\begin{defn}[Unfolding prefixes, finite or infinite]
\label{d:prefix}
The set of \emph{unfolding prefixes} of $M$ under the independence
relation~$\indep$ contains all
finite unfolding prefixes, as defined by \cref{d:finite.prefix},
together with those constructed by:
\begin{itemize}
\item
  For any infinite set~$X$ of 
  unfolding prefixes, $\union X$ is also an unfolding prefix.
\end{itemize}
\end{defn}

Our first task is verifying that each unfolding prefix is indeed a LES.
Conditions \cref{e:les1,e:les2} are satisfied by construction.
We verify the following:

\begin{lem}
   \label{l:pref.les}
   For any unfolding prefix $\ppref \eqdef \tup{E, {<}, {\cfl}, h}$ we have the
   following:
   \begin{enumerate}
   \item The relation $<$ is a strict partial order.
   \item The relation $\cfl$ is irreflexive.
   \end{enumerate}
\end{lem}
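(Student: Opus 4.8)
The plan is to proceed by structural induction over the two constructors for unfolding prefixes, namely \cref{d:finite.prefix} (the base LES consisting of $\bot$ alone, and the one-event extension) and the infinite-union rule of \cref{d:prefix}. For each of the two claims I would set up a property $P(\ppref)$ and show it is preserved by all three construction steps; since the set of unfolding prefixes is defined as the \emph{smallest} set closed under these constructors, establishing closure suffices.

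\textbf{Part 1: $<$ is a strict partial order.} Irreflexivity and the fact that $<$ has no cycles of length $1$ are immediate from the constructors, since each new event $e \eqdef \tup{t,H}$ is added with $e' < e$ only for $e' \in H$, and $e \notin H$ because $H$ is a configuration not containing $e$ (the event $e$ does not yet exist in $\ppref$). The real content is transitivity together with asymmetry (equivalently, no directed cycles). For the base case this is vacuous. For the extension step, I would argue that all old causal pairs are untouched, and any new pair involves $e$ as the strictly larger element; since no old event $e''$ satisfies $e < e''$ (as $e$ is $<$-maximal immediately after insertion), no new cycle through $e$ can be created, and transitivity is inherited from the old order plus the edges $e' < e$ for $e' \in H$ (which is itself causally closed, so $e'' < e' \in H$ implies $e'' \in H$, giving $e'' < e$). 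For the infinite-union case, I would use the key observation already flagged in the text: events are canonical pairs $\tup{t,H}$, so the same event appearing in different $E_i$ carries the \emph{same} history $H$, and $<$ is just $\bigcup_i <_i$. A cycle in the union would be a finite set of edges, each drawn from some $<_{i}$; taking a prefix $\ppref_j$ large enough to contain all events and edges of the cycle (using that the $\ppref_i$ form a directed family — any two have a common upper bound, which one should extract from \cref{l:complete.lattice}, or more elementarily from the fact that prefixes are built by finite extension from $\bot$) yields a cycle in $<_j$, contradicting that $\ppref_j$ satisfies Part 1 by induction. Finiteness of $\causes e$ (\cref{e:les1}) is preserved similarly: in the union, $\causes e$ is already fully present in any prefix containing $e$.

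\textbf{Part 2: $\cfl$ is irreflexive.} For the base case $\cfl = \emptyset$. For the extension step, the only new conflict pairs are $e \cfl e'$ for $e' \in \mathcal{K}_{\les,\indep,e}$, and the definition of $\mathcal{K}_{\les,\indep,e}$ explicitly requires $e' \notin [e]$ and $e \notin [e']$, in particular $e' \ne e$; conflicts forced by \cref{e:les2} (propagating $e \cfl e'$ along $e' < e''$) produce pairs $e \cfl e''$ with $e'' \ne e$ since $e$ is $<$-maximal, so self-conflict is never created. For the infinite-union case, $\cfl$ is the $\subseteq$-minimal relation satisfying \cref{e:les2} and containing all pairs from \cref{e:union}; a reflexive pair $e \cfl e$ would require $e \notin [e]$, which is false, so the seed pairs are irreflexive, and \cref{e:les2}-closure starting from an irreflexive seed cannot create $e \cfl e$ (it only adds $e \cfl e''$ from $e \cfl e'$ and $e' < e''$, and $e'' = e$ would need $e' < e$ with $e' \ne e$ and $e \cfl e'$ — no issue, $e'' \ne e'$ but could $e'' = e$? only if some $e' < e$ with $e \cfl e'$; but that just says $e \cfl e$ was already derivable, so by minimality it is not added unless forced, i.e.\ one needs to check the minimal fixpoint never forces it — which holds because the seed is irreflexive and the closure operator preserves the property "contains no pair $\tup{x,x}$" as long as $<$ is irreflexive).

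\textbf{Anticipated main obstacle.} The delicate point is the infinite-union case, specifically making rigorous that the family $F$ behaves like a directed set so that any finite "bad configuration" (a cycle, or a self-conflict derivation) can be localized inside a single $\ppref_i$ where the inductive hypothesis applies. The cleanest route is probably to invoke, or prove as a small separate sublemma, that every finite subset of $E$ together with the relevant causal predecessors lies in some single prefix $\ppref_i \in F$ — this follows from the inductive structure of finite prefixes (each is reachable from $\bot$ by finitely many extensions, and any two are jointly extendable). Everything else is routine bookkeeping with the canonical-name representation of events.
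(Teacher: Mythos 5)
Your proposal is correct and follows essentially the same route as the paper: structural induction over the three constructors, using that the newly inserted event is $<$-maximal (so neither a causal cycle nor a self-conflict can be created in the extension step) and that any finite bad witness in an infinite union localizes to a single finite prefix where the inductive hypothesis applies. The only cosmetic difference is in that last localization step, where the paper avoids appealing to directedness of the family by observing that a causal cycle through $e_n$ lies entirely inside $\causes{e_n}$, which is already contained in any single finite prefix containing $e_n$.
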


\begin{proof}
   Assume that $\ppref$ is finite. This means that it has been constructed with
   \cref{d:finite.prefix}.  We prove both statements by induction.
   
   \emph{Base case.}
   The prefix containing only $\bot$ trivially satisfies both statements.
   
   \emph{Step case.}
   We prove both statements separately.
     Clearly $e < e$ does not hold, as
     every event introduced by \cref{d:finite.prefix} is a causal successor of only
     events that were already present in the unfolding prefix.
     Furthermore, the insertion of an event does not change
     the causal relations existing in the preceding unfolding prefix.
     The relation $<$ is also transitive, as the history of a configuration is
     causally closed.

     As for the second statement, we prove it
     by contradiction.
     Assume that $e \cfl e$ and that $e$ has been inserted into $\ppref$
     by applying \cref{d:finite.prefix} to the
     prefix~$\ppref'$.  Clearly, $e \notin \mathcal{D}_{\ppref',e}$, so the
     conflict has not been inserted when extending $\ppref$ with $e$.
     It must be the case, then, that \cref{d:finite.prefix}
     has inserted another event $e'$ in $E$ after inserting $e$,
     and that $e' \in \causes e$ and $e' \cfl e$.
     This is also not possible since, by definition,
     when inserting $e'$ on a prefix $\ppref''$ no causal successor of $e'$ can
     be present in $\mathcal{D}_{\ppref'',e'}$.

   Assume now that $\ppref$ is not finite.
   Then it is the union of an infinite family of finite unfolding prefixes, each
   one of them satisfy the above.
   We prove again both statements separately.

     First statement.
     For any event $e \eqdef \tup{t,H} \in E$, necessarily $e < e$ cannot hold,
     as $e$ comes from some of the finite prefixes.
     Now, if $e$ belongs to several finite prefixes, by construction they agree
     on which events are causal predecessors of $e$.
     If the union contains a cycle
     \[
     e_1 < e_2 < \ldots < e_n < e_1,
     \]
     then all $n$ events are present in any finite prefix to which $e_n$
     belongs. As a result all of them are in $\causes{e_n}$, which is clearly
     impossible.

     Second statement.
     It cannot be the case that $e \cfl e$ in $\ppref$ but
     $\lnot (e \cfl e)$ in any finite prefix that gives rise to $\ppref$, by
     definition of $\union \cdot$.
     So since $\lnot (e \cfl e)$ holds for any finite prefix, then
     $\lnot (e \cfl e)$ holds for $\ppref$.
\end{proof}

We now need to prove some facts about $\union \cdot$.

\begin{lem}
\label{l:f.finite}
If $F$ is a finite set of unfolding prefixes constructed by
\cref{d:finite.prefix},
then $\union F$ is also a finite prefix constructed by
\cref{d:finite.prefix}.
\end{lem}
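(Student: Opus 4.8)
The plan is to reduce the statement to the case $|F| = 2$, and then to exhibit the union of two finite prefixes as the outcome of finitely many applications of the inductive clause of \cref{d:finite.prefix}. For the reduction, note that, modulo the canonical identification of equal events $\tup{t,H}$, the operator $\union\cdot$ is associative and satisfies $\union{\set\ppref} = \ppref$: in each case the two sides carry the same events, the same causality, the same labelling, and the conflict relation that is, in both, the $\subseteq$-minimal one satisfying \cref{e:les2} together with the clause \cref{e:union} over that data. Hence, writing $F = \set\ppref \cup F'$, we get $\union F = \union{\set{\ppref, \union{F'}}}$, and an induction on $|F|$ leaves only $|F| = 2$ to treat.

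For the binary case I would fix finite prefixes $\ppref_1 = \tup{E_1,<_1,\cfl_1,h_1}$ and $\ppref_2 = \tup{E_2,<_2,\cfl_2,h_2}$ built by \cref{d:finite.prefix} and enumerate $E_2 \setminus E_1 = \set{e_1,\dots,e_m}$ so that $e_i <_2 e_j$ implies $i < j$ — possible since $E_2$ is finite and $<_2$ is, by \cref{l:pref.les}, a strict partial order. Put $\les_0 \eqdef \ppref_1$ and let $\les_i$ be obtained from $\les_{i-1}$ by the inductive step of \cref{d:finite.prefix} applied with the new event $e_i = \tup{t_i,H_i}$, where $H_i$ is the set of causal predecessors of $e_i$ in $\ppref_2$. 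The target is the invariant that $\les_i$ equals the restriction of $\union{\set{\ppref_1,\ppref_2}}$ to the downward-closed event set $E_1 \cup \set{e_1,\dots,e_i}$; at $i = m$ this yields $\les_m = \union{\set{\ppref_1,\ppref_2}}$, hence the claim. Equality of events and labels is immediate (equal events have equal histories, so the data agree on $E_1 \cap E_2$), and equality of causality follows similarly because \cref{d:finite.prefix} never alters existing causal relations.

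The first thing to check at step $i$ is legality, \ie $H_i \in \mathcal{H}_{\les_{i-1},\indep,t_i}$. Since $e_i$ was legally inserted while building $\ppref_2$, transition $t_i$ is enabled at $\state{H_i}$ and either $H_i = \set\bot$ or all $<_2$-maximal events of $H_i$ are dependent with $t_i$; these two conditions depend only on $H_i$ with its internal causal order and labels, all of which $\les_{i-1}$ inherits unchanged. Moreover $H_i$ is causally closed in $\les_{i-1}$ because, by the chosen enumeration, every event below $e_i$ has already been added, and $H_i$ is conflict-free in $\les_{i-1}$ because, by the invariant at $i-1$, the conflict relation of $\union{\set{\ppref_1,\ppref_2}}$ restricted to $E_2$ coincides with $\cfl_2$, in which $H_i$ is conflict-free (using that causal predecessors of an $E_2$-event again lie in $E_2$).

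The main obstacle is the induction step of the invariant for the conflict relation: one must verify that the conflicts of $\union{\set{\ppref_1,\ppref_2}}$ incident to $e_i$ are exactly those generated when adding $e_i$ — the pairs $e_i \cfl e'$ with $e' \in \mathcal{K}_{\les_{i-1},\indep,e_i}$, the inherited pairs $e_i \cfl c$ arising from some $f \in H_i$ with $f \cfl c$ in $\les_{i-1}$, and all pairs further forced by \cref{e:les2}. One inclusion is easy, since each generated conflict is either a direct \cref{e:union}-conflict or forced by \cref{e:les2}. The converse rests on the standard unwinding of a $\subseteq$-minimal \cref{e:les2}-closed relation: in $\union{\set{\ppref_1,\ppref_2}}$ one has $a \cfl c$ iff there are $f, f'$ with $f \le a$, $f' \le c$ and $\set{f,f'}$ satisfying \cref{e:union}; and if both $f, f'$ lie in $E_1$ then the conflict is already recorded in $\cfl_1$, because $\ppref_1$, built by \cref{d:finite.prefix}, inserts the corresponding $\mathcal{K}$-conflict when the later of the two events is created (symmetrically for $E_2$). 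From this, any such minimal pair witnessing a \emph{new} conflict incident to $e_i$ must have $e_i$ as one member, so the conflict is caught either directly via $\mathcal{K}_{\les_{i-1},\indep,e_i}$ or, through causal inheritance along $H_i$, by a conflict already present in $\les_{i-1}$, with the remaining pairs coming from \cref{e:les2}-closure. Carrying out this case analysis carefully — in particular confirming that $\mathcal{K}$ "sees" the correct partner event at step $i$ — is the only genuinely non-routine part; everything else is bookkeeping.
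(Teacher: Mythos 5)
Your proposal is correct and follows essentially the same route as the paper's (sketched) proof: reduce to the binary case via associativity of $\union\cdot$, then induct on $|E_2 \setminus E_1|$ by handling one $<$-maximal event at a time as an application of the inductive clause of \cref{d:finite.prefix}, finally checking that the causality, labelling and conflict data attached by \cref{d:finite.prefix} and by $\union\cdot$ coincide. Your treatment of the conflict relation via the unwinding of the $\subseteq$-minimal \cref{e:les2}-closed relation is in fact more detailed than the paper's sketch, which leaves that step implicit.
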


\begin{proof} (Sketch).
The proof proceeds by induction on the size $n$ of $F$.
If $n = 1$ then it is easy to see that the union is a finite prefix (observe
that $\union \cdot$ ``discards'' the original conflict relation and substitutes
it for a new one).

The inductive step reduces to showing that the union of two prefixes is a
prefix, as
\[
\union{F' \cup \set \ppref} = \union{\union{F'} \cup \set \ppref}.
\]
To show this, let
$\ppref_1 \eqdef \tup{E_1, <_1, \cfl_1, h_1}$
and
$\ppref_2 \eqdef \tup{E_2, <_2, \cfl_2, h_2}$
be two unfolding prefixes.
To show that $\union{\set{\ppref_1, \ppref_2}}$ is an unfolding prefix
we proceed again by induction in the size
$m$ of $E_2 \setminus E_1$.
If $m=0$ then $\ppref_2 \ispref \ppref_1$ and we are done.
If not one can select a $<$-maximal event $e \eqdef \tup{t,H}$ from $E_2 \setminus E_1$,
remove it from $\ppref_2$, and the resulting prefix
$\ppref'_2$ is such that
$\ppref_3 \eqdef \union{\ppref_1, \ppref'_2}$ is a finite prefix generated by
\cref{d:finite.prefix}.
Now \cref{d:finite.prefix} can extend $\ppref_3$ with~$e$,
as $H$ is by hypothesis a configuration of $\ppref_3$ that enables $t$ and so
on.
Finally, one shows that the causality, label, and conflict relation that
\cref{d:finite.prefix} and the definition of $\union \cdot$ will attach to~$e$
coincide.
\end{proof}

Next we show that every configuration of every unfolding prefix
corresponds to some Mazurkiewicz trace of the system:

\begin{lem}
\label{l:soundun}
Let $\ppref$ be an unfolding prefix of~$M$ under~$\indep$.
Given any configuration~$C$ of~$\ppref$,
it holds that $\inter C \subseteq \runs M$.
Furthermore,
for any two runs $\sigma_1, \sigma_2 \in \inter C$,
we have $\state{\sigma_1} = \state{\sigma_2}$.
\end{lem}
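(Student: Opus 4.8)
We plan to argue by strong induction on $|C|$, establishing simultaneously that (a)~$\inter C\subseteq\runs M$ and (b)~all runs in $\inter C$ reach one and the same state --- which then legitimately receives the name $\state C$.  Since every configuration is finite this induction is well founded; it also dovetails with the inductive construction of the unfolding, in that at the moment \cref{d:finite.prefix} creates an event $e\eqdef\tup{t,H}$, the value ``$\state H$'' to which the history condition refers is already defined, $H$ being a configuration with $|H|<|[e]|$.  The base case is $C=\set\bot$, the only configuration of size $1$ since $\bot$ lies in every non-empty configuration: here $\inter C=\set\varepsilon$ and $\varepsilon$ is the run reaching $\tilde s$; the case $C=\emptyset$ is analogous.

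For the inductive step assume $|C|\ge 2$.  Two facts will be used throughout.  First, deleting a $<$-maximal event from a configuration leaves a configuration, and the causes $\causes e$ of any event form a configuration.  Second, the key structural property: \emph{in any unfolding prefix, two distinct $<$-incomparable events that are not in conflict carry independent transition labels.}  This holds because the conflict relation of every unfolding prefix contains every pair of events $e',e''$ with $e'\notin[e'']$, $e''\notin[e']$ and $h(e')\depen h(e'')$: for prefixes built via $\union\cdot$ this is exactly condition~\cref{e:union}, and for prefixes built by \cref{d:finite.prefix} it is forced by the construction of $\cfl$ (equivalently, by \cref{l:f.finite} every finite prefix equals $\union\set\ppref$, whose conflict is the one generated by~\cref{e:union}).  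In particular, whenever $e\eqdef\tup{t,H}$ is a $<$-maximal event of the conflict-free set $C$, every event $f\in(C\setminus\set e)\setminus H$ is $<$-incomparable with $e$ (not below it, by maximality; not above it, for otherwise $f\in\causes e=H$), hence $h(f)\indep t$.

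For claim~(a), let $\sigma\in\inter C$ be the label word of a linearization $c_1\ldots c_k$ of $C$, and put $e\eqdef c_k$; since $e$ is the last element of a linearization it is $<$-maximal in $C$, and writing $e=\tup{t,H}$ we have $H=\causes e$.  The set $C'\eqdef C\setminus\set e$ is a configuration and $c_1\ldots c_{k-1}$ linearizes $C'$, so its labels spell a run (by~(a) for $C'$) reaching $\state{C'}$ (by~(b) for $C'$).  It remains to show $t=h(e)\in\enabl{\state{C'}}$.  By the history condition $t\in\enabl{\state H}$.  Choose a linearization of $C'$ of the form ``a linearization of $H$ followed by a linearization of $C'\setminus H$''; this is legal because no event of $C'\setminus H$ lies below an event of $H$ (such an event would lie below $e$, hence belong to $H$).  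Its label word is $\rho_H.\rho$ with $\rho_H\in\inter H$ reaching $\state H$ and $\rho_H.\rho$ reaching $\state{C'}$ (by the hypothesis for $H$ and $C'$), so $\rho$ --- a word all of whose transitions are independent of $t$ --- leads from $\state H$ to $\state{C'}$.  Applying the first clause of commutativity along $\rho$, whose intermediate states are all reachable, propagates enabledness of $t$ and gives $t\in\enabl{\state{C'}}$; thus $\sigma\in\runs M$.  For claim~(b), I invoke the classical fact that any two linear extensions of a finite poset are joined by a chain of transpositions of adjacent $<$-incomparable elements, so it suffices to treat one transposition: it swaps two adjacent events $c_i,c_{i+1}$ of $C$, which are $<$-incomparable and hence, $C$ being conflict-free, carry independent labels $a,b$; writing $r$ for the run spelled by $c_1\ldots c_{i-1}$ and $s\eqdef\state r$ for its (reachable) state, the two words $r.a.b.w$ and $r.b.a.w$ agree after $r$, because $a$ and $b$ commute at $s$ and transitions are deterministic partial functions, so both are runs reaching the same state.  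Since every word in the chain is a run by~(a), induction along the chain yields~(b).

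I expect the delicate point to be the structural property on conflicts together with the well-foundedness of this mutual recursion --- in particular checking that ``$\state H$'' as used inside \cref{d:finite.prefix} is exactly the state furnished by claim~(b) for the smaller configuration $H$.  Infinite unfolding prefixes (\cref{d:prefix}) add only bookkeeping: one either re-runs the argument using~\cref{e:union} directly for the conflict property, or first replaces $\ppref$ by a finite sub-prefix that still contains $C$ --- obtained from \cref{l:f.finite} --- noting that restricting a prefix to a causally-closed set of events changes neither its causality nor, by minimality of $\cfl$, its conflict relation on those events, hence changes neither $\inter C$ nor the states its runs reach.
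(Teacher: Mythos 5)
Your proof is correct, and it rests on the same two pillars as the paper's: the history condition supplying $t\in\enabl{\state H}$, and the structural fact that concurrent (incomparable, non-conflicting) events carry independent labels, which you justify correctly from $\mathcal{K}_{\les,\indep,e}$ and \cref{e:union}. The organization differs in a way worth noting. The paper inducts on the inductive construction of the prefix itself (via $\ispref$), so the distinguished event is the \emph{last-added} one; it must then handle that event sitting at an arbitrary position $i$ of the interleaving and commute it both leftwards past the $f_j$'s and rightwards past the $g_j$'s, and it dispatches the second claim (equality of states) with a brief ``straightforward using the arguments above.'' You instead induct on $|C|$, always peel the \emph{last} event of the chosen linearization (necessarily $<$-maximal), and cleanly separate the two claims: claim~(a) becomes a pure enabledness-propagation argument along $\rho$ using only the first clause of commutativity, and claim~(b) is discharged by the standard lemma that linear extensions of a finite poset are connected by adjacent transpositions of incomparable elements, using the second clause. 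This buys a more modular argument and makes fully explicit the step the paper leaves informal; the price is that you must carry both claims simultaneously through the induction and check that $\state H$ is well-defined for the strictly smaller configuration $H$, which you do. Your reduction of the infinite-prefix case to a finite sub-prefix via \cref{l:f.finite} matches the paper's. I see no gap.
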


\begin{proof}
Let $\ppref \eqdef \tup{E, <, {\cfl}, h}$ be the prefix,
with $h \colon E \to T$.
Let~$C$ be a configuration of $\ppref$.
In this proof we will assume that $\ppref$ is finite.
This is because, of the following two facts:
\begin{itemize}
\item
  Assume that $\ppref = \union F$,
  where $F \eqdef \set{\ppref_1, \ppref_2, \ldots}$
  is an infinite collection of finite prefixes.
  Only finitely many prefixes in~$F$ contain events of~$C$, as $C$ is finite.
\item
  By \cref{l:f.finite}, the $\union \cdot$ of finitely many prefixes is a finite
  prefix generated by \cref{d:finite.prefix}.
\end{itemize}
So if $\ppref$ is infinite, by the above, we can find a finite prefix $\ppref'$,
generated by \cref{d:finite.prefix}, and which contains~$C$.
Since the arguments we make in the sequel only concern events in~$C$,
proving the lemma in $\ppref'$ is equivalent to proving it in $\ppref$.

So \wlogg we assume that $\ppref$ is a finite unfolding.
The proof is by structural induction on the set of unfolding prefixes
ordered by the prefix relation $\ispref$.

\emph{Base case}.
Assume that $\ppref$ has been produced by the first rule of
\cref{d:finite.prefix}.
Then $E = \set{\bot}$ and the lemma trivially holds.

\emph{Inductive step}.
Assume $\ppref$ that has been produced by the application of the second rule of
\cref{d:finite.prefix} to the unfolding prefix $\ppref'$,
and let $e$ be the only event in $\ppref$ but not in $\ppref'$.
Also, assume that the lemma holds for $\ppref'$.

Only two things are possible:
$e \in C$ or $e \notin C$.
In the second case, $C$ is a configuration of $\ppref'$ and we are done,
so assume that $e \in C$.
Necessarily $e$ is a \mbox{$<$-maximal} event in $C$.
Let $\sigma \in \inter C$ be an interleaving of $C$,
and let $C \eqdef \set{e_1, \ldots, e_n}$.
\Wlog, assume that $\sigma$ is of the form
\[
\sigma = h(e_1), \ldots, h(e_n)
\]
and that $e_i = e$.
Clearly, the causes $\causes e$ of $e$ are a subset of the events
$\set{e_1, \ldots, e_{i-1}}$.
Since, by definition of $\inter \cdot$,
$\set{e_1, \ldots, e_{i-1}}$ is a configuration and it does not
include~$e$, it is necessarily a configuration of $\ppref'$.
Thus, by applying the induction hypothesis we know that the sequence
\[
h(e_1), \ldots, h (e_{i-1})
\]
is an execution of $M$ and produces the same global state as another
execution that first fires all events in $\causes e$ and then all remaining
events in 
$\set{e_1, \ldots, e_{i-1}}$.
This means that $\sigma$ is an execution of $M$ iff
the sequence
\[
\sigma' \eqdef \sigma'' . h(f_1) \ldots h(f_k) . h(e) . h(g_1) \ldots h(g_l)
\]
is an execution of $M$,
where $\sigma'' \in \inter{\causes e}$,
$\set{f_1, \ldots, f_k} = \set{e_1, \ldots, e_{i-1}} \setminus \causes e$,
and
$g_1 = e_{i+1}$, \ldots, $g_l = e_n$.

Now we will show that the sequence
$\sigma'' . h(f_1) \ldots h(f_k) . h(e)$,
which is a prefix of $\sigma'$,
is an execution.
From \cref{d:finite.prefix} we know that $\sigma''$ enables $h(e)$, and from the
induction hypothesis we also know that $\sigma''$ enables $h(f_1)$.
Since $\lnot f_1 \cfl e$ and $f_1 \notin \causes e$,
from \cref{d:finite.prefix} we know that
$h(f_1) \indep h(e)$, \ie, the transitions associated to both events
commute (at all states).
Since both $h(f_1)$ and $h(e)$ are enabled at $\state{\sigma''}$, then
$\sigma'' . h(f_1) . h(e)$ is a run.
Again, the run
$\sigma'' . h(f_1)$ enables both $h(e)$ and $h(f_2)$, and for similar
reasons $h(e) \indep h(f_2)$, so we know that
$\sigma'' . h(f_1) . h(f_2) . h(e)$ is a run.
Iterating this argument $k$ times one can prove that
\[
\tilde \sigma \eqdef \sigma'' . h(f_1) \ldots h(f_k) . h(e)
\]
is indeed an execution.

The next step is proving that the execution $\tilde \sigma$ can be
continued by firing the sequence of transitions
$h(g_1), \ldots, h(g_l)$.
The argument here is quite similar as before, but slightly different.
It is easy to see that $h(e) \indep h(g_j)$ for
$j \in \set{1, \ldots, l}$.
Since $\tilde \sigma$ enables both $h(e)$ and $h(g_1)$, and both commute at
$\state{\tilde \sigma}$, then necessarily
$\tilde \sigma . h(e) . h(g_1)$ is an execution and reaches the same state as
the execution
$\tilde \sigma . h(g_1) . h(e)$.
Iterating this argument $l$ times one can show that, similarly,
$\tilde \sigma . h(e) . h(g_1) \ldots h(g_l)$ is an execution and reaches
the same state as the execution
$\tilde \sigma . h(g_1) \ldots h(g_l) . h(e)$.
This has shown that $\sigma$ is indeed an execution.

The lemma also requires to prove that any two executions in $\inter C$
reach the same state.
This is straightforward to show using the arguments we have introduced
above.
We have already shown that any linearization of all events in $C$
is \mbox{$h$-labelled} by an execution of $M$ that reaches the same state
as the
execution that labels any other linearization of the same events that fires
$e$ last in the sequence.
Using this fact and the induction hypothesis it is very simple to complete
the proof.
\end{proof}

\begin{lem}
   \label{l:complete.lattice}
   For any set $F$ of unfolding prefixes, $\union F$ is the least-upper bound
   of~$F$ with respect to the order $\ispref$.
\end{lem}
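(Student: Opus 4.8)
The plan is to discharge three obligations: that $\union{F}$ is itself an unfolding prefix (so that the statement makes sense in the poset of prefixes ordered by $\ispref$), that $\union{F}$ is an upper bound of $F$, and that $\union{F}$ lies below every upper bound. I would settle the first obligation by noting that if $\ppref = \union{G}$ for a family $G$ of finite prefixes, then $\union{(\set\ppref \cup H)} = \union{(G \cup H)}$ for any family $H$: both sides carry the same events, causality and labelling by construction, and the conflict relation of a $\union$ is uniquely determined by these data as the $\subseteq$-minimal relation closed under \cref{e:les2} and containing every pair for which \cref{e:union} holds. Iterating, any $F$ can be replaced by a (finite or infinite) family of \emph{finite} prefixes with the same union; if that family is infinite \cref{d:prefix} applies, if finite \cref{l:f.finite} applies, and $\union{F}$ is a genuine LES by \cref{l:pref.les}.

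The backbone of the other two obligations is the stability of canonical names: every event $e$ carries the name $\tup{t,H}$ with $H = \causes e$, so its causes $\causes e$, its local configuration $[e]$ and its label $h(e)=t$ are literally the same in every prefix that contains $e$. Using this, for each $\ppref_i = \tup{E_i,<_i,\cfl_i,h_i}\in F$ the inclusion $E_i\subseteq E$ is immediate; for $e,e'\in E_i$, $e <_{\union{F}} e'$ means $e\in\causes{e'}$, hence $e<_i e'$, and conversely, so $<_i$ is the restriction of $<_{\union{F}}$; and if $e'<_{\union{F}} e$ with $e\in E_i$ then $e'\in\causes e\subseteq E_i$, so $E_i$ is downward closed in $\union{F}$. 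The very same three arguments, applied verbatim to an arbitrary upper bound $\mathcal{R}=\tup{E_\mathcal{R},<_\mathcal{R},\cfl_\mathcal{R},h_\mathcal{R}}$ of $F$, yield $E\subseteq E_\mathcal{R}$, that $<$ (which is $\bigcup_i <_i$) is the restriction of $<_\mathcal{R}$, and that $E$ is downward closed in $\mathcal{R}$.

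For the conflict clauses I would first prove a sub-lemma: in \emph{any} unfolding prefix $\tup{E,<,\cfl,h}$, $\cfl$ equals the $\subseteq$-minimal relation on $E$ closed under \cref{e:les2} and containing every pair satisfying \cref{e:union}. For infinite prefixes this is the definition of $\union{\cdot}$; for finite ones it is an induction on \cref{d:finite.prefix}, the point being that each freshly introduced event $e$ is $<$-maximal, so the pairs contributed through $\mathcal{K}_{\cdot,\indep,e}$ are exactly the new pairs satisfying \cref{e:union}, and the only further conflicts forced on the resulting LES are their \cref{e:les2}-consequences. Granting this sub-lemma together with name stability and downward-closedness, the conflict clause of $\ispref$ becomes formal: the restriction of a \cref{e:les2}-closed, \cref{e:union}-respecting relation on $E$ (\resp $E_\mathcal{R}$) to the downward-closed, name-consistent subset $E_i$ (\resp $E$) is again such a relation, so minimality on both sides forces $\cfl_i = \cfl_{\union{F}}\cap(E_i\times E_i)$ and $\cfl_{\union{F}} = \cfl_\mathcal{R}\cap(E\times E)$. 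This gives $\ppref_i\ispref\union{F}$ for all $i$ and $\union{F}\ispref\mathcal{R}$, \ie $\union{F}$ is the least upper bound.

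I expect the sub-lemma on conflict to be the main obstacle. Conflict is introduced operationally in \cref{d:finite.prefix} (event by event, via the $\mathcal{K}$-sets) but declaratively in the $\union{}$-construction (as a $\subseteq$-minimal closure), and the real work is to show these two descriptions coincide — in particular, that \cref{d:finite.prefix} never creates a conflict outside the \cref{e:les2}-closure of the pairs of \cref{e:union}, and that this closure restricts correctly to downward-closed, name-consistent subsets. Once this is in hand, everything else reduces to bookkeeping around canonical names.
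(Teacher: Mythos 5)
Your proposal is correct and follows essentially the same route as the paper's proof: the same decomposition into upper bound and least element, with each clause of $\ispref$ checked via the stability of canonical names (for events, causality, downward closure and labelling) and via the characterization of conflict as the \cref{e:les2}-closure of the \cref{e:union}-pairs. The only difference is organizational — you isolate as an explicit sub-lemma the coincidence of the operational ($\mathcal{K}$-set) and declarative (minimal closure) descriptions of conflict, and you explicitly discharge the well-definedness of $\union F$ as a prefix, both of which the paper uses implicitly rather than proving separately.
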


\begin{proof}
   Let $F \eqdef \bigcup_{1 \le i} \ppref_i$,
   where
   $\ppref_i \eqdef \tup{E_i, <_i, \cfl_i, h_i}$ for $1 \le i$.
   Let $\ppref \eqdef \union F$ be their union, where
   $\ppref \eqdef \tup{E, <, \cfl, h}$.
   We need to show that
   \begin{itemize}
   \item (upper bound) $\ppref_i \ispref \ppref$;
   \item (least element) for any unfolding prefix $\ppref'$ such that $\ppref_j \ispref \ppref'$
   holds for all $1 \le j$,
   we have that $\ppref \ispref \ppref'$.
   \end{itemize}

   We start showing that $\ppref$ is an upper bound. Let $\ppref_i \in F$ be an
   arbitrary unfolding prefix. We show that $\ppref_i \ispref \ppref$:
   \begin{itemize}
   \item
     Trivially $E_i \subseteq E$.

   \item
     ${<_i} \subseteq {<} \cap (E_i \times E_i)$. Trivial.
   \item
     ${<_i} \supseteq {<} \cap (E_i \times E_i)$.
     Assume that $e < e'$ and that both $e$ and $e'$ are in $E_i$.
     Then there is some $1 \le j$ such that $e <_j e'$,
     and both $e$ and $e'$ are in $E_j$.
     Assume that $e \eqdef \tup{t,H}$.
     Since $\ppref_j$ is a finite prefix
     constructed by \cref{d:finite.prefix},
     then necessarily $e' \in H$.
     As a result, \cref{d:finite.prefix} must have found that
     $e'$ was in $H$ when adding $e$ to the prefix that eventually became
     $\ppref_i$, and consequently $e' < e$.

   \item
     ${\cfl_i} \subseteq {\cfl} \cap (E_i \times E_i)$. Trivial.
   \item
     ${\cfl_i} \supseteq {\cfl} \cap (E_i \times E_i)$.
     Assume that $e \cfl e'$ and that $e, e' \in E_i$.
     We need to prove that $e \cfl_i e'$.
     Assume \wlogg that $e'$ was added to $\ppref_i$ by \cref{d:finite.prefix}
     after~$e$.
     If $e$ and $e'$ satisfy \cref{e:union}, then trivially
     $e \cfl_i e'$.
     If not, then assume \wlogg that
     there exists some $e'' < e'$ such that
     $e \cfl e''$, and such that $e$ and $e''$ satisfy \cref{e:union}.
     Then $e \cfl_i e''$ and,
     since $\ppref_i$ is a LES 
     then we have $e \cfl_i e'$.
   \item
     $h_i = h \cap (E_i \times E_i)$. Trivial.
   \end{itemize}

   We now focus on proving that $\ppref$ is the least element among the upper
   bounds of $F$. Let $\ppref' \eqdef \tup{E', <', \cfl', h'}$ be an upper bound of all elements of $F$.
   We show that $\ppref \ispref \ppref'$.
   \begin{itemize}
   \item
     Since $E$ is the union of all $E_i$ and all $E_i$ are by hypothesis
     in~$E'$, then necessarily $E \subseteq E'$.

   \item
     ${<} \subseteq {<'} \cap (E \times E)$.
     Assume that $e < e'$. By definition $e$ and $e'$ are in $E$, so we only
     need to show that $e <' e'$.
     We know that there is some $1 \le i$ such that $e <_i e'$.
     We also know that $\ppref_i \ispref \ppref'$, which implies that $e <' e'$.
   \item
     ${<} \supseteq {<'} \cap (E \times E)$.
     Assume that $e < e'$ and that $e,e' \in E$.
     We know that there is some $1 \le i$ such that $e,e' \in E_i$.
     We also know that $\ppref_i \ispref \ppref'$, which implies that
     ${<_i} = {<'} \cap (E_i \times E_i)$.
     This means that $e <_i e'$, and so $e < e'$.

   \item
     $h = h' \cap (E \times E)$. Trivial.

   \item
     ${\cfl} \subseteq {\cfl'} \cap (E \times E)$.
     Assume that $e \cfl e'$.
     Then $e$ and $e'$ are in $E$.
     Two things are possible. Either $e,e'$ satisfy \cref{e:union} or, \wlogg,
     there exists some $e'' < e'$ such that $e$ and $e''$ satisfy
     \cref{e:union}.
     In the former case, using items above, it is trivial to show
     that $\lnot (e <' e')$,
     that $\lnot (e' <' e)$, and
     that $h'(e) \depen h'(e')$.
     This means that $e \cfl' e'$.
     In the latter case its the same.

   \item
     ${\cfl} \supseteq {\cfl'} \cap (E \times E)$. Trivial.
   \end{itemize}
\end{proof}

\completeun*
\begin{proof}
Let $F$ be the set of all, finite or infinite,
unfolding prefixes of $\unf{M,\indep}$.
By \cref{d:prefix} we have that $\unf{M,\indep} \eqdef \union F$
is an unfolding prefix.
By \cref{l:complete.lattice} we know it is $\ispref$-maximal and unique.

Observe that for a run that fires no transition, \ie $\sigma = \varepsilon
\in T^*$, we may find the empty configuration $\emptyset$ or the
configuration $\set{\bot}$, and in both cases $\sigma$ is an
interleaving of the configuration.
Hence the restriction to non-empty runs.

Assume that $\sigma$ fires at least one transition.
The proof is by induction on the length $|\sigma|$ of the run.

\emph{Base Case}.
If $\sigma$ fires one transition~$t$, then~$t$ is enabled at $\tilde s$,
the initial state of~$M$.
Then $\set \bot$ is a history for~$t$,
as necessarily $\state{\set \bot}$ enables~$t$.
This means that $e \eqdef \tup{t, \set \bot}$ is an event of~$\unf{M,\indep}$, and
clearly $\sigma \in \inter{\set{\bot, e}}$.
It is easy to see that no other event $e'$ different than $e$ but such that
$h(e) = h(e')$ can exist in $\unf{M,\indep}$ and satisfy that the history $\causes{e'}$
of~$e'$ equals the singleton $\set \bot$.
The representative configuration for $\sigma$ is therefore unique.

\emph{Inductive Step}.
Consider $\sigma = \sigma'.t_{k+1}$, with $\sigma' = t_1.t_2 \ldots t_k$.
By the induction hypothesis, we assume that there exist a unique configuration $C'$ such that $\sigma' \in \inter{C'}   $.
By \cref{l:soundun}, all runs in $\inter{C'}$ reach the same state $s$ and $\sigma'$ is such a run.
Hence, $t_{k+1}$ is enabled at state $s$. If all $<$-maximal events $e \in
\max(C'): h(e)$ interfere with $t_{k+1}$, then
$C'$ is a valid configuration $H$ and by construction (second condition of
\cref{d:finite.prefix}) there is a configuration $C=C' \cup \set{e'}$ with
$e'=\tup{t_{k+1},H}$.
Otherwise, we construct a valid $H$ by considering sub-configurations of $C'$
removing a maximal event $e \in \max(C'): h(e)$
does not interfere with $t_{k+1}$.
We always reach a valid $H$ since $C'$ is a finite set and $\{\bot\}$ is always a valid $H$.
Considering $C=H \cup \{e'\}$ with $e'=\tup{t,H}$, by construction (second
condition of \cref{d:finite.prefix}) we have that $\forall e_H \in H: \neg(e'\cfl e_H)$ and
$\forall e_{\hat H} \in C'\setminus H: \neg(e\cfl e_{\hat H})$ (otherwise these
events
would be in $H$). Hence, $C' \cup \{e\}$ is a configuration.
\end{proof}

\section{Proofs: Exploration Algorithm}
\label{x:lazy}

\newcommand\calls {\mathrel{\triangleright}}
\newcommand\callsl {\mathrel{\triangleright_l}}
\newcommand\callsr {\mathrel{\triangleright_r}}


For the rest of this section, as we did in the main sections of the paper,
we fix a system~$M \eqdef \tup{\Sigma, T, \tilde s}$
and an unconditoinal independence relation $\indep$ on~$M$.
We assume that $\reach M$ is finite.
Let $\unf{\indep,M} \eqdef \tup{E, <, {\cfl}, h}$ be the unfolding of~$M$
under~$\indep$, which we abbreviate as~$\uunf$.
For this section, unless otherwise state, we furthermore assume that
that $\uunf$ is finite, \ie, that all computations of~$M$ terminate.

\Cref{a:a1} is recursive, each call to \explore{$C,D,A$} 
yields either no recursive call, if the function returns at
\cref{l:a1ret},
or one single recursive call (\cref{l:a1l}),
or two (\cref{l:a1l} and \cref{l:a1r}).
Furthermore, it is non-deterministic, as $e$ is chosen from either the set
$\en C$ or the set $A \cap \en C$, which in general are not singletons.
As a result, the configurations explored by it
may differ from one execution to the next.

For each system $M$ we define the \emph{call graph} explored by
\cref{a:a1} as a directed graph $\tup{B, {\calls}}$ representing the actual
exploration that the algorithm did on the state space. Different executions
will in general yield different call graphs.

The nodes~$B$ of the call graph are
4-tuples of the form $\tup{C,D,A,e}$, where $C,D,A$ are the parameters of
a recursive call made to the funtion \explore{$\cdot,\cdot,\cdot$}, and
$e$ is the event selected by the algorithm immediately before \cref{l:a1l}.
More formally,~$B$ contains exactly all tuples $\tup{C,D,A,e}$ satisfying
that
\begin{itemize}
\item
   $C$, $D$, and $A$ are sets of events of the unfolding $\uunf$;
\item
   during the execution of \explore{$\emptyset,\emptyset,\emptyset$},
   the function \explore{$\cdot,\cdot,\cdot$} has been recursively called
   with $C,D,A$ as, respectively, first, second, and third argument;
\item
   $e \in E$ is the event selected by \explore{$C,D,A$} immediately
   before \cref{l:a1l} if $C$ is not maximal; if $C$ is
   maximal, we define $e \eqdef \bot$.
   \footnote{Observe that in this case, if $C$ is maximal, the execution of
   \explore{$C,D,A$} never reaches \cref{l:a1l}.}
\end{itemize} 
The edge relation of the call graph,
${\calls} \subseteq B \times B$, represents the recursive calls made by
\explore{$\cdot,\cdot,\cdot$}.
Formally, it is the union of two disjoint relations
${\calls} \eqdef {\callsl} \uplus {\callsr}$, defined as follows.
We define that
\[
\tup{C,D,A,e} \callsl \tup{C',D',A',e'}
\text{ ~~~ and that ~~~ }
\tup{C,D,A,e} \callsr \tup{C'',D'',A'',e''}
\]
iff the execution of \explore{$C,D,A$} issues a recursive
call to, \resp,
\explore{$C',D',A'$} at \cref{l:a1l} and
\explore{$C'',D'',A''$} at \cref{l:a1r}.
Observe that $C'$ and $C''$ will necessarily be different
(as $C' = C \cup \set e$, where $e \notin C$, and $C'' = C$), and therefore the
two relations are disjoint sets.
We distinguish the node
\[
b_0 \eqdef \tup{\set\bot,\emptyset,\emptyset,\bot}
\]
as the \emph{initial node}, also called the \emph{root node}.
Observe that $\tup{B, {\calls}}$ is by definition a weakly connected
digraph, as there is a path from the node $b_0$ to every other node in~$B$.
Later in this section we will additionally prove that the call graph is
actually a binary tree, where $\callsl$ is the \textit{left-child} relation
and $\callsr$ is the \textit{right child} relation.


\subsection{General Lemmas}
\label{x:general}

\begin{lem}
\label{l:general}
Let $\tup{C,D,A,e} \in B$ be a state of the call graph. We have that
\begin{itemize}
\item
   event $e$ is such that $e \in \en C$;
   \eqtag{e:basic1}
\item
   $C$ is a configuration;
   \eqtag{e:basic2}
\item
   $C \cup A$ is a configuration and $C \cap A = \emptyset$;
   \eqtag{e:basic3}
\item
   $D \subseteq \ex C$;
   \eqtag{e:basic4}
\item
   if $A = \emptyset$, then $D \subseteq \cex C$;
   \eqtag{e:basic5}
\item
   for all $e' \in D$ there is some $e'' \in C \cup A$ such that
   $e' \icfl{} e''$
   \eqtag{e:basic6}
\end{itemize}
\end{lem}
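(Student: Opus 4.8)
The plan is to prove all six invariants simultaneously by induction along the edge relation $\calls = \callsl \uplus \callsr$ of the call graph, starting from the root. The base case is the node with $C = \set\bot$ and $D = A = \emptyset$: here \cref{e:basic2} holds since $\set\bot$ is a configuration, \cref{e:basic3}--\cref{e:basic6} hold vacuously, and \cref{e:basic1} holds because the event selected in the initial call lies in $\en{\set\bot}$ (or, if $\set\bot$ is maximal, by the convention that fixes the selected event of a node on a maximal configuration to $\bot$). For the inductive step I fix a node $b = \tup{C,D,A,e} \in B$ for which all six hold and consider whichever of its successors occur: the left child $b' = \tup{C \cup \set e,\, D,\, A \setminus \set e,\, e'}$ from \cref{l:a1l}, and the right child $b'' = \tup{C,\, D \cup \set e,\, J \setminus C,\, e''}$ from \cref{l:a1r}, where $J$ is an alternative to $D \cup \set e$ after $C$ as in \cref{d:alt}.

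Before the case split I would isolate two elementary facts used throughout. First, $\en C$ and $\cex C$ partition $\ex C$; hence $e \in \en C$ together with the parent's \cref{e:basic5} or \cref{e:basic6} forces $e \notin D$: if $A = \emptyset$ this is immediate from $D \subseteq \cex C$, and otherwise a witness $e'' \in C \cup A$ with $e \icfl{} e''$ would contradict either $e \in \en C$ (as $C \cup \set e$ is conflict-free) or the fact that $C \cup A$ is a configuration. Second, whenever $X \cup Y$ is a configuration and $Y \not\subseteq X$, every $<$-minimal event of $Y \setminus X$ lies in $\en X$; this guarantees that the choice of $e$ from $A \cap \en C$ on \cref{l:a1choose} is never vacuous when $A \neq \emptyset$, and it is the engine behind \cref{e:basic1}.

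For the left child, $e \in \en C$ gives that $C \cup \set e$ is a configuration (\cref{e:basic2}); a short computation gives $C' \cup A' = C \cup A$ when $A \neq \emptyset$ (using $e \in A$) and $C' \cup A' = C \cup \set e$ when $A = \emptyset$, together with $C' \cap A' = \emptyset$ (\cref{e:basic3}); \cref{e:basic4} follows from $D \subseteq \ex C$ and $e \notin D$; and \cref{e:basic5}, \cref{e:basic6} are inherited since any witness lying in $C$ (\resp $C \cup A$) still lies in $C'$ (\resp $C' \cup A'$). For the right child I unfold \cref{d:alt}: \cref{e:alt2} gives that $C \cup J = C' \cup A'$ is a configuration with $C' \cap A' = \emptyset$ (\cref{e:basic3}), \cref{e:basic2} is unchanged, and \cref{e:basic4} holds since $D \subseteq \ex C$ and $e \in \en C \subseteq \ex C$; \cref{e:alt1} says precisely that every $d \in D \cup \set e$ is in immediate conflict with some event of $C \cup J$, which is \cref{e:basic6}, and combined with $D \cup \set e \subseteq \ex C$ it yields \cref{e:basic5} in the case $J \subseteq C$. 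Finally, \cref{e:basic1} for the child's own selected event holds because, once \extend has put $\ex{C'}$ into $U$, the set from which the algorithm picks is a non-empty subset of $\en{C'}$ (non-emptiness, when $A' \neq \emptyset$, coming from the minimal-event fact applied to $C' \cup A'$), or $C'$ is maximal and the $\bot$-convention applies.

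The step I expect to be the main obstacle is not any single computation but the interlocking of \cref{e:basic1} with \cref{e:basic3}: one must keep $A$ ``pointed'', i.e.\ always containing an event enabled at $C$, to know that the choices never fail, which forces \cref{e:basic3} and the minimal-event fact to be threaded through the induction together with \cref{e:basic1}; and one must check that after \extend the relevant enabled and conflicting events are actually in $U$, so that the $U$-restricted quantities $\en C \cap U$ and $\ficfl[U] e$ agree with their unrestricted counterparts at the moments they are queried. The right-child case looks richest but, once \cref{d:alt} is expanded, is essentially a transcription; the fussier part is the left child, where the partition $\ex C = \en C \uplus \cex C$ and the fact $e \notin D$ must be combined just so to carry \cref{e:basic4} and \cref{e:basic5} forward.
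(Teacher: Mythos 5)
Your proposal is correct and follows essentially the same route as the paper: induction along the call-graph paths with a left-child/right-child case split, the partition $\ex C = \en C \uplus \cex C$ to get $e \notin D$ for \cref{e:basic4}, and \cref{e:alt1}/\cref{e:alt2} from \cref{d:alt} to handle the right child. The only differences are organizational — the paper runs a separate induction per item, gets \cref{e:basic1} directly from the choice statement, and isolates the non-vacuousness of the choice on \cref{l:a1choose} into \cref{l:twoconfs} and a follow-up lemma, whereas you fold all of this into one simultaneous induction.
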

\begin{proof}
   To show~\cref{e:basic1} is immediate.
   Observe, in \cref{a:a1},
   that both branches of the ``\emph{if}'' statement where $e$ is picked
   select it from the set~$\en C$.

   All remaining items,
   \cref{e:basic2,e:basic3,e:basic4,e:basic5,e:basic6}, will be
   shown by induction on the length $n \ge 0$ of any path
   \[
   b_0 \calls b_1 \calls \ldots \calls b_{n-1} \calls b_n
   \]
   on the call graph, starting from the initial node
   and leading to 
   $b_n \eqdef \tup{C, D, A, e}$
   (we will later show, \cref{l:c3c4}, that there is actually only one such path).
   For $i \in \set{0, \ldots, n}$ we define
   $\tup{C_i,D_i,A_i,e_i} \eqdef b_i$.

   We start showing \cref{e:basic2}.
   \emph{Base case.}
   $n = 0$ and $C = \set\bot$. The set $\set\bot$ is a configuration.
   \emph{Step.}
   Assume $C_{n-1}$ is a configuration. If
   $b_{n-1} \callsl b_n$,
   then $C = C_{n-1} \cup \set e$ for some event $e \in \en C$, as stated
   in \cref{e:basic1}. By definition, 
   $C$ is a configuration.
   If 
   $b_{n-1} \callsr b_n$,
   then $C = C_{n-1}$. In any case $C$ is a configuration.

   We show \cref{e:basic3}, also by induction on $n$.
   \emph{Base case.}
   $n = 0$. Then $C = \set\bot$ and $A = \emptyset$. Clearly $C \cup A$ is a
   configuration and $C \cup A = \emptyset$.
   \emph{Step.}
   Assume that $C_{n-1} \cup A_{n-1}$ is a configuration and that
   $C_{n-1} \cap A_{n-1} = \emptyset$.
   We have two cases.
   \begin{itemize}
   \item
      Assume that $b_{n-1} \callsl b_n$.
      If $A_{n-1}$ is empty, then $A$ is empty as well. Clearly
      $C \cup A$ is a configuration and $C \cap A$ is empty.
      If $A_{n-1}$ is not empty, then
      $C = C_{n-1} \cup \set e$ and
      $A = A_{n-1} \setminus \set e$, for some
      $e \in A_{n-1}$, and we have
      \[
         C \cup A =
         (C_{n-1} \cup \set e ) \cup (A_{n-1} \setminus \set e ) =
         C_{n-1} \cup A_{n-1},
      \]
      so $C \cup A$ is a configuration as well.
      We also have that $C \cap A = C_{n-1} \cap A_{n-1}$ (recall that
      $e \notin C$), so $C \cap A$ is empty.
   \item
      Assume that $b_{n-1} \callsr b_n$ holds.
      Then we have $C = C_{n-1}$ and also
      $A = J \setminus C_{n-1}$ for some
      $J \in \alternatives{$C_{n-1},D \cup \set e$}$.
      From \cref{e:alt2}
      we know that $C_{n-1} \cup J$ is a configuration.
      As a result,
      \[
         C \cup A =
         C_{n-1} \cup (J \setminus C_{n-1}) =
         C_{n-1} \cup J,
      \]
      and therefore $C \cup A$ is a configuration.
      Finally, by construction of $A$, we clearly have
      $C \cap A = \emptyset$.
   \end{itemize}

   We show \cref{e:basic4}, again, by induction on $n$.
   \emph{Base case}.
   $n = 0$ and $D = \emptyset$. Then \cref{e:basic4} clearly holds.
   \emph{Step.}
   Assume that \cref{e:basic4} holds for $\tup{C_i,D_i,A_i,e_i}$, with
   $i \in \set{0, \ldots, n - 1}$. We show that it holds for $b_n$.
   As before, we have two cases.
   \begin{itemize}
   \item
      Assume that $b_{n-1} \callsl b_n$.
      We have that
      $D = D_{n-1}$ and that
      $C = C_{n-1} \cup \set{e_{n-1}}$.
      We need to show that for all $e' \in D$
      we have $\causes{e'} \subseteq C$ and $e' \notin C$.
      By induction hypothesis we know that
      $D = D_{n-1} \subseteq \ex{C_{n-1}}$, so clearly
      $\causes{e'} \subseteq C_{n-1} \subseteq C$.
      We also have that $e' \notin C_{n-1}$, so we only need to check that
      $e' \ne e_{n-1}$.
      By contradiction,
      if $e' = e_{n-1}$, by \cref{e:basic6} we would have that some event
      in~$C$ is conflict with some other event in~$C \cup A$, which is a
      contradiction to \cref{e:basic3}.
   \item
      Assume that $b_{n-1} \callsr b_n$.
      We have that $D = D_{n-1} \cup \set{e_{n-1}}$, and by hypothesis we
      know that $D_{n-1} \subseteq \ex{C_{n-1}} = \ex C$.
      As for $e_{n-1}$, by \cref{e:basic1} we know that
      $e_{n-1} \in \en{C_{n-1}} = \en C \subseteq \ex C$.
      As a result, $D \subseteq \ex C$.
   \end{itemize}

   We show \cref{e:basic5}.
   By \cref{e:basic4} we know that $D \subseteq \ex C$.
   Assume $A = \emptyset$.
   For each $e' \in D$ we need to prove the existence of some
   $e'' \in C$ with $e' \icfl{} e''$.
   This is exactly what \cref{e:basic6} states.

   We show \cref{e:basic6}, again, by induction on $n$.
   \emph{Base case}.
   $n = 0$ and $D = \emptyset$. The result holds.
   \emph{Step}.
   Assume \cref{e:basic6} holds for
   $\tup{C_{n-1},D_{n-1},A_{n-1},e_{n-1}}$.
   We show that it holds for $b_n$.
   We distinguish two cases.
   \begin{itemize}
   \item
      $b_{n-1} \callsl b_n$.
      Then $D = D_{n-1}$. As a result,
      for any $e' \in D$ there is some $e'' \in C_{n-1} \cup A_{n-1}$
      satisfying $e' \icfl{} e''$.
      But we have that $C_{n-1} \cup A_{n-1} \subseteq C \cup A$,
      so such $e'$ is also contained in $C \cup A$, which shows the result.
   \item
      $b_{n-1} \callsr b_n$.
      Observe that $D = D_{n-1} \cup \set{e_{n-1}}$.
      Let $J \in \alternatives{$C_{n-1},D \cup \set e$}$ be the alternative used to
      construct $A = J \setminus C_{n-1}$.
      By definition \cref{e:alt1} we know that for all
      $e' \in D \setminus \cex{C_{n-1}}$ we can find some $e'' \in J$ with
      $e' \icfl{} e''$. We only need to show that $J \subseteq A \cup C$.
      Observe that this will complete the proof, since for each
      $e' \in D \cap \cex{C_{n-1}}$ we already know that there is some
      $e'' \in C_{n-1} \subseteq C \cup A$ with $e' \icfl{} e''$.
      Now, that $J \subseteq C \cup A$ is obvious:
      $C \cup A = C_{n-1} \cup J \setminus C_{n-1} = C_{n-1} \cup J$.
   \end{itemize}
\end{proof}

The following lemma essentially guarantees that whenever
\cref{a:a1} reaches \cref{l:a1choose}, the set from which $e$ is chosen is
not empty.

\begin{lem}
   \label{l:twoconfs}
   If $C \subseteq C'$ are two finite configurations, then
   $\en C \cap (C' \setminus C) = \emptyset$
   iff
   $C' \setminus C = \emptyset$.
\end{lem}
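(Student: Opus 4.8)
The plan is to prove the two implications separately. The \emph{if} direction is immediate: if $C' \setminus C = \emptyset$, then $\en C \cap (C' \setminus C) = \en C \cap \emptyset = \emptyset$, so there is nothing to do. For the \emph{only if} direction I would argue contrapositively: assuming $C' \setminus C \neq \emptyset$, I would exhibit an event lying in $\en C \cap (C' \setminus C)$, which is enough.

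Concretely, since $C'$ is a finite configuration, causality $<$ restricted to it is well-founded (this is ultimately axiom~\cref{e:les1}), so the non-empty set $C' \setminus C$ has a $<$-minimal element $e$. I claim $e \in \en C$. First, $e \notin C$ by choice of $e$. Second, $\causes e \subseteq C$: because $C'$ is causally closed and $e \in C'$ we get $\causes e \subseteq C'$, and any $e'' < e$ with $e'' \notin C$ would then be an element of $C' \setminus C$ strictly below $e$, contradicting minimality; hence $\causes e \subseteq C$ and therefore $e \in \ex C$. Third, $C \cup \set e \in \conf \uunf$: it is causally closed since $C$ is and $\causes e \subseteq C$, and it is conflict-free because $C \cup \set e \subseteq C'$ and $C'$, being a configuration, is conflict-free. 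Thus $e \in \en C$, and since $e \in C' \setminus C$ we conclude $\en C \cap (C' \setminus C) \neq \emptyset$, completing the proof.

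I do not expect a genuine obstacle here. The only points that warrant a line of care are the existence of the $<$-minimal witness (resting on $C'$ being finite) and the observation that such a witness necessarily has all of its causes already in $C$, which is what makes it an enabled extension rather than an arbitrary future event. The payoff of the lemma is exactly the remark preceding it: it guarantees that when \cref{a:a1} reaches \cref{l:a1choose} (or the unguarded choice above it) after fixing an alternative, the set $A \cap \en C$ it draws $e$ from is non-empty, since $A$ together with $C$ forms a larger configuration than $C$ and hence offers an enabled event in $A \setminus C$.
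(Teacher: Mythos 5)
Your proof is correct and follows essentially the same route as the paper's: both take a $<$-minimal element of $C' \setminus C$, observe its causes lie in $C$ by minimality and causal closure, and note that $C \cup \set{e}$ is a configuration because it sits inside $C'$. You merely spell out a couple of steps the paper leaves implicit.
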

\begin{proof}
If there is some $e \in \en C \cap (C' \setminus C)$,
then $e \notin C$ and $e \in C'$, so $C' \setminus C$ is not empty.
If there is some $e' \in C' \setminus C$, then there is some $e''$ event
that is $<$-minimal in $C' \setminus C$.
As a result, $\causes{e''} \subseteq C$. Since $e'' \notin C$
and $C \cup \set{e''}$ is a configuration
(as $C\cup \set{e''} \subseteq C'$), we have that $e'' \in \en C$.
Then $\en C \cap (C' \setminus C)$ is not empty.
\end{proof}

\begin{lem}
   For any node $\tup{C,D,A,e} \in N$ of the call graph we have that
   $A \ne \emptyset$ implies $\en C \cap A \ne \emptyset$.
\end{lem}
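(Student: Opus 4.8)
The plan is to obtain this as an immediate corollary of the two preceding lemmas. First I would invoke \cref{l:general}, specifically item~\cref{e:basic3}, which guarantees that for every node $\tup{C,D,A,e}$ of the call graph the set $C \cup A$ is a configuration and $C \cap A = \emptyset$. In particular both $C$ and $C' \eqdef C \cup A$ are configurations with $C \subseteq C'$, and the disjointness $C \cap A = \emptyset$ yields the identity $C' \setminus C = A$.

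Next I would apply \cref{l:twoconfs} to the pair $C \subseteq C'$. Reading that lemma in contrapositive form, $\en C \cap (C' \setminus C) \ne \emptyset$ holds whenever $C' \setminus C \ne \emptyset$. Under the hypothesis $A \ne \emptyset$ we have $C' \setminus C = A \ne \emptyset$, and hence $\en C \cap A = \en C \cap (C' \setminus C) \ne \emptyset$, which is exactly the asserted conclusion.

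There is no genuine obstacle here; the only points requiring a word of care are that \cref{l:twoconfs} is stated for \emph{finite} configurations, which is automatic since configurations are finite by definition, and that the replacement of $C' \setminus C$ by $A$ is legitimate, which rests precisely on the disjointness $C \cap A = \emptyset$ supplied by \cref{e:basic3}. This lemma is what makes \cref{l:a1choose} of \cref{a:a1} well defined, i.e.\ it ensures that whenever the algorithm reaches that line the set $A \cap \ena{C}$ from which $e$ is chosen is non-empty.
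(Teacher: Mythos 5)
Your proof is correct and follows exactly the paper's own argument: both derive the claim from \cref{e:basic3} (that $C \cup A$ is a configuration with $C \cap A = \emptyset$, so $(C\cup A)\setminus C = A$) together with \cref{l:twoconfs} applied to $C \subseteq C \cup A$. Your version merely spells out the contrapositive reading and the finiteness side condition more explicitly than the paper does.
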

\begin{proof}
   The result is a consequence of \cref{l:twoconfs} and \cref{e:basic3}.
   Since $C \cup A$ is configuration that includes $C$,
   and $(C \cup A) \setminus C = A$ is not empty,
   then $\en C \cap A$ is not empty.
\end{proof}

\begin{lem}
   Let $b \eqdef \tup{C,D,A,e}$ and $b' \eqdef \tup{C',D',A',e'}$ be
   two nodes of the call graph such that $b \calls b'$.
   Then
   \begin{itemize}
      \item $C \subseteq C'$ and $D \subseteq D'$; \eqtag{e:step1}
      \item if $b \callsl b'$, then $C \subsetneq C'$; \eqtag{e:step2}
      \item if $b \callsr b'$, then $D \subsetneq D'$. \eqtag{e:step3}
   \end{itemize}
\end{lem}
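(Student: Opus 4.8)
The plan is to reason directly from the two recursive calls of \cref{a:a1} that define the edge relations $\callsl$ and $\callsr$, using the invariants collected in \cref{l:general}. So I fix nodes $b \eqdef \tup{C,D,A,e}$ and $b' \eqdef \tup{C',D',A',e'}$ with $b \calls b'$, and split on whether the edge is a left child or a right child.

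If $b \callsl b'$, then by definition \explore{$C,D,A$} issued the call at \cref{l:a1l}, so $C' = C \cup \set e$, $D' = D$, and $A' = A \setminus \set e$. By \cref{e:basic1} the selected event satisfies $e \in \en C$, hence $e \notin C$; thus $C \subsetneq C'$, which proves \cref{e:step2} and the first half of \cref{e:step1}, while $D' = D$ trivially gives $D \subseteq D'$. If instead $b \callsr b'$, then \explore{$C,D,A$} issued the call at \cref{l:a1r}, so $C' = C$, $D' = D \cup \set e$, and $A' = J \setminus C$ for some $J \in \alternatives{$C, D \cup \set e$}$. Here $C \subseteq C'$ and $D \subseteq D'$ are immediate, completing \cref{e:step1}; the remaining work is to show that the second inclusion is strict, i.e., that $e \notin D$.

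To establish $e \notin D$, I would split on whether $A$ is empty at node $b$. If $A = \emptyset$, then $e$ was chosen from $\en C$, and \cref{e:basic5} gives $D \subseteq \cex C$; since $\en C$ and $\cex C$ are disjoint, $e \notin D$. If $A \ne \emptyset$, then $e$ was chosen from $A \cap \en C$, so $e \in A$; assuming $e \in D$, \cref{e:basic6} yields some $e'' \in C \cup A$ with $e \icfl{} e''$, in particular $e \cfl e''$. But $e \in \en C$ makes $C \cup \set e$ a configuration, hence conflict-free, so $e$ conflicts with no event of $C$ and therefore $e'' \in A$; then $e, e'' \in A \subseteq C \cup A$, which is a configuration by \cref{e:basic3} and hence conflict-free, contradicting $e \cfl e''$. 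So $e \notin D$, which gives $D \subsetneq D'$ and proves \cref{e:step3}.

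The only real obstacle here is this conflict-freeness argument for $e \notin D$ in the right-child case; everything else is pure bookkeeping against \cref{l:general}. The one point to be careful about is that \cref{e:basic5,e:basic6} must be invoked for the node $b$ itself rather than for its parent, which is legitimate since \cref{l:general} holds of every node of the call graph.
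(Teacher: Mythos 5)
Your proof is correct and takes essentially the same route as the paper's: read off from the two recursive calls how $C$, $D$, $A$ are updated along a $\callsl$- or $\callsr$-edge. The paper's own proof is a two-line ``then all three statements hold''; your explicit verification of the strictness claims — that $e \notin C$ via \cref{e:basic1}, and that $e \notin D$ in the right-child case via \cref{e:basic5} and the conflict-freeness argument from \cref{e:basic6} and \cref{e:basic3} — supplies precisely the detail the paper leaves implicit, and is sound.
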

\begin{proof}
   If $b \callsl b'$, then
   $C' = C \cup \set e$ and
   $D' = D$.
   Then all the three statements hold.
   If $b \callsr b'$, then
   $C' = C$ and
   $D' = D \cup \set e$.
   Similarly, all the three statements hold.
\end{proof}

\subsection{Termination}
\label{x:termination}

\begin{lem}
   \label{l:finitepath}
   Any path $b_0 \calls b_1 \calls b_2 \calls \ldots$
   in the call graph starting from $b_0$ is finite.
\end{lem}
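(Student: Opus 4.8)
The plan is to attach to each node of the call graph a natural-number \emph{rank} that strictly increases along every $\calls$-edge and is bounded above, so that no infinite $\calls$-path can exist. Concretely, for a node $b \eqdef \tup{C,D,A,e} \in B$ I would set $\mu(b) \eqdef |C| + |D|$. By the definition of~$B$, both $C$ and $D$ are finite sets of events of $\uunf$, so $C \cup D \subseteq E$; in fact, by \cref{e:basic4} we have $D \subseteq \ex C$, so $C$ and $D$ are disjoint and $\mu(b) = |C \uplus D| \le |E|$. Since this section assumes that $\uunf$ is finite, $|E|$ is a fixed finite number, and hence $\mu(b) \in \set{0, 1, \ldots, |E|}$ for every $b \in B$.

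Next I would show that $\mu$ is strictly $\calls$-increasing. Let $b \calls b'$ with $b' \eqdef \tup{C',D',A',e'}$. By \cref{e:step1} we have $C \subseteq C'$ and $D \subseteq D'$, whence $\mu(b) \le \mu(b')$. If $b \callsl b'$, then \cref{e:step2} gives $C \subsetneq C'$, so $|C| < |C'|$ and therefore $\mu(b) < \mu(b')$; if $b \callsr b'$, then \cref{e:step3} gives $D \subsetneq D'$, and again $\mu(b) < \mu(b')$. Since ${\calls} = {\callsl} \uplus {\callsr}$, one of these two cases always applies, so every edge strictly increases $\mu$.

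Finally I would conclude as follows: along any path $b_0 \calls b_1 \calls b_2 \calls \ldots$ the sequence $\mu(b_0) < \mu(b_1) < \mu(b_2) < \ldots$ is a strictly increasing sequence of natural numbers bounded above by $|E|$, hence finite, of length at most $|E| - \mu(b_0)$. Since $b_0 = \tup{\set\bot,\emptyset,\emptyset,\bot}$ we have $\mu(b_0) = 1$, so the path visits at most $|E|$ nodes and is in particular finite. I do not expect a genuine obstacle here: once the step lemma (items \cref{e:step1,e:step2,e:step3}) is available, the argument is immediate. The only point deserving a little care is the boundedness of $\mu$, which rests on the standing finiteness assumption of this section; the lemma genuinely fails when $\uunf$ is infinite, which is precisely why \cref{s:cutoffs} must replace \extend by \extendcut.
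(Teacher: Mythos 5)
Your proof is correct, and it takes a slightly different route from the paper's. The paper argues by contradiction in two phases: it first observes that only finitely many left-child steps can occur on an infinite path (each one strictly enlarges $C$, which is bounded by the finite $E$), so beyond some index $k$ every step is a right-child step; it then fixes $C_k$ and derives a contradiction from the fact that the sets $D_i$ would grow strictly forever while remaining inside the finite set $\ex{C_k}$ (using \cref{e:basic4}). You instead collapse both phases into a single ranking function $\mu(b) = |C| + |D|$, bounded by $|E|$ via the disjointness $C \cap D = \emptyset$ (which does follow from \cref{e:basic4}, since $\ex C$ contains only events outside $C$), and strictly increasing along every edge by \cref{e:step2,e:step3}. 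Both proofs rest on exactly the same ingredients --- the step lemma and the finiteness of $\uunf$ --- but yours is more direct: it avoids the contradiction setup and the two-phase case split, and it yields as a bonus an explicit bound of $|E|$ on the length of any path from the root, which the paper's argument does not state. The paper's version has the minor advantage of only needing the local bound $D \subseteq \ex{C_k}$ rather than the global disjointness, but that is immaterial here. Your closing remark is also accurate: the argument genuinely requires the standing finiteness assumption on $\uunf$, which is exactly what the cutoff mechanism of \cref{s:cutoffs} restores in the non-terminating case.
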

\begin{proof}
   By contradiction.
   Assume that $b_0 \calls b_1 \calls \ldots$ is an infinite path in the
   call graph.
   For $0 \le i$, let $\tup{C_i,D_i,A_i,e_i} \eqdef b_i$.
   Recall that $\uunf$ has finitely many events, finitely many finite
   configurations, and no infinite configuration.
   Now, observe that the number of times that $C_i$ and $C_{i+1}$ are
   related by $\callsl$ rather than $\callsr$ is finite, since every time
   \explore{$\cdot,\cdot,\cdot$} makes a recursive call at \cref{l:a1l} it
   adds one event to $C_i$, as stated by \cref{e:step2}.
   More formally, the set
   \[ L \eqdef \set{i \in \N \colon C_i \callsl C_{i+1}} \]
   is finite.
   As a result it has a maximum, and its successor
   $k \eqdef 1 + \max_< L$ is an index in the path such that
   for all $i \ge k$ we have $C_i \callsr C_{i+1}$, \ie, the function only
   makes recursive calls at \cref{l:a1r}.
   We then have that $C_i = C_k$, for $i \ge k$, and by \cref{e:basic4},
   that $D_i \subseteq \ex{C_k}$.
   Recall that $\ex{C_k}$ is finite. Observe that, as a result of \cref{e:step2},
   the sequence
   \[
   D_k \subsetneq D_{k+1} \subsetneq D_{k+2} \subsetneq \ldots
   \]
   is an infinite increasing sequence.
   This is a contradiction, as for sufficiently large $j \ge 0$ we will
   have that $D_{k+j}$ will be larger than $\ex{C_k}$, yet
   $D_{k+j} \subseteq \ex{C_k}$.
\end{proof}

\begin{cor}
   \label{c:dag}
   The call graph is a finite directed acyclic graph.
\end{cor}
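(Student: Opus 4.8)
The plan is to obtain Corollary~\ref{c:dag} as an easy consequence of \cref{l:finitepath}, using only two additional structural observations about the call graph: every node has out-degree at most two, and the pair of components $(C,D)$ grows strictly along every edge.

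First I would establish \emph{acyclicity}. Let $b \calls b'$ be an edge, say $b = \tup{C,D,A,e}$ and $b' = \tup{C',D',A',e'}$. By \cref{e:step1} we have $C \subseteq C'$ and $D \subseteq D'$, and by \cref{e:step2,e:step3} exactly one of these inclusions is strict, according as the edge is of type $\callsl$ or of type $\callsr$; hence $|C| + |D| < |C'| + |D'|$. So $|C|+|D|$ is a quantity that strictly increases along every edge, and therefore the call graph has no directed cycle, since traversing a cycle would return to a node with the same value of $|C|+|D|$ while each edge strictly increased it. (Alternatively, since every node is reachable from $b_0$ by weak connectedness, a directed cycle could be traversed indefinitely after an initial path from $b_0$, yielding an infinite path and contradicting \cref{l:finitepath}.)

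Next I would establish \emph{finiteness}, and this is the only step needing a small amount of care. A call to \explore{$C,D,A$} issues at most two recursive calls, at \cref{l:a1l} and at \cref{l:a1r}, so every node of the call graph has out-degree at most two. Consider the tree $P$ whose vertices are the finite directed paths from $b_0$, ordered by the prefix relation: $P$ is rooted at the length-zero path $b_0$ and is finitely branching, each vertex having at most two children. If the node set $B$ were infinite, then, since every node is reachable from $b_0$, there would exist directed paths from $b_0$ of unbounded length, so $P$ would be infinite; by König's lemma $P$ would then contain an infinite branch, i.e.\ an infinite path $b_0 \calls b_1 \calls b_2 \calls \cdots$, contradicting \cref{l:finitepath}. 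Hence $B$ is finite, and together with acyclicity this shows the call graph is a finite directed acyclic graph. No obstacle of substance is anticipated: the only subtle point is precisely this passage from ``arbitrarily long finite directed paths exist'' to ``an infinite directed path exists'', which is exactly König's lemma on the finitely branching tree of finite paths from $b_0$; everything else is bookkeeping with \cref{l:general}, the step lemma, and \cref{l:finitepath}.
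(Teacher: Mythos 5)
Your proof is correct. The finiteness half is essentially the paper's own argument: bounded out-degree plus reachability from $b_0$, then König's lemma applied to the (finitely branching) tree of finite paths from $b_0$ to contradict \cref{l:finitepath}; you merely spell out the path-tree construction that the paper leaves implicit. For acyclicity you take a genuinely different primary route: the quantity $|C|+|D|$ strictly increases along every edge (by \cref{e:step1,e:step2,e:step3}), so no directed cycle can exist. The paper instead derives acyclicity from \cref{l:finitepath} again, by noting that a cycle reachable from $b_0$ could be traversed forever, yielding an infinite path — the argument you relegate to a parenthesis. Your monotone-quantity argument is slightly more self-contained (it does not lean on \cref{l:finitepath} for this half and is the standard way to certify acyclicity), while the paper's version keeps both halves of the corollary as immediate corollaries of the single finiteness-of-paths lemma. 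Both are sound; nothing is missing from your write-up.
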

\begin{proof}
   Recall that every node $b \in B$ is reachable from the initial node
   $b_0$ by definition of the graph.
   Also, by \cref{l:finitepath}, all paths from $b_0$ are finite, and
   every node has between~0 and~2 adjacent nodes.

   By contradiction, if the graph had infinitely may nodes, then König's
   lemma would guarantee the existence of an infinite path starting from
   $b_0$, a contradiction to \cref{l:finitepath}.
   Then $B$ is necessarily finite.

   As for the acyclicity, again by contradiction, assume that
   $\tup{B, {\calls}}$ has a cycle. Then every state of any such cycle
   would be reachable from $b_0$, which guarantees the existence of at
   least one infinite path in the graph. Again, this is a contradiction to
   \cref{l:finitepath}.
\end{proof}

\thmtermination*
\begin{proof}
   Remark that \cref{a:a1} makes calls to three functions, namely,
   \extend{$\cdot$},
   \remove{$\cdot$},
   and
   \alternatives{$\cdot, \cdot$},
   Clearly the first two terminate. Since we gave no algorithm to compute
   \alternatives{$\cdot$}, we will assume we employ one that
   terminates on every input.

   Now, observe that there is no loop in \cref{a:a1}.
   Thus any non-terminating execution of \cref{a:a1} must perform a
   non-terminating sequence of recursive calls, which entails the existence
   of an infinite path in the call graph associated to the execution.
   Since, by \cref{l:finitepath}, no infinite path exist in the call graph,
   \cref{a:a1} always terminates.
\end{proof}

\subsection{Optimality}
\label{x:optimality}

\begin{lem}
   \label{l:c3c4}
   Let $b, b_1, b_2, b_3, b_4 \in B$ be nodes of the call graph such that
   \[
      b \callsl b_1 \calls^* b_3
      \text { ~~ and ~~ }
      b \callsr b_2 \calls^* b_4.
   \]
   and such that $\tup{C_3, D_3, A_3, e_3} \eqdef b_3$ and
   $\tup{C_4, D_4, A_4, e_4} \eqdef b_4$.
   Then $C_3 \ne C_4$.
\end{lem}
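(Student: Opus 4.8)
The plan is to single out the event $e$ that appears as the fourth component of the node $b$ and to track where it ends up along the two branches. Write $b \eqdef \tup{C,D,A,e}$. The left child $b_1$ of $b$ has first component $C \cup \set e$, whereas the right child $b_2$ has second component $D \cup \set e$. So $e$ is pushed into the \emph{configuration} on the left branch but into the \emph{disabled set} on the right branch, and the contradiction will come from the fact that, at every node of the call graph, these two sets are disjoint.

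First I would establish monotonicity along $\calls^*$: by a trivial induction on path length using the single-step inclusions $C \subseteq C'$ and $D \subseteq D'$ (\cref{e:step1}), any node $b'$ with $b \calls^* b'$ has first component $\supseteq C$ and second component $\supseteq D$. Applying this to $b_1 \calls^* b_3$ gives $C \cup \set e \subseteq C_3$, hence $e \in C_3$. Applying it to $b_2 \calls^* b_4$ gives $D \cup \set e \subseteq D_4$, hence $e \in D_4$. (The step is uniform whether or not $\calls^*$ is taken reflexively: if $b_3 = b_1$ or $b_4 = b_2$, the same inclusions hold verbatim.)

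Finally I would invoke \cref{e:basic4}, which states $D_4 \subseteq \ex{C_4}$. Since $\ex{C_4}$ is by definition disjoint from $C_4$, it follows that $D_4 \cap C_4 = \emptyset$, and therefore $e \notin C_4$. Combining this with $e \in C_3$ gives $C_3 \ne C_4$, which is exactly the claim.

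There is no real obstacle here once \cref{l:general} is available. The only points to be careful about are that the distinguished event $e$ is literally the same event on both branches — it is fixed by the node $b$ — and that \cref{e:basic4} applies to \emph{every} node of the call graph, in particular to $b_4$, which is precisely what excludes $e$ from $C_4$.
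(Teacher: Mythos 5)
Your proof is correct and follows essentially the same route as the paper's: track the event $e$ chosen at $b$, use \cref{e:step2}/\cref{e:step3} to place it in $C_1$ on the left and in $D_2$ on the right, propagate with \cref{e:step1} to get $e \in C_3$ and $e \in D_4$, and conclude via \cref{e:basic4} that $e \in \ex{C_4}$ and hence $e \notin C_4$. Nothing is missing.
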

\begin{proof}
   Let $\tup{C, D, A, e} \eqdef b$,
   $\tup{C_1, D_1, A_1, e_1} \eqdef b_1$, and
   $\tup{C_2, D_2, A_2, e_2} \eqdef b_2$.
   By \cref{e:step2} we know that $e \in C_1$, and by \cref{e:step1} that
   $e \in C_3$. We show that $e \notin C_4$.
   By \cref{e:step3} we have that $e \in D_2$, and again by \cref{e:step1}
   that $e \in D_4$. Since $D_4 \subseteq \ex{C_4}$, by \cref{e:basic4}, we
   have that $e \in \ex{C_4}$, so $e \notin C_4$.
\end{proof}

\begin{cor}
   \label{c:btree}
   The call graph $(B, {\calls})$ is a finite binary tree,
   where $\callsl$ and $\callsr$
   are respectively the \emph{left-child} and \emph{right-child} relations.
\end{cor}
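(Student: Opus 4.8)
The plan is to upgrade the finite DAG $\tup{B, {\calls}}$ provided by \cref{c:dag} to a tree rooted at $b_0$; the binary structure then follows for free, since by construction every node issues at most one recursive call at \cref{l:a1l} (its $\callsl$-successor) and at most one at \cref{l:a1r} (its $\callsr$-successor), and the relations $\callsl$ and $\callsr$ are disjoint. Recall that a directed graph is a tree (arborescence) rooted at $b_0$ precisely when every node is reached from $b_0$ by a \emph{unique} directed path. Reachability is immediate: by definition of $B$, every node sits at the end of some chain of recursive calls starting from \explore{$\set\bot,\emptyset,\emptyset$}. So the whole task reduces to uniqueness of that path.

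For uniqueness I would argue by contradiction. Suppose some node $b'$ admits two distinct directed paths $\pi_1 \neq \pi_2$ from $b_0$. Traverse $\pi_1$ and $\pi_2$ in lockstep from $b_0$ and let $b$ be the last node on which they still agree; since $\pi_1 \neq \pi_2$, at $b$ the two paths leave along two distinct outgoing edges. A node has at most one $\callsl$-edge and at most one $\callsr$-edge, so one of these two edges is a left edge, say $b \callsl b_1$, and the other a right edge, say $b \callsr b_2$, with $b_1 \neq b_2$. Both paths nonetheless end at $b'$, hence $b \callsl b_1 \calls^* b'$ and $b \callsr b_2 \calls^* b'$.

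Now I would invoke \cref{l:c3c4} with the instantiation $b_3 \eqdef b'$ (reached through the left branch) and $b_4 \eqdef b'$ (reached through the right branch). Writing $\tup{C',D',A',e'} \eqdef b'$, the lemma gives $C' \neq C'$, a contradiction. Hence the path from $b_0$ to $b'$ is unique, $\tup{B,{\calls}}$ is a tree rooted at $b_0$, and together with the observation of the first paragraph it is a binary tree in which $\callsl$ and $\callsr$ are, respectively, the left-child and right-child relations; finiteness is inherited from \cref{c:dag}.

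The only subtlety requiring care is the divergence step: one must be certain that the first place the two paths disagree is a genuine left/right split and not, say, two different $\callsl$-children of one node — this is ruled out because, within a fixed execution, \explore{$C,D,A$} makes at most one recursive call at \cref{l:a1l} and at most one at \cref{l:a1r}. (As an aside, $b_0$ has in-degree $0$: an edge $b \calls b_0$ together with the path $b_0 \calls^* b$ would form a cycle, contradicting \cref{c:dag}; but this is already subsumed by the unique-path argument.) Everything else is a direct combination of \cref{c:dag} and \cref{l:c3c4}, so I do not expect any real obstacle here.
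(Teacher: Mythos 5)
Your proposal is correct and follows essentially the same route as the paper: the paper's proof simply combines \cref{c:dag} (finite DAG) with \cref{l:c3c4} (disjointness of left and right subtrees), and your unique-path argument via the last common node and the instantiation $b_3 = b_4 = b'$ is exactly the detail that the paper leaves implicit.
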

\begin{proof}
   \Cref{c:dag} states that the call graph is a finite directed acyclic
   graph. \Cref{l:c3c4} guarantees that for every node $b \in B$, the nodes
   reached after the left child are different from those reached after the
   right one.
\end{proof}

\begin{lem}
   \label{l:atmost1}
   For any maximal configuration $C \subseteq E$,
   there is \emph{at most one} node
   $\tup{\tilde C, \tilde D, \tilde A, \tilde e} \in B$
   with $C = \tilde C$.
\end{lem}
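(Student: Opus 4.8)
The plan is to derive a contradiction from the existence of two \emph{distinct} nodes of the call tree that carry the maximal configuration $\tilde C$ as their first component, using the binary-tree structure of the call graph (\cref{c:btree}) together with the branch-disjointness statement of \cref{l:c3c4}.

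The first step is a simple but essential observation: any node of the call graph whose first component is a \emph{maximal} configuration is a leaf of the call tree. Indeed, if $\tilde C$ is maximal then $\en{\tilde C} = \emptyset$, so the test $\en{\tilde C} \cap U = \emptyset$ at \cref{l:a1ret} succeeds and \explore{$\tilde C,\cdot,\cdot$} returns immediately, issuing no recursive call. Consequently, if $b \eqdef \tup{C,D,A,e}$ and $b' \eqdef \tup{C',D',A',e'}$ are two distinct nodes with $C = C' = \tilde C$, then neither of them is an ancestor of the other in the call tree.

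The second step is pure tree combinatorics. Since $\tup{B,{\calls}}$ is a tree rooted at $b_0$, the ancestors of $b$ form a chain, as do those of $b'$, so $b$ and $b'$ have a deepest common ancestor $b''$. Because neither $b$ nor $b'$ is an ancestor of the other, $b'' \notin \set{b,b'}$; hence $b$ lies in the subtree rooted at some child of $b''$ and $b'$ in the subtree rooted at some child of $b''$, and these two children must be distinct (otherwise that child would be a deeper common ancestor). Thus $b''$ has both a left child $c$ and a right child $c'$ with, say,
\[
b'' \callsl c \calls^* b
\qquad\text{and}\qquad
b'' \callsr c' \calls^* b'.
\]
Applying \cref{l:c3c4} with $b''$, $c$, $c'$, $b$, $b'$ playing the roles of $b$, $b_1$, $b_2$, $b_3$, $b_4$ then yields that the first components of $b$ and $b'$ differ, contradicting $C = C' = \tilde C$; this is the desired contradiction.

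I do not expect a genuine obstacle here. The only point that needs a moment of care is the leaf observation of the first step: without it one would also have to rule out the degenerate case in which $b$ is an ancestor of $b'$ along a chain of $\callsr$-edges (which can be handled directly via the monotonicity of the first component along $\callsl$/$\callsr$ edges and the maximality of $\tilde C$), but this case is subsumed cleanly once we note that a node carrying a maximal configuration has no children at all.
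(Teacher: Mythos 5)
Your proof is correct and follows essentially the same route as the paper: both arguments locate the last common node of the two root-to-node paths, note that the paths must branch left and right there (using the fact that a node carrying a maximal configuration is a leaf, so neither node is an ancestor of the other), and derive a contradiction. The only cosmetic difference is that you invoke \cref{l:c3c4} as a black box at the divergence point, whereas the paper re-derives that disjointness inline from \cref{e:step1,e:step2,e:step3,e:basic4}; your version is a slight streamlining but not a different argument.
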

\begin{proof}
   By contradiction, assume there was two different nodes,
   \[
   \hat b \eqdef \tup{C, \hat D, \hat A, \hat e}
   \text{ ~~ and ~~ }
   b' \eqdef \tup{C,D',A',e'}
   \]
   in $B$ such that the first component of the tuple is $C$.
   The call graph is a binary tree, because of \cref{c:btree}, so there is
   exactly one path from
   $b_0 \eqdef \tup{\emptyset, \emptyset, \emptyset, e_0}$
   to respectively $\hat b$ and $b'$.
   Let
   \[
   \hat b_0 \calls \hat b_1 \calls \ldots \calls \hat b_{n-1} \calls \hat b_n
   \text{ ~~ and ~~ }
   b'_0 \calls b'_1 \calls \ldots \calls b'_{m-1} \calls b'_m
   \]
   be the two such unique paths,
   with $\hat b_n \eqdef \hat b$, $b'_n \eqdef b'$ and
   $\hat b_0 \eqdef b'_0 \eqdef b_0$.
   Such paths clearly share the first node $b_0$. In general they will
   share a number of nodes to later diverge. Let $i$ be the index of the
   last node common to both paths, \ie, the maximum integer $i \ge 0$ such
   that
   \[
   \tup{\hat b_0, \hat b_1, \ldots, \hat b_i}
   =
   \tup{b'_0, b'_1, \ldots, b'_i}
   \]
   holds.
   Observe both paths necessarily diverge before reaching the last node,
   \ie, one cannot be a prefix of the other.
   This is because both $\hat b$ and $b'$ are leaves of the call graph,
   \ie, there is no $b'' \in B$ such that either $\hat b \calls b''$ or
   $b' \calls b''$.
   As a result
   $\hat b \ne b'_j$ for any $j \in \set{0, \ldots, m}$
   and
   $b' \ne \hat b_j$ for any $j \in \set{0, \ldots, n}$.
   This means that $i < \min \set{n, m}$.

   Let $\tup{C_i,D_i,A_i,e_i} \eqdef b_i$.
   \Wlog, assume that
   $\hat b_i \callsl \hat b_{i+1}$ and that
   $b'_i \callsr b'_{i+1}$.
   Now, using \cref{e:step2} and \cref{e:step1}, it is simple to show that
   $e_i \in C$.
   And using \cref{e:step3} and \cref{e:step1}, that
   $e_i \in D'$.
   Then, by \cref{e:basic4} we get that $e_i \in \ex C$,
   a contradiction to $e_i \in C$.
\end{proof}

\thmoptimality*
\begin{proof}
   By construction, every call to \explore{$C,D,A$} produces one node of
   the form $\tup{C,D,A,e}$, for some~$e \in E$, in the call graph
   associated to the execution.
   By \cref{l:atmost1}, there is at most one node with its first parameter
   being equal to~$\tilde C$, so \explore{$\cdot,\cdot,\cdot$} can have
   been called at most once with~$\tilde C$ as first parameter.

   Observe, furthermore, that the algorithm does not initiate what
   Abdulla et~\al call \emph{sleep-set blocked executions}~\cite{AAJS14}.
   These correspond, in our setting, to exploring the same configuration in both
   branches of the tree. Formally, our algorithm would explore sleep-set blocked
   executions iff it is possible to find some $b \in B$ such that the left and
   right subtrees of~$b$ contain nodes exploring the same configuration.
   By \cref{l:c3c4} this is not possible.
\end{proof}

\subsection{Completeness}
\label{x:completeness}

\begin{lem}
   \label{l:compl.step}
   Let $b \eqdef \tup{C,D,A,e} \in B$ be a node in the call graph and
   $\hat C \subseteq E$ an arbitrary maximal configuration of $\uunf$
   such that $C \subseteq \hat C$ and $D \cap \hat C = \emptyset$.
   Then exactly one of the following statements hold:
   \begin{itemize}
   \item
     Either $C$ is a maximal configuration of $\uunf$, or
   \item
     $e \in \hat C$ and~$b$ has a left child, or
   \item
     $e \notin \hat C$ and~$b$ has a right child.
   \end{itemize}
\end{lem}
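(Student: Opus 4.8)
The plan is to split the claim into ``at most one holds'', which is control-flow bookkeeping about \cref{a:a1}, and ``at least one holds'', where the content lies. For the bookkeeping part, recall that a node $b\eqdef\tup{C,D,A,e}$ of the call graph has a left child exactly when \explore{$C,D,A$} does not return at \cref{l:a1ret}, \ie when $\en C\ne\emptyset$ after \extend{$C$} (a valid $e$ is then available, using for the case $A\ne\emptyset$ the lemma that $A\ne\emptyset$ implies $\en C\cap A\ne\emptyset$); this is precisely the negation of ``$C$ is a maximal configuration of $\uunf$''. It has a right child exactly when, moreover, $\alternatives{$C,D\cup\set e$}\ne\emptyset$. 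Also, by the convention in the definition of the call graph, $e\eqdef\bot$ whenever $C$ is maximal, and $\bot$ lies in every maximal configuration (the empty configuration is not maximal), so $\bot\in\hat C$ in that case. Hence: if $C$ is maximal, the first bullet holds while the other two fail, because their conjuncts ``$b$ has a left/right child'' are false; and if $C$ is not maximal, the first bullet fails, $b$ always has a left child, and, since exactly one of $e\in\hat C$, $e\notin\hat C$ holds, at most one of the remaining bullets can hold. So the whole statement reduces to one implication: if $C$ is not maximal and $e\notin\hat C$, then $\alternatives{$C,D\cup\set e$}\ne\emptyset$ (so $b$ has a right child, realising the third bullet).

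To prove that implication I would exhibit an explicit alternative extracted from $\hat C$. First, $D\cup\set e\subseteq\ex C$: for $D$ this is \cref{e:basic4}, and $e\in\en C\subseteq\ex C$ by \cref{e:basic1}; moreover $(D\cup\set e)\cap\hat C=\emptyset$, from $D\cap\hat C=\emptyset$ and $e\notin\hat C$. So for each $d\in D\cup\set e$ we have $\causes d\subseteq C\subseteq\hat C$ yet $d\notin\hat C$; since $\hat C$ is a maximal configuration, $\hat C\cup\set d$ is not a configuration, and it is causally closed, so it must violate conflict-freeness: some $g\in\hat C$ satisfies $g\cfl d$. By the standard minimal-conflict property of event structures (every conflict descends to an immediate conflict, shown by induction on the sizes of local configurations using \cref{e:les1,e:les2}), there are $g'\in[g]$ and $d'\in[d]$ with $g'\icfl{}d'$; then $g'\in[g]\subseteq\hat C$ by causal closure, and $d'=d$ necessarily, for otherwise $d'<d$, hence $d'\in\causes d\subseteq\hat C$, and $g',d'$ would be two conflicting events inside the configuration $\hat C$. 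Writing $f_d\eqdef g'$, we get for every $d\in D\cup\set e$ some $f_d\in\hat C$ with $f_d\icfl{}d$.

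Now set $J\eqdef\bigcup_{d\in D\cup\set e}[f_d]$. Since every $f_d\in\hat C$, we have $J\subseteq\hat C$, so $J$ is conflict-free, and $J$ is causally closed as a union of local configurations; hence $J$ is a configuration, and $C\cup J\subseteq\hat C$ is a configuration too (causally closed, being the union of the configurations $C$ and $J$; conflict-free, being a subset of $\hat C$), which is \cref{e:alt2}. For \cref{e:alt1}, each $d\in D\cup\set e$ has $f_d\in[f_d]\subseteq J\subseteq C\cup J$ with $f_d\icfl{}d$, so the only remaining obligation is $J\subseteq U$ --- which simultaneously makes $J$ a legitimate candidate alternative and places each $f_d$ in $\ficfl[U]{d}$. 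I expect this last obligation to be the main obstacle, because \cref{a:a1} only ever sees events in $U$ and \remove prunes it aggressively. The idea I would pursue is an invariant along the call tree: each $f_d$ is in immediate conflict with $d$, and $d$ stays in the $C$-component (for $d=e$, from the left child of $b$ onwards) or the $D$-component (for $d\in D$) throughout the left subtree of $b$, while \remove{$e',C',D'$} always preserves $Q_{C',D',U}$, which contains $[c']$ for every $c'\in U$ in immediate conflict with an event of $C'\cup D'$; hence once $f_d$ has entered $U$ through some \extend call (which adds $\ex{C'}$ to $U$ for each explored configuration $C'$) it cannot be evicted before \explore{$C,D,A$} evaluates the guard of \cref{l:a1r}. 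Making precise when $f_d$ first enters $U$ --- and, if necessary, replacing the $\hat C$-witness $f_d$ by another immediate-conflict witness of $d$ that $U$ is guaranteed to hold, much as \cref{e:basic6} provides witnesses in $C\cup A$ --- is the delicate point; once it is settled, $J\in\alternatives{$C,D\cup\set e$}$, so $b$ has a right child, finishing the proof.
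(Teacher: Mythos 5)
Your reduction of the lemma to the single implication ``$C$ not maximal and $e\notin\hat C$ implies $\alternatives{$C,D\cup\set e$}\ne\emptyset$'', and your construction of the candidate alternative from immediate-conflict witnesses $f_d\in\hat C$ with $f_d\icfl{}d$, match the paper's proof exactly (the paper gets the witnesses slightly more directly: $d\in\ex{\hat C}$ and $\en{\hat C}=\emptyset$ by maximality force $d\in\cex{\hat C}$). But the step you defer --- showing $J\subseteq U$ at the moment \alternatives is called --- is not a residual technicality; it is the entire content of the lemma, and your sketch of how to close it does not work as stated. The preservation half of your invariant is fine (once $f_d$ is in $U$ and $d$ sits in $C\cup D$, \remove keeps $[f_d]$ inside $Q_{C',D',U}$), but the \emph{entry} half fails: $f_d$ only enters $U$ via \extend{$C'$} for some explored configuration $C'$ with $\causes{f_d}\subseteq C'$, and nothing in the control flow guarantees such a $C'$ is ever explored before $b$ is reached. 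Worse, it can genuinely happen that your chosen witness $f_d$ is \emph{never} in $U$ at the right time, so the claim ``$J\in\alternatives{\cdot}$'' is false for the naive $J$; the witness must sometimes be replaced by a causally smaller event in $\ficfl{}{d}\cap\anc{[f_d]}$, and proving that \emph{some} valid witness is present in $U$ is where all the work lies.

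Concretely, the paper resolves this by proving the lemma by induction on the in-order traversal $\lessdot$ of the call tree, in mutual support with \cref{l:compl.findit}: for each $d=e_i$ one considers $C'\eqdef C\cup\set{e_i}\cup\causes{e'_i}$ and splits on whether some maximal configuration $C''\supseteq C'$ avoids $D_i$. If yes, the induction hypothesis plus \cref{l:compl.findit} show $C''$ was visited in the left subtree of the node $b_i$ where $e_i$ was disabled, so $e'_i$ entered $U$ there and survives. If no, one first proves the combinatorial fact \cref{e:existsj} (some earlier $e_j$ has $\fcfl{e_j}\cap\hat C\subseteq\fcfl{e_i}\cap\hat C$), takes the minimal such index $m$, builds the configuration $X_3\eqdef(\hat C\setminus\fcfl{e_i})\cup\set{e_m}\cup\causes{e'_m}$, shows it was explored in the left subtree of $b_m$, concludes $e'_m\in U$, and then replaces $e'_i$ by a causal ancestor of $e'_m$ lying in $\ficfl{}{e_i}$. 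None of this machinery --- the induction order, the appeal to \cref{l:compl.findit}, the case split, or the witness-replacement argument --- appears in your proposal, so the proof is incomplete at its central point.
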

\begin{proof}
   The proof is by induction on~$b$ using a specific total order in~$B$
   that we define now.
   Recall that $\tup{B,{\calls}}$ is a binary tree (\cref{c:btree}).
   We let ${\lessdot} \subseteq B \times B$ be the unique
   in-order relation in~$B$. Formally, $\lessdot$ is
   the order that sorts, for every~$\tilde b \in B$, first all nodes
   reachable from~$\tilde b$'s left child (if there is any),
   then~$\tilde b$, then all nodes reachable from~$\tilde b$'s right child
   (if there is any).

   \emph{Base case.}
   Node~$b$ is the least element in~$B$ \wrt $\lessdot$. Then~$b$ is
   the leftmost leaf of the call tree, \ie,
   $b_0 \callsl^* b$,
   and~$C$ is a maximal configuration.
   Then the first item holds.

   \emph{Step case.}
   Assume that the result holds for any node $\tilde b \lessdot b$.
   If~$C$ is maximal, we are done. 
   So assume that $C$ is not maximal, and so that~$b$ has at least one left
   child.
   If $e \in \hat C$, then we are done, as the second item holds.

   So assume that $e \notin \hat C$.
   The rest of this proof shows that the third item of the lemma holds,
   \ie, that~$b$ has right child. In particular we show that
   there exists some alternative $\hat J \subseteq \hat C$ such that
   $\hat J \in \alternatives{$C, D \cup \set e$}$.

   We start by setting up some notation.
   Observe that any alternative
   $J \in \alternatives{$C, D \cup \set e$}$ needs
   to contain, for every event $e' \in D \cup \set e$,
   some event $e'' \in J \cup C$ in immediate conflict with $e'$,
   \cf \cref{e:alt1}.
   In fact $e''$ can be in~$J$ or in~$C$.
   Those $e' \in D \cup \set e$ such that $C$ already contains some~$e''$
   in conflict with $e'$ pose no problem. So we need to focus on the
   remaining ones, we assign them a specific name,
   we define the set
   \[
   F \eqdef \set{e_1, \ldots, e_n} \eqdef D \setminus \cex C \cup \set e.
   \]
   Let $e_i$ be any event in~$F$.
   Clearly $e_i \in \cex{\hat C}$, as
   $e_i \in D \subseteq \ex C$, by \cref{e:basic4}, and so
   $\causes{e_i} \subseteq C \subseteq \hat C$ and
   $e_i \notin \hat C$.
   Since $e_i \in \cex{\hat C}$ we can find some $e'_i \in \hat C$
   such that $e_i \icfl e'_i$.
   We can now define a set
   \[
   \hat J \eqdef [\set{e'_1, \ldots, e'_n}]
   \]
   such that
   $e'_i \in \hat C$ and $e_i \icfl e'_i$
   for $i \in \set{1, \ldots, n}$.
   Clearly $\hat J \subseteq \hat C$ and $\hat J$ is causally closed,
   so it is a configuration.
   Observe that $\hat J$ is not uniquely defined, there may be
   several~$e'_i$ to choose for each~$e_i$ (some of the~$e'_i$ might even
   be the same).
   We take any~$e'_i$ in immediate conflict with~$e_i$, the choice is
   irrelevant (for now).

   We show now that $\hat J \in \alternatives{$C, D \cup \set e$}$
   when function \alternatives{$\cdot$} is called just before
   \cref{l:a1r} during the execution of \explore{$C,D,A$}.
   Let $\hat U$ be the set of events contained in variable~$U$ of
   \cref{a:a1} exactly when \alternatives{$\cdot$} is called.
   Clearly $C \cup \hat J$ is a configuration, so
   \cref{e:alt1} holds.
   To verify \cref{e:alt2},
   consider any event $\tilde e \in D \cup \set e$.
   If $\tilde e \in D \cap \cex C$ we can always find
   some~$\tilde e' \in C$
   with $\tilde e' \in \ficfl[\tilde U]{\tilde e}$.
   If not, then $\tilde e = e_i$ for some $i \in \set{1, \ldots, n}$
   and we can find some $e'_i \in \hat J$ such that
   $e_i \icfl e'_i$.
   In order to verify \cref{e:alt2} we only need to check
   that~$e'_i \in \hat U$. In the rest of this proof we show this.
   Observe that $e'_i \in \hat U$ also implies that
   $\hat J \subseteq \hat U$, necessary to ensure
   that $\hat J$ is an alternative to $D \cup \set e$ after $C$
   \emph{when the function \alternatives{$\cdot$} is called}.

   In the sequel we show that $\hat J \subseteq \hat U$.
   In other words, that event $e'_i$, for $i \in \set{1, \ldots, n}$,
   is present in set~$U$ when function \alternatives{$C, D \cup \set e$} is
   called.
   The set~$U$ has been filled with events in function \extend{$\cdot$} as
   the exploration of~$\uunf$ advanced, some of them have been kept
   in~$U$, some of them have been removed with \remove{$\cdot$}. To reason
   about the events in~$\hat U$ we need to look at fragment of $\uunf$
   explored so far.

   For $i \in \set{1, \ldots, n}$ let
   $b_i \eqdef \tup{C_i, D_i, A_i, e_i} \in B$ be the node in the call
   graph associated to event $e_i \in F$.
   These nodes are all situated in the unique path from~$b_0$ to~$b$.
   \Wlog assume (after possible reordering of the index~$i$) that
   \[
   b_0 \calls^*
   b_1 \calls^*
   b_2 \calls^*
   \ldots \calls^*
   b_n
   \]
   where $b_n = b$ and $e_n = e$.
   First observe that for any $i \in \set{2, \ldots, n}$
   we have $\set{e_1, \ldots, e_{i-1}} \subseteq D_i$.
   Since every event $e_i$ is in $D = D_n$,
   for $i \in \set{1, \ldots, n-1}$,
   we know that the first step in the path that goes
   from $b_i$ to $b_{i+1}$ is a \emph{right child}.
   In other words, the call to \explore{$C_i,D_i,A_i$} is right now blocked
   on the \emph{right-hand side recursive call} at \cref{l:a1r} in
   \cref{a:a1}, after having decided that there was one right child to
   explore.
   For the shake of clarity, we can then informally write
   \[
   b_0 \calls^*
   b_1 \mathrel{{\callsr}{\calls^*}}
   b_2 \mathrel{{\callsr}{\calls^*}}
   \ldots \mathrel{{\callsr}{\calls^*}}
   b_n.
   \]
   We additionally define the sets of events
   \[
   U_0, U_1, \ldots, U_n \subseteq E
   \]
   as, respectively for $i \in \set{1, \ldots, n}$,
   the value of the variable~$U$ during the execution of
   \explore{$C_i,D_i,A_i$} just before the
   \emph{right recursive call} at \cref{l:a1r} was made,
   \ie, the value of variable~$U$ when
   \alternatives{$C_i,D_i \cup \set{e_i}$} was called.
   For $i = 0$ we set $U_0 \eqdef \set \bot$ to
   the initial value of~$U$.
   According to this definition we have that $U_n = \hat U$.

   To prove that $\hat J \subseteq \hat U = U_n$ it is now sufficient to
   prove that $e'_i \in U_i$, for $i \in \set{1, \ldots, n}$.
   This is essentially because of the following three facts.
   \begin{enumerate}
   \item
     Clearly $e_i \in U_i$.
   \item
     For any node $\tilde b \eqdef \tup{\tilde C, \tilde D, \cdot, \tilde e} \in B$
     explored after $b_i$ and before $b_n$ it holds that~$e_i \in \tilde D$,
     by \cref{e:step1}, and so every time function
     \remove{$\tilde e, \tilde C, \tilde D$}
     has been called, event $e_i$ has not been removed from~$U$.
   \item
     Any event in immediate conflict with~$e_i$ will likewise
     not be removed from set~$U$ as long as~$e_i$ remains in~$D$,
     for the same reason as before.
   \end{enumerate}
   In other words, $e'_i \in U_i$ implies that $e'_i \in U_n$,
   for $i \in \set{1, \ldots, n}$.

   We need to show that $e'_i \in U_i$, for $i \in \set{1,\ldots,n}$.
   Consider the configuration $C' \subseteq E$ defined as follows:
   \[
   C' \eqdef C \cup \set{e_i} \cup \causes{e'_i}.
   \]
   First, note that~$C'$ is indeed a configuration, since
   it is clearly causally closed and there is no conflict:
   $e_i \in \en C$ and
   $C \cup \causes{e'_i} \subseteq \hat C$ and
   $[e_i] \cup \causes{e'_i}$ is conflict-free
   (because $e_i$ and $e'_i$ are in \emph{immediate} conflict).
   Remark also that $D_i \subseteq \ex{C'}$
   and that $e'_i \in \cex{C'}$.
   We now consider two cases:
   \begin{itemize}
   \item
     \emph{Case 1}:
     there is some maximal configuration~$C'' \supseteq C'$ such that
     $D_i \cap C'' = \emptyset$.
     We show that~$C''$ have been visited during the exploration of
     the left subtree of~$b_i$.
     In that case, since~$e'_i \in \cex{C''}$ and~$e_i \in C''$,
     \cref{a:a1} will have been appended~$e'_i$ to~$U$ during that
     exploration, and~$e'_i$ will remain in~$U$ at least as long as~$e_i$
     is in~$D$.

     To show that $C''$ has been explored,
     consider the left
     child~$b'_i \eqdef \tup{C_i \cup \set{e_i}, D_i, \cdot, \cdot}$
     of~$b_i$.
     In that case, since~$b_i \lessdot b$ (recall that $b$ is in the right
     subtree of~$b_i$), clearly every node~$\hat b \in B$ in the subtree
     rooted at $b'_i$ (\ie, $b'_i \calls^* \hat b$)
     is such that $\hat b \lessdot b_i \lessdot b$.
     This means that the induction hypothesis applies to~$\hat b$.
     So \cref{l:compl.findit} applied to~$b'_i$ and~$C''$ shows that
     $C''$ has been explored in the subtree rooted at $b'_i$.
     As a result $e'_i \in U_i$ and
     $e'_i \in U_n$, what we wanted to prove.
     
   \item
     \emph{Case 2}:
     there is no maximal configuration $C'' \supseteq C'$ such that
     $D_i \cap C'' = \emptyset$.
     In other words, \emph{any} maximal configuration $C'' \supseteq C'$
     is such that $D_i \cap C'' \ne \emptyset$.
     Our first step is showing that this implies that
     \begin{equation}
     \label{e:existsj}
     \exists j \in \set{1, \ldots, i-1}
     \text{ such that }
     \fcfl{e_i} \cap \hat C \supseteq \fcfl{e_j} \cap \hat C.
     \end{equation}
     Let $C'' \supseteq C'$ be a maximal configuration.
     Then $D_i \cap C'' \ne \emptyset$.
     This implies that $D_i \cap \en C \cap C'' \ne \emptyset$,
     as necessarily $D_i \cap C'' \subseteq \en C$.
     Observe that $D_i \cap \en C = \set{e_1, \ldots, e_{i-1}}$, so we have
     that $\set{e_1, \ldots, e_{i-1}} \cap C'' \ne \emptyset$.
     Consider now the following two sets:
     \[
     X_1 \eqdef \hat C \setminus \fcfl{e_i}
     \text{ ~~ and ~~ }
     X_2 \eqdef X_1 \cup \set{e_i}.
     \]
     Observe now the following. We can find a maximal configuration
     $C''' \supseteq X_1$ satisfying that $D_i \cap C''' = \emptyset$
     (for instance, take $C''' \eqdef \hat C$).
     But, because $C' \subseteq X_2$, we cannot find any
     $C''' \supseteq X_2$ satisfying that $D_i \cap C''' = \emptyset$.
     This implies that for any
     $C''' \supseteq X_2$ we have
     $\set{e_1, \ldots, e_{i-1}} \cap C''' \ne \emptyset$.
     Based on the last statement we can now prove \cref{e:existsj} by
     contradiction.
     Assume that \cref{e:existsj} does not hold.
     Then for any $j \in \set{1, \ldots, i-1}$, one could find some event
     $\tilde e \in \fcfl{e_j} \cap \hat C$ such that
     $\tilde e \notin \fcfl{e_i} \cap \hat C$.
     Then $\tilde e \notin \fcfl{e_i}$ and as a result
     $\tilde e \in X_1 \subseteq X_2$.
     This now would mean that for any $j \in \set{1, \ldots, i-1}$
     it holds that $\fcfl{e_j} \cap X_2 \ne \emptyset$.
     This implies that \emph{any} maximal configuration~$C'''$
     extending~$X_2$ is such that 
     $\set{e_1, \ldots, e_{i-1}} \cap C''' = \emptyset$.
     This is a contradiction,
     so the validity of \cref{e:existsj} is now established.

     According to \cref{e:existsj} there might be several
     integers $j \in \set{1, \ldots, i-1}$ such that
     $\fcfl{e_i} \cap \hat C \supseteq \fcfl{e_j} \cap \hat C$
     holds. Let $m$ be the minimum such~$j$, and consider the following
     set:
     \[
     X_3 \eqdef X_1 \cup \set{e_m} \cup \causes{e'_m}.
     \]
     We will now prove that $X_3$ is a configuration and it has been
     visited during the exploration of the subtree rooted at the left child
     of~$b_m$. We first establish several claims about~$X_3$:
     \begin{itemize}
     \item
       \emph{Fact 1:
       set $X_3$ is causally closed.}
       Since $X_1$ is causally closed, clearly
       $X_1 \cup \causes{e'_m}$ is causally closed.
       Now, since $\set{e_i, e_m} \subseteq \en C$, we have that
       $\fcfl{e_i} \cap C = \emptyset$, and as a result
       $\causes{e_m} \subseteq C \subseteq X_1 \subseteq X_3$.
     \item
       \emph{Fact 2:
       set $X_3$ is conflict free.}
       Since $X_1 \cup \causes{e'_m} \subseteq \hat C$, there is no pair of
       confliting events in $X_1 \cup \causes{e'_m}$.
       Consider now~$e_m$.
       Since~$e_m$ and~$e'_m$ are in \emph{immediate} conflict, by
       definition~$e_m$ has no conflicth with any event in $\causes{e'_m}$.
       Consider now any event $\tilde e \in X_1$.
       Observe that $\tilde e \in \hat C$.
       If $\tilde e \in \fcfl{e_m}$, then by \cref{e:existsj} we have that
       $\tilde e \in \fcfl{e_i}$, which implies that $\tilde e \notin X_1$.
       So~$e_m$ has no conflict with any event in~$X_1$.
     \item
       \emph{Fact 3:
       it holds that $C_m \cup \set{e_m} \subseteq X_3$.}
       Since~$C_m \subseteq C$, by \cref{e:step2},
       and~$C \subseteq X_1 \subseteq X_3$, we clearly have
       that~$C_m \subseteq X_3$.
       Also, $e_m \in X_3$ by definition.
     \item
       \emph{Fact 4:
       it holds that $X_3 \cap D_m = \emptyset$.}
       By \cref{e:step1,e:basic4} we know that
       $D_m \subseteq D \subseteq \ex C$.
       Since the sets~$\en C$ and~$\cex C$ partition~$\ex C$ we make the
       following argument.
       For any $\tilde e \in D_m \cap \cex C$ we know that
       $\tilde e \notin X_3$, as $C \subseteq X_3$.
       As for $D_m \cap \en C$ we have that
       $D_m \cap \en C = \set{e_1, \ldots, e_{m-1}}$.
       So for any $j \in \set{1, \ldots, m-1}$, because of the minimality
       of~$m$, we know that
       $\fcfl{e_i} \cap \hat C \supseteq \fcfl{e_j} \cap \hat C$
       does not hold.
       In other words, we know that there exists at least one event
       $\tilde e \in \fcfl{e_j} \cap \hat C$
       such that $\tilde e \notin \fcfl{e_i} \cap \hat C$.
       This implies that $\tilde e \notin \fcfl{e_i}$, and as a result
       $\tilde e \in X_1 \subseteq X_3$.
       So, for any event in~$D_m$ there is at least one conflicting event
       in~$X_3$, and~$X_3$ is a configuration.
       Therefore $X_3 \cap D_m = \emptyset$.
     \end{itemize}
     To show that $X_3$ has been explored in the subtree rooted at~$b_m$,
     consider the left
     child~$b'_m \eqdef \tup{C_m \cup \set{e_m}, D_m, \cdot, \cdot}$
     of~$b_m$.
     The induction hypothesis applies to any node~$\hat b \in B$
     in the subtree rooted at $b'_m$ (\ie, $b'_m \calls^* \hat b$).
     This is because $\hat b \lessdot b'_m \lessdot b_m \lessdot b$.
     By the first two facts previously proved, we know that $X_3$ is a
     configuration. The last two facts, together with the fact that
     the induction hypothesis holds on the subtree rooted at~$b'_m$,
     imply, by \cref{l:compl.findit},
     that some maximal configuration $C'' \supseteq X_3$
     has been explored in the subtree rooted at~$b'_m$.
     Since $e_m \in X_3$ and $e'_m \in \cex{X_3} \subseteq \cex{C''}$,
     we know that~$e'_m$ have been discovered at least when
     exploring~$C''$. Since $e_m \icfl e'_m$ and $e_m$ is in set~$D$
     we also know that \remove{$\cdot$} cannot remove $e'_m$ from~$U$
     before~$e_m$ is removed from~$D$.
     This implies that $e'_m \in U_m$, but also that
     $e'_m \in U_n$.

     Now, our goal was proving that $e'_i \in U_n$.
     Since $e'_m \in \fcfl{e_i}$, by \cref{e:existsj},
     there is some $\tilde e \in \ficfl[]{e_i}$ such
     that~$\tilde e \le e'_m$.
     Since~$U_n$ is causally closed, we have that
     $\tilde e \in U_n$.

     We have found some event $\tilde e \in U_n$ such that
     $e_i \icfl \tilde e$.
     If $\tilde e \ne e'_i$, then we substitute~$e'_i$ in~$\hat J$
     by~$\tilde e$.
     This means that in the definition of~$\hat J$ we cannot chose any
     arbitrary $e'_i$ from~$\hat C$
     (as we said before, to keep things simple).
     But we can always find at least one event in~$\hat C$ that is in
     immediate conflict with~$e_i$ and is also present in~$U_n$.
     Observe that the choice made for~$e_i$, with~$i \in \set{1, \ldots, n}$
     has no consequence for the choices made for~$j \in \set{1,\ldots,i-1}$.
     This means that we can always make a choice for index~$i$ after having
     made choices for every $j < i$.
   \end{itemize}

   This completes the argument showing that every~$e'_i$ (possibly
   modifying the original choice) is in~$\hat U$,
   and shows that $\hat J \subseteq \hat U$.
   This implies, by construction of $\hat J$, that
   $\hat J \in \alternatives{$C, D \cup \set e$}$ when the set of
   events~$U$ present in memory equals $\hat U$.
   As a result, \cref{a:a1} will do a recursive call at \cref{l:a1r}
   and~$b$ will have a right child. This is what we wanted to prove.
\end{proof}

\begin{lem}
   \label{l:compl.findit}
   For any node $b \eqdef \tup{C,D,\cdot,e} \in B$ in the call graph and any
   maximal configuration $\hat C \subseteq E$ of $\uunf$,
   if~$C \subseteq \hat C$ and~$D \cap \hat C = \emptyset$ and
   \cref{l:compl.step} holds on all nodes in the subtree rooted at~$b$,
   then there is a node~$b' \eqdef \tup{C',\cdot,\cdot,\cdot} \in B$ such that
   $b \calls^* b'$, and $\hat C = C'$.
\end{lem}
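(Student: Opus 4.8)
The plan is to prove the statement by well-founded induction on the size (number of nodes) of the subtree of the call tree $\tup{B,{\calls}}$ rooted at~$b$; this subtree is finite by \cref{c:dag} and \cref{c:btree}. The two facts I will lean on are that \cref{l:compl.step} is assumed to hold at \emph{every} node of the subtree rooted at~$b$ — in particular at~$b$ itself — and that the subtree rooted at a child of~$b$ is again a subtree of the one rooted at~$b$, so both the inductive hypothesis and the standing assumption on \cref{l:compl.step} transfer automatically to the children of~$b$.

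First I would apply \cref{l:compl.step} at~$b$; its hypotheses $C \subseteq \hat C$ and $D \cap \hat C = \emptyset$ are exactly what we are given, and it yields one of three cases. If $C$ is a maximal configuration of~$\uunf$, then since $C \subseteq \hat C$ and both are configurations, \cref{l:twoconfs} gives $\hat C \setminus C = \emptyset$, i.e. $C = \hat C$, so we take $b' \eqdef b$ (with $b \calls^* b$ in zero steps); this is the base case. If $e \in \hat C$ and~$b$ has a left child $b_1 \eqdef \tup{C \cup \set e, D, \cdot, \cdot}$, then $C \cup \set e \subseteq \hat C$ and $D \cap \hat C = \emptyset$ still hold, and \cref{l:compl.step} holds on the (strictly smaller) subtree rooted at~$b_1$; so the inductive hypothesis applied to~$b_1$ yields $b' \eqdef \tup{C',\cdot,\cdot,\cdot}$ with $b_1 \calls^* b'$ and $C' = \hat C$, and then $b \callsl b_1 \calls^* b'$. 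Finally, if $e \notin \hat C$ and~$b$ has a right child $b_2 \eqdef \tup{C, D \cup \set e, \cdot, \cdot}$, then $C \subseteq \hat C$ and $(D \cup \set e) \cap \hat C = \emptyset$ (using $D \cap \hat C = \emptyset$ together with $e \notin \hat C$), and \cref{l:compl.step} holds on the subtree rooted at~$b_2$; the inductive hypothesis applied to~$b_2$ then gives $b'$ with $b_2 \calls^* b'$ and $C' = \hat C$, so $b \callsr b_2 \calls^* b'$.

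The point that deserves the most care, I expect, is the choice of induction measure. The naive candidate $|\hat C \setminus C|$ strictly decreases in the left-child case but is unchanged in the right-child case (there only~$D$ grows, within the finite set $\ex C$ by \cref{e:basic4}), so it is not well-founded for this argument; inducting on the size of the finite call subtree rooted at~$b$ sidesteps this, since each of the two recursive invocations descends into a proper, hence strictly smaller, subtree. Everything else is bookkeeping: checking that the three hypotheses ($C' \subseteq \hat C$, $D' \cap \hat C = \emptyset$, and ``\cref{l:compl.step} holds on the whole subtree'') are preserved when passing to the relevant child of~$b$, which is immediate from the shapes of the left and right recursive calls in \cref{a:a1} (\cref{l:a1l} and \cref{l:a1r}) and from the monotonicity facts \cref{e:step1}, \cref{e:step2}, and \cref{e:step3}.
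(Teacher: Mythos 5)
Your proof is correct and follows essentially the same route as the paper's: repeatedly apply \cref{l:compl.step} at the current node, descend into the left or right child according to whether $e \in \hat C$, and terminate because the call tree is finite (the paper phrases this as a finite iteration down the tree, you as induction on the size of the subtree, which is the same argument). Your explicit check of the base case via \cref{l:twoconfs} and your remark on why $|\hat C \setminus C|$ is not a valid measure are both sound.
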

\begin{proof}
   Assume that \cref{l:compl.step} holds on any node~$b'' \in B$ such that
   $b \calls^* b''$, \ie, all nodes in the subtree rooted at~$b$.
   Since $C \subseteq \hat C$ and $D \cap \hat C = \emptyset$, we can apply
   \cref{l:compl.step} to $b$ and $\hat C$.
   If~$C$ is maximal, then clearly $C = \hat C$ and we are done.
   If not we consider two cases.
   If $e \in \hat C$, then by \cref{l:compl.step} we know that~$b$ has a
   left child~$b_1 \eqdef \tup{C_1, D_1, \cdot, e_1}$,
   with $C_1 \eqdef C \cup \set e$ and $D_1 \eqdef D$.
   Finally, if $e \notin \hat C$, then equally by \cref{l:compl.step} we know
   that~$b$ has a right
   child~$b_1 \eqdef \tup{C_1, D_1, \cdot, e_1}$,
   with $C_1 \eqdef C$ and $D_1 \eqdef D \cup \set e$.
   Observe, in any case, that $C_1 \subseteq \hat C$ and
   $D_1 \cap \hat C = \emptyset$.

   If $C_1$ is maximal, then necessarily $C_1 = \hat C$, we
   take~$b' \eqdef b_1$ and we have finished.
   If not, we can reapply \cref{l:compl.step} at~$b_1$ and
   make one more step into one of the children $b_2$ of~$b_1$.
   If $C_2$ still not maximal (thus different from $\hat C$)
   we need to repeat the argument starting from~$b_2$ only a finite
   number~$n$ of times until we reach a
   node~$b_n \eqdef \tup{C_n, D_n, \cdot, \cdot}$ where $C_n$ is a maximal
   configuration.
   This is because every time we repeat the argument on a non-maximal
   node~$b_i$ we advance one step down in the call tree, and all paths in
   the tree are finite. So eventually we find a leaf node~$b_n$
   where $C_n$ is maximal and satisfies $C_n \subseteq \hat C$.
   This implies that $C_n = \hat C$, and we can take $b' \eqdef b_n$.
\end{proof}

\thmcompleteness*
\begin{proof}
   We need to show that for every maximal
   configuration~$\hat C \subseteq E$ we can find a node
   $b \eqdef \tup{C, \cdot, \cdot, \cdot}$ in~$B$
   such that $\hat C = C$.
   This is a direct consequence of \cref{l:compl.findit}.
   Consider the root node of the tree,
   $b_0 \eqdef \tup{C,D,A,\bot}$, where
   $C = \set \bot$ and $D = A = \emptyset$.
   Clearly $C \subseteq \hat C$ and
   $D \cap \hat C = \emptyset$, and
   \cref{l:compl.step} holds on all nodes of the call tree.
   So \cref{l:compl.findit} applies to $\hat C$ and $b_0$,
   and it establishes the existence of the aforementionned node~$b$.
\end{proof}

\subsection{Memory Consumption}
\label{x:memory.lazy}

The following proposition establishes that \cref{a:a1} cleans set~$U$
adequately, and that after finishing the execution of \explore{$C,D,A$}, set~$U$
has the form described by the proposition.

\begin{prop}
   Assume the function \explore{$C,D,A$} is eventually called.
   Let~$\tilde U$ and~$\hat U$ be, respectively, the values of set~$U$ in
   \cref{a:a1} immediately before and immediately after executing the call.
   If $Q_{C,D,\tilde U} \subseteq \tilde U
   \subseteq Q_{C,D,\tilde U} \cup \en C$, then
   $\hat U = Q_{C,D,\hat U}$.
\end{prop}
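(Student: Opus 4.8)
The plan is to prove the statement by induction on the finite call tree~$(B,{\calls})$ (finiteness by \cref{l:finitepath,c:dag}): the induction hypothesis is that the statement holds for the two recursive calls issued from \explore{$C,D,A$}, which sit strictly below it. I would first record three routine facts. \textbf{(i)}~$Q_{C,D,U}$ is monotone in~$U$, since $U\subseteq U'$ implies $\ficfl[U]{f}\subseteq\ficfl[U']{f}$. \textbf{(ii)}~The protected set depends only on $C\cup D$, so $Q_{C\cup\set e,D,U}=Q_{C,D\cup\set e,U}$ whenever $e\notin C$. \textbf{(iii)}~If an actually-occurring call satisfies $Q_{C,D,U}\subseteq U\subseteq Q_{C,D,U}\cup\en C$, then $U$ is causally closed and $C\cup D\subseteq Q_{C,D,U}\subseteq U$: $Q_{C,D,U}$ is a union of the configuration~$C$ (\cref{e:basic2}), of~$D$ whose causes lie in~$C$ (\cref{e:basic4}), and of local configurations~$[f']$, hence causally closed, and every event of $U\setminus Q_{C,D,U}\subseteq\en C$ has causes in~$C$.

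\textbf{Base case} ($\en C=\emptyset$). Then \explore returns right after \extend{$C$}, so $\hat U=\tilde U\cup\ex C$, and $\ex C=\cex C$. For $\hat U\subseteq Q_{C,D,\hat U}$: the hypothesis and $\en C=\emptyset$ give $\tilde U\subseteq Q_{C,D,\tilde U}\subseteq Q_{C,D,\hat U}$ by~(i), and every $e'\in\cex C$ is in immediate conflict with some $f\in C$, so $e'\in\ficfl[\hat U]{f}$ and $[e']\subseteq Q_{C,D,\hat U}$. For the reverse inclusion: $C\cup D\subseteq\tilde U\subseteq\hat U$, and $\hat U=\tilde U\cup\ex C$ is causally closed ($\tilde U$ is, by~(iii), and $\causes g\subseteq C$ for $g\in\ex C$), so all local configurations $[f']$ occurring in $Q_{C,D,\hat U}$ lie in~$\hat U$. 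Hence $\hat U=Q_{C,D,\hat U}$.

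\textbf{Inductive step} ($\en C\neq\emptyset$). Let $U_1\eqdef\tilde U\cup\ex C$ be the value of~$U$ just after \extend{$C$}, $U_2$ its value after the left call \explore{$C\cup\set e,D,A\setminus\set e$}, $U_3$ its value after the right call \explore{$C,D\cup\set e,J\setminus C$} (and $U_3\eqdef U_2$ if no alternative is taken), and $\hat U$ its value after \remove{$e,C,D$}. I would verify the induction hypothesis for the left call. First, $U_1$ is causally closed (as $\hat U$ above) and $C\cup D\cup\set e\subseteq U_1$, whence $Q_{C\cup\set e,D,U_1}\subseteq U_1$. For $U_1\subseteq Q_{C\cup\set e,D,U_1}\cup\en{C\cup\set e}$: by~(i) and the hypothesis $\tilde U\subseteq Q_{C,D,\tilde U}\cup\en C\subseteq Q_{C\cup\set e,D,U_1}\cup\en C$, so it suffices to absorb $\ex C$ (which contains $\en C$); the conflicting extensions and~$e$ land in $Q_{C\cup\set e,D,U_1}$, an event $g\in\en C$ conflicting with~$e$ lands there too through the index $f=e$ (since two events enabled at~$C$ are in immediate conflict whenever they conflict), and every remaining $g\in\en C$ lies in $\en{C\cup\set e}$. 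The induction hypothesis then gives $U_2=Q_{C\cup\set e,D,U_2}=Q_{C,D\cup\set e,U_2}$ by~(ii), which immediately verifies the hypothesis for the right call and yields $U_3=Q_{C,D\cup\set e,U_3}$; this equality also holds trivially when no right call is made.

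It remains to analyse \remove{$e,C,D$}, with input~$U_3$ and output~$\hat U$. Put $V\eqdef Q_{C,D,U_3}$; separating the $e$-indexed terms of $Q_{C,D\cup\set e,U_3}$ gives $U_3=V\cup\set e\cup\bigcup_{\hat e\in\ficfl[U_3]{e}}[\hat e]$. The key lemma is idempotence, $Q_{C,D,V}=V$ (``$\subseteq$'' is~(i); for ``$\supseteq$'', any $f'\in\ficfl[U_3]{f}$ with $f\in C\cup D$ already has $[f']\subseteq V$, so $f'\in V$ and those local configurations reappear in $Q_{C,D,V}$). With~(i), the invariant $V\subseteq U\subseteq U_3$ is preserved by every move of \remove: if $V\subseteq U$ then $Q_{C,D,U}\supseteq Q_{C,D,V}=V$, so each set $\set e\setminus Q_{C,D,U}$ or $[\hat e]\setminus Q_{C,D,U}$ moved out of~$U$ is disjoint from~$V$; hence $V\subseteq\hat U$. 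Conversely, each $g\in U_3\setminus V$ lies in $\set e$ or in some $[\hat e]$ with $\hat e\in\ficfl[U_3]{e}$, and $Q_{C,D,U}\subseteq V$ throughout (again by~(i)), so $g\notin Q_{C,D,U}$ when $e$, \resp $\hat e$, is processed and therefore~$g$ leaves~$U$ (if it has not already); so $g\notin\hat U$. Thus $\hat U=V=Q_{C,D,U_3}$, and idempotence once more gives $\hat U=Q_{C,D,V}=Q_{C,D,\hat U}$. The main obstacle is this last step: since \remove touches only the immediate-conflict neighbourhood of~$e$, one must use the precise shape $U_3=Q_{C,D\cup\set e,U_3}$ to see that those are exactly the events~$U$ can safely drop, and since the quantity $Q_{C,D,U}$ inside \remove is recomputed against the shrinking~$U$, correctness of the cleaning rests on the monotonicity-plus-idempotence invariant $V\subseteq U$, not on a naive ``delete $U_3\setminus V$'' reading; the three-way split of $\en C$ above is routine but needs care because~$e$ may be in immediate conflict with other enabled events.
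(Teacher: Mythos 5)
Your proof is correct and follows essentially the same route as the paper's: induction on the (finite) call tree, propagation of the invariant to the left call, the identity $Q_{C\cup\set e,D,U}=Q_{C,D\cup\set e,U}$ to handle the right call, and a final analysis of \remove showing $\hat U = Q_{C,D,U_3} = Q_{C,D,\hat U}$. Your idempotence lemma and the invariant $V\subseteq U\subseteq U_3$ are just a more careful packaging of the paper's equations \cref{e:u4r,e:u45} (the paper computes the removed set $R$ once against the initial value of~$U$ and asserts this is immediate, whereas you justify that the recomputed $Q_{C,D,U}$ stays equal to~$V$ as~$U$ shrinks), and your three-way split of $\en C$ tightens a slightly loose case distinction in the paper's verification of the left-call hypothesis.
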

\begin{proof}
   Let $b \eqdef \tup{C,D,A,e} \in B$ be the node in the call tree
   associated to the call to \explore{$C,D,A$}.
   The proof is by induction on the length of the longest path to a leaf
   starting from~$b$ (in the subtree rooted at~$b$).

   \emph{Base case.}
   The length is~0, $b$ is leaf node, and $C$ is a maximal configuration.
   Then $\en C = \emptyset$, so $\tilde U \subseteq Q_{C,D,\tilde U}$.
   By hypothesis $Q_{C,D,\tilde U} \subseteq \tilde U$ also holds,
   so $\tilde U = Q_{C,D,\tilde U}$.
   Now, the call to \extend{$C$} adds to $U$ only events from $\cex C$.
   So at \cref{l:a1ret}, clearly $\hat U = Q_{C,D,\hat U}$.

   \emph{Step case.}
   Let $U_1 \eqdef \tilde U$ be the value of set~$U$ immediately before
   the call to the function
   \explore{$C,D,A$}.
   Let $U_2$ be the value immediately before \cref{a:a1} makes the first
   recursive call, at \cref{l:a1l};
   $U_3$ the value immediately after that call returns;
   $U_4$ immediately after the second recursive call returns;
   and $U_5 \eqdef \hat U$ immediately after the call to \explore{$C,D,A$}
   returns.
   Assume that $Q_{C,D,U_1} \subseteq U_1 \subseteq Q_{C,D,U_1} \cup \en C$
   holds.
   Let $C' \eqdef C \cup \set e$.
   We first show that
   \[
   Q_{C',D,U_2} \subseteq U_2 \subseteq Q_{C',D,U_2} \cup \en{C'}
   \]
   holds.
   This ensures that the induction hypothesis applies to the first
   recursive call, at \cref{l:a1l}, and guarantees that
   $U_3 = Q_{C', D, U_3}$.

   Let $\tilde e$ be an event in $Q_{C',D,U_2}$.
   We show that $\tilde e \in U_2$.
   First, remark that $U_2 = U_1 \cup \ex C$.
   If $\tilde e \in C \cup D \subseteq U_1 \subseteq U_2$, we are done.
   If $\tilde e = e$, then clearly $\tilde e \in \ex C \subseteq U_2$.
   Otherwise $\tilde e$ is in $[e_1]$ for some $e_1 \in U_2$ such that
   there is some $e_2 \in C' \cup D$ with $e_1 \icfl e_2$.
   Since celarly $U_2$ is causally closed and $e_1 \in U_2$, we have that
   $\tilde e \in U_2$.

   Let $\tilde e$ be now an event in $U_2$.
   We show that $\tilde e \in Q_{C',D,U_2} \cup \en{C'}$.
   If $\tilde e \in U_1$, the clearly $\tilde e \in Q_{C',D,U_2}$
   (esentially because $U_1 \subseteq U_2$).
   So assume that $\tilde e \in U_2 \setminus U_1 = \ex C$.
   Now, observe that $\ex C \subseteq \set e \cup \ex{C'}$.
   We are done if $\tilde e \in \set e \cup \en{C'}$,
   so assume that $\tilde e \in \cex{C'}$.
   Since $C' \subseteq U_2$ and $\tilde e \in U_2$,
   by definition we have $\tilde e \in Q_{C',D,U_2}$.
   This shows that $\tilde e \in Q_{C',D,U_2} \cup \en{C'}$.

   Then by induction hypothesis we have that
   $U_3 = Q_{C', D, U_3}$ immediately after the recursive call of
   \cref{l:a1l} returns.
   Function \alternatives{$\cdot$} does not update~$U$,
   so when the second recursive call is made, \cref{l:a1r}, clearly
   \[
   Q_{C,D',U_3} \subseteq U_3 \subseteq Q_{C,D',U_3} \cup \en{C}
   \]
   holds, with $D' \eqdef D \cup \set e$.
   This is obvious after realizing the fact that
   \[
   Q_{C \cup \set e,D,U_3} = Q_{C,D \cup \set e,U_3}.
   \]
   So the induction hypothesis applies to the second recursive call as
   well, and guarantees that
   $U_4 = Q_{C, D \cup \set e, U_4}$ holds
   immediately after the recursive call of \cref{l:a1r} returns.

   Recall that our goal is proving that
   $U_5 = Q_{C, D, U_5}$.
   The difference between $U_4$ and $U_5$ are the events removed by the
   call to the function \remove{$e,C,D$}.
   Let~$R$ be such events (see below for a formal definition).
   Then we have that $U_5 = U_4 \setminus R$.
   In the sequel we show that the following equalities hold:
   \begin{equation}
   \label{e:u5}
   U_5 =
   U_4 \setminus R = 
   Q_{C,D \cup \set e,U_4} \setminus R =
   Q_{C,D,U_4} =
   Q_{C,D,U_5}
   \end{equation}
   Observe that these equalities prove the lemma.
   In the rest of this proof we prove the various equalities above.

   To prove \cref{e:u5}, first observe that the events
   removed from~$U$ by
   \remove{$e, C, D$}, called $R$ above, are exactly
   \begin{equation}
   \label{e:rem}
   R \eqdef
   \left ( \set e \cup \bigcup_{e' \in \ficfl[U_4] e} [e'] \right )
   \setminus Q_{C,D,U_4}.
   \end{equation}
   This is immediate from the definition of \remove{$\cdot$}.
   Now we prove two statements, \cref{e:u4r} and~\cref{e:u45},
   that imply the validity of~\cref{e:u5}.
   We start stating the first:
   \begin{equation}
   \label{e:u4r}
   Q_{C,D \cup \set e,U_4} \setminus R = Q_{C,D,U_4}.
   \end{equation}
   This equality intuitively says that (left-hand side) executing
   \remove{$e,C,D$} when the set $U$ contains the events in $U_4$
   (remember that $U_4 = Q_{C,D \cup \set e,U_4}$) leaves in $U$
   exactly (right-hand side)
   all events in $C$, all events in $D$, and all events that causally
   precede some other event from $U$ (in fact, $U_4$) which is is conflict
   with some event in $C \cup D$.
   For the shake of clarity, unfolding the definitions in \cref{e:u4r}
   yields the following equivalent equality:
   \[
   \left (
   C \cup D \cup \set e \cup
   \bigcup_{\substack{e' \in C \cup D \cup \set e \\
   e'' \in \ficfl[U_4]{e'} }}
   [e'']
   \right )
   \setminus
   \left (
   \left (
   \set e \cup \bigcup_{e' \in \ficfl[U_4] e} [e'] 
   \right )
   \setminus Q_{C,D,U_4}
   \right )
   =
   Q_{C,D,U_4}
   \]
   We now prove \cref{e:u4r}.
   Let $\tilde e$ be an event contained in the left-hand side.
   We show that $\tilde e$ is in $Q_{C,D,U_4}$.
   We are done if $\tilde e \in C \cup D$.
   If $\tilde e = e$, then $\tilde e \notin R$.
   Now, from the definition \cref{e:rem} of~$R$ we get that
   $\tilde e \in Q_{C,D,U_4}$.
   Lastly, if $\tilde e \notin C \cup D \cup \set e$,
   then there is some event $e' \in C \cup D \cup \set e$
   and some event $e'' \in U_4$ such that
   $e' \icfl e''$ and $\tilde e \le e''$.
   If $e' \in C \cup D$, then by defnition
   $\tilde e \in Q_{C,D,U_4}$.
   The case that $e' = e$ cannot happen, as we show now.
   Since $\tilde e$ is in the left-hand side, $\tilde e$ is not in~$R$.
   If $\tilde e \notin R$,
   then $\tilde e$ is either in $Q_{C,D,U_4}$, as we wanted to show,
   or $\tilde e$ is not in
   $\set e \cup \bigcup_{\hat e \in \ficfl[U_4] e} [\hat e]$.
   This means that $e' \ne e$.

   For the opposite direction,
   let $\tilde e$ be an event in $Q_{C,D,U_4}$.
   We show that it is contained in the left-hand side set.
   By definition $\tilde e \notin R$.
   If $\tilde e \in C \cup D$, clearly $\tilde e$ is in the left-hand side.
   If not,
   then there is some event $e' \in C \cup D$
   and some event $e'' \in U_4$ such that
   $e' \icfl e''$ and $\tilde e \le e''$.
   Then by definition~$\tilde e$ is in the left-hand side.
   This completes the proof of~\cref{e:u4r}.

   The second statement necessary to prove \cref{e:u5} is the following:
   \begin{equation}
   \label{e:u45}
   Q_{C,D,U_4} =
   Q_{C,D,U_5}
   \end{equation}
   From left to right.
   Assume that $\tilde e \in Q_{C,D,U_4}$.
   Routinary if $\tilde e \in C \cup D$.
   Assume otherwise that there is some $e_1 \in C \cup D$
   and~$e_2 \in \ficfl[U_4]{e_1}$ such that $\tilde e \in [e_2]$.
   We show that $e_2 \in U_5$, which clearly proves that
   $\tilde e \in Q_{C,D,U_5}$.
   By definition $e_2 \in U_4$.
   By \cref{e:rem}, clearly $e_2 \notin R$, as $e_2 \in Q_{C,D,U_4}$.
   Since $U_5 = U_4 \setminus R$ we have that $e_2 \in U_5$.

   From right to left the proof is even simpler.
   Assume that $\tilde e \in Q_{C,D,U_4}$.
   Routinary if $\tilde e \in C \cup D$.
   Assume otherwise that there is some $e_1 \in C \cup D$
   and~$e_2 \in \ficfl[U_5]{e_1}$ such that $\tilde e \in [e_2]$.
   Since $U_5 \subseteq U_4$, clearly $e_2 \in U_4$ and so
   $e_2 \in Q_{C,D,U_4}$.
   Then $\tilde e \in Q_{C,D,U_4}$ as the latter is causally closed.
\end{proof}


\section{Proofs: Improvements}
\label{x:impro}

\subsection{Completeness with Cutoffs}
\label{x:cutoffs}

\newcommand\pred{\ensuremath{\mathcal{P}_1}}
\newcommand\pblue{\ensuremath{\mathcal{P}_2}}

In \cref{s:cutoffs} we describe a modified version of \cref{a:a1}, where
the \extend procedure has been replaced by the \extendcut procedure.
The updated version uses a predicate $\iscutoff{e,U,G}$ to decide when an event
is added to~$U$.
We refer to this version as the \emph{updated algorithm}.

Like \cref{a:a1}, the updated algorithm also explores a binary tree.
It works by, intuitively, ``\emph{allowing}'' \cref{a:a1} to ``\emph{see}'' only
the non-cutoff events. The terminal configurations it will explore,
\ie, those at which the procedure \ena{$C$} of \cref{a:a1} returns an empty set,
will be those for which any enabled event in $\en C$ has been declared a cutoff.

Many properties remain true in the updated algorithm, \eg, \cref{l:general}.
Consider the set of terminal configurations explored by the updated
algorithm, and let us denote them by
\[
C_1, C_2, \ldots, C_n.
\]
Let $\ppref' \eqdef \tup{E', <', {\cfl'}}$ be the unique prefix of~$\uunf$
whose set of events~$E'$ equals $\bigcup_{1 \le i \le n} C_i$.
Whenever \cref{a:a1} is applied to an acyclic state-space (all executions
terminate), the following properties hold:
\begin{itemize}
\item
  $\ppref' = \uunf$;
\item
  Each configuration $C_i$ is a maximal configuration of~$\ppref'$.
\end{itemize}
However, when we apply the updated algorithm to an arbitrary system (with
possibly non-terminating executions), none of these properties remain valid in
general.
Obviously the first one will not be valid, \eg, if $\uunf$ is infinite, this was
expected and intended.
The second property will also not be valid in general, essentially because one
event could be declared as cutoff when exploring one configuration and as
non-cutoff when exploring another configuration. We illustrate this with an
abstract example.

\begin{exampl}
Assume that $\uunf$ is infinite and has only two maximal (infinite) configurations.
The updated algorithm will explore the first until reaching some
first terminal (and finite) configuration $C_1$ where all events in $\en{C_1}$
have been declared as cutoffs.
Let $e$ be one of those cutoffs in $\en{C_1}$, and $e'$ the corresponding event
in~$U \cup G$.
The algorithm will then backtrack, and start exploring the second configuration.
It could then very well reach a configuration that enables $e$.
The updated algorithm will have to re-decide whether $e$ is a cutoff.
If it decides that it is not, \eg, because the corresponding event $e'$ has been
discarded from $U \cup G$, it could add $e$ to~$C$, and so the second maximal
configuration $C_2$ explored in this way will contain some event enabled
by~$C_1$.
This implies that $C_1$ is not a maximal configuration of the prefix $\ppref'$.
\end{exampl}

This means essentially that proving that $\ppref'$ is a \emph{complete
prefix}~\cite{ERV02} is not a valid strategy
for proving \cref{t:cutoff.completeness},
since potentially there exists configurations $C$ of $\ppref'$ such that
$C \not\subseteq C_i$ for any $1 \le i \le n$.

Alternatively, we could try to reason using a variant of McMillan's standard
argument~\cite{Mcm93,ERV02,BHKTV14} (largely used in the literature about
unfoldings for proving that some unfolding prefix is complete).
Given a state $s \in \reach M$, we want to show that there is some
configuration $C$ such that
\begin{equation}
\label{e:mcmfails}
\state C = s \text{ and } C \subseteq C_i \text{ for some } 1 \le i \le n.
\end{equation}
We know that $\uunf$ contains some configuration $C'$ such that $\state{C'} =
s$. If $C'$ satisfies \cref{e:mcmfails} we are done.
If not, the usual argument now finds that $C'$ has a cutoff event,
but this does not work in our context:
we can easily show that some maximal configuration of $\ppref'$ enables
some event in $C'$ but not in $\ppref'$ (the wished cutoff),
but there is no guarantee that that maximal configuration is one of the $C_i$'s
above, so there is no guarantee that the updated algorithm has explicitly
declared that event as cutoff.

As a result, we resort to a completely different argument.
The main idea is simple. We divide the set of events in $\ppref'$
in two parts, the \emph{red} events and the \emph{blue} events.
Red events are such that the updated algorithm never declares them cutoff,
blue events have at least been declared once cutoff and once non-cutoff.
We next show two things.
First, that the red events contain one representative configuration for every
reachable marking (contain a complete prefix).
Second, that every configuration formed by red events has been explored by the
updated algorithm.
Together, these implies \cref{t:cutoff.completeness}.

We start with two definitions.
\begin{itemize}
\item
  Let the \emph{red prefix} be the unique prefix
  $\pred \eqdef \tup{E_1, {<}, {\cfl}}$
  of~$\uunf$
  formed by those events $e$ added at least once to~$U$ by the updated algorithm
  and such that every time \extendcut evaluated the predicate
  $\iscutoff{e,U,G}$, the result was \emph{false}.
\item 
  Let the \emph{blue prefix} be the unique prefix
  $\pblue \eqdef \tup{E_2, {<}, {\cfl}}$
  of~$\uunf$
  such that $E \eqdef \bigcup_{1 \le i \le n} C_i$.
\end{itemize}
Observe that $\pblue$ is in fact what we called $\ppref'$ so far.
Notice also that $E_1 \subseteq E_2$.

In \cref{s:cutoffs} we defined the $\iscutoff \cdot$ predicate using
McMillan's size order. Here we redefine it to use an arbitrary \emph{adequate
order}. This allows us to prove a more general version of \cref{t:cutoff.completeness}.
Let $\prec$ be an adequate order (we skip the definition,
the interested reader can find it in~\cite{ERV02}) on the configurations
of~$\uunf$.
We define~$\iscutoff{e,U,G}$ to hold iff
there exists some event $e' \in U \cup G$ such that
\begin{equation}
\label{e:cutoff.adeq}
\state{[e]} = \state{[e']}
\text{ ~ and ~ }
[e'] \prec [e].
\end{equation}
The \emph{size order} from McMillan, which we used in \cref{s:cutoffs} is indeed
adequate~\cite{ERV02}.

We now need to define the canonical prefix associated with~$\prec$
(we refer the reader to~\cite{EH08}, to avoid increasing the limited space in
the References section, although a better reference would be
[Khomenko, Koutny, Vogler 2002]).
We give a simplified definition.
Given a event $e \in E$, we call it \emph{$\prec$-cutoff} iff
there exists some other event $e' \in E$ such that \cref{e:cutoff.adeq} holds.
Observe that we now search $e'$ in~$E$ and not in~$U \cup G$.
The \emph{$\prec$-prefix} is the unique $\ispref$-maximal unfolding prefix that
contains no \mbox{$\prec$-cutoff}.
It is well known~\cite{EH08} that,
(1) the $\prec$-prefix exists and is unique,
(2) it is \emph{marking-complete}, \ie, for every $s \in \reach M$, there is some
configuration $C$ in \mbox{$\prec$-cutoff} such that $\state C = s$.

The key observation now is that all events in $\prec$-prefix are red,
\ie, the $\prec$-prefix is a prefix of $\pred$.
Clearly,
regardless of the actual contents of~$U$ and~$G$ when $\iscutoff{e,U,G}$ is
evaluated, the result will always be \emph{false}
if~$e$ is not \mbox{$\prec$-cutoff}.

So, in order to prove \cref{t:cutoff.completeness}, it suffices to show that
every \emph{red} configuration from $\pred$ is contained in some node explored
the algorithm. We achieve this with \cref{l:cutoff.step} and
\cref{l:cutoff.findit}.

\begin{lem}
   \label{l:cutoff.step}
   Let $b \eqdef \tup{C,D,A,e} \in B$ be a node in the call graph and
   $\hat C \subseteq E_1$ an arbitrary \emph{red} configuration in $\pred$,
   such that the following two conditions are verified:
   \begin{enumerate}
   \item
     $C \cup \hat C$ is a configuration, and
   \item
     for any $\tilde e \in D$ there is some
     $e' \in \hat C$ such that $\tilde e \icfl e'$.
   \end{enumerate}
   Then exactly one of the following statements hold:
   \begin{itemize}
   \item
     Either $b$ is a leaf node in $B$, or
   \item
     for any $\hat e \in \hat C$ we have $\lnot (e \icfl \hat e)$
     and~$b$ has a left child, or
   \item
     for some $\hat e \in \hat C$ we have $e \icfl e$
     and~$b$ has a right child.
   \end{itemize}
\end{lem}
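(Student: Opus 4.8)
The plan is to mimic closely the proof of \cref{l:compl.step}, transporting it to the cutoff setting. The three alternatives are pairwise exclusive: a leaf node issues no recursive call, and the conditions ``$\exists\,\hat e\in\hat C$ with $e\icfl\hat e$'' and its negation partition the remaining cases; since a non-leaf node always issues the left recursive call at \cref{l:a1l}, the only claim with content is that a \emph{right} child is issued when $e$ is in immediate conflict with some event of~$\hat C$. I would first record two elementary facts that use only that $C\cup\hat C$ is a configuration: (a)~condition~2 forces $D\cap\hat C=\emptyset$, since two conflicting events cannot both live in the conflict-free set $\hat C$; and (b)~for any event $f$ with $\causes f\subseteq C$, if $f$ conflicts with some event of $\hat C$ then it is in \emph{immediate} conflict with the $<$-minimal such event, because all the relevant causal predecessors sit inside the configuration $C\cup\hat C$.

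The argument then runs by induction on $b$ with respect to the in-order relation $\lessdot$ on the call tree, exactly as in \cref{l:compl.step}. The $\lessdot$-least node is the leftmost leaf, so the first alternative holds there. In the step case, if $b$ is a leaf we are done; otherwise $b$ has a left child. If no $\hat e\in\hat C$ satisfies $e\icfl\hat e$, the second alternative holds: by fact~(b) the event $e$ then has no conflict with $\hat C$ at all, so $C\cup\set e\cup\hat C$ is still a configuration, which is what makes conditions~1 and~2 propagate to the left child. Otherwise I fix such an~$\hat e$, enumerate as $e_1,\dots,e_{n-1}$ the events of $D$ that have no immediate-conflict witness already inside~$C$, set $e_n\eqdef e$ and $e'_n\eqdef\hat e$, pick for each $i<n$ a witness $e'_i\in\hat C$ with $e_i\icfl e'_i$ (available from condition~2), and take $\hat J\eqdef\bigcup_{i=1}^{n}[e'_i]$. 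Since $\hat C$ is causally closed, $\hat J\subseteq\hat C$, so $C\cup\hat J$ is a configuration and \cref{e:alt2} holds; events of $D$ lying in $\cex C$ are witnessed inside $C$, which lies in $U$. It thus remains to prove $\hat J\subseteq\hat U$, where $\hat U$ is the value of $U$ when \alternatives is called at~$b$: this gives \cref{e:alt1}, hence $\hat J$ is a genuine alternative and a right child is issued.

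For the inclusion $\hat J\subseteq\hat U$ I would reuse the bookkeeping of \cref{l:compl.step}: trace the path $b_0\calls^*b_1\mathrel{{\callsr}{\calls^*}}\cdots\mathrel{{\callsr}{\calls^*}}b_n=b$, with $b_i$ the node at which $e_i$ is selected, and let $U_i$ be the value of $U$ at the \alternatives-call of $b_i$ (so $U_n=\hat U$). It suffices to show $e'_i\in U_i$: from $b_i$ onwards $e_i$ stays in~$D$ (for $i<n$ a right step put it there, and $D$ only grows by \cref{e:step1}; the case $i=n$ uses instead that $e$ stays in the first component throughout the left subtree of $b$), so $e'_i$, being in immediate conflict with $e_i$, lies in $Q_{C,D,U}$ at every later node and is therefore never moved out of $U$ by \remove, giving $e'_i\in U_n=\hat U$. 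To place $e'_i$ in $U_i$ in the first place one exhibits a terminal configuration $C^\star$ explored in the left subtree of $b_i$ that contains both $e_i$ and $\causes{e'_i}$: then $\causes{e'_i}\subseteq C^\star$ and $e'_i\notin C^\star$, i.e.\ $e'_i\in\ex{C^\star}$, and since $e'_i\in\hat C\subseteq E_1$ is \emph{red} the predicate $\iscutoff{e'_i,U,G}$ is always false, so \extendcut actually inserts $e'_i$. The existence of $C^\star$ comes from \cref{l:cutoff.findit} applied to the left child of $b_i$ (whose whole subtree lies $\lessdot b$, so the induction hypothesis applies there); the crucial move is to feed \cref{l:cutoff.findit} the \emph{red} target $\causes{e'_i}\subseteq E_1$, while the possibly-blue event $e_i$ and the starting configuration $C_i\cup\set{e_i}$ of the left child carry the rest. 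When no such configuration avoiding $D_i$ exists --- the ``Case~2'' situation of \cref{l:compl.step} --- one descends to the least index $m<i$ with $\fcfl{e_i}\cap\hat C\supseteq\fcfl{e_m}\cap\hat C$ and runs the same $X_1,X_2,X_3$ construction, with $C\cup\hat C$ now playing the conflict-free ambient role that $\hat C$ played in \cref{l:compl.step}, and again invokes \cref{l:cutoff.findit} only on the red part $\bigl((C\cup\hat C)\setminus\fcfl{e_i}\bigr)\cup\causes{e'_m}$, keeping the blue $e_m$ inside the starting configuration.

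I expect the main obstacle to be precisely this interaction with cutoffs. In \cref{l:compl.step} every auxiliary configuration produced along the way is automatically explored, whereas here \extendcut can refuse events, so \cref{l:cutoff.findit} may only be invoked on configurations lying inside the red prefix~$\pred$; making this compatible with the $X_1/X_2/X_3$ bookkeeping requires re-deriving every conflict-freeness and causal-closedness check of \cref{l:compl.step} under the weaker hypothesis ``$C\cup\hat C$ is a configuration'' (the hypothesis ``$C\subseteq\hat C$'' is no longer available) and carefully splitting each target configuration into a red part handed to \cref{l:cutoff.findit} and an already-explored starting part. A secondary point, already visible in the statement, is that a terminal configuration of the updated algorithm is one all of whose enabled events are cutoffs, not a maximal configuration of $\uunf$ --- which is why the first alternative reads ``$b$ is a leaf node''; consequently both the base case and the ``$C$ maximal'' step of \cref{l:compl.step} collapse here to the single observation that if $b$ is a leaf, the first alternative holds.
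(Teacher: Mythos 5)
Your skeleton does match the paper's: induction along the in-order relation $\lessdot$ on the call tree, the only substantive case being to build an alternative $J \eqdef [\set{e'_1,\ldots,e'_n}] \subseteq \hat C$ from immediate-conflict witnesses supplied by condition~2, and to prove $J \subseteq \hat U$ by showing that each $e'_i$ enters $U$ while the left subtree of $b_i$ is explored (redness of $e'_i$ being exactly what forces \extendcut to insert it) and is then retained by $Q_{C,D,U}$ for as long as $e_i$ stays in $D$. All of this is in the paper's proof.

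The gap sits precisely at the point you flag as ``the main obstacle'', and the concrete devices you propose there do not work. Your Case~1/Case~2 split is imported from \cref{l:compl.step} and quantifies over maximal configurations of $\uunf$ avoiding $D_i$; with cutoffs such configurations may be infinite, need not be red, and need not be explored, so they cannot be handed to \cref{l:cutoff.findit}, whose hypotheses (unlike those of \cref{l:compl.findit}) require a \emph{red} target witnessing a conflict for every event of the node's $D$-set. The targets you then propose fail these hypotheses on both sides: $\causes{e'_i}$ is too small, since it contains no conflict witnesses for the events of $D_i$; and $\bigl((C\cup\hat C)\setminus\fcfl{e_i}\bigr)\cup\causes{e'_m}$ is not red, because an event of $C$ was non-cutoff when it was added to the current $U$ but may have been judged cutoff at another moment of the exploration, hence may be blue, and the excess $C\setminus C_m$ cannot be pushed into the fixed starting configuration $C_m\cup\set{e_m}$ of the left child either. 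The paper's resolution, which is the missing idea, is to change the case split itself: split directly on whether some $j<i$ satisfies $\fcfl{e_j}\cap\hat C\subseteq\fcfl{e_i}\cap\hat C$. If no such $j$ exists, the target $X_1\eqdef\hat C\setminus\fcfl{e_i}$ is a subset of $\hat C$ (hence red), is compatible with $C_i\cup\set{e_i}$, contains $\causes{e'_i}$, and by the case hypothesis contains a witness against every event of $D_i$; if such $j$ exists, take the least one $m$ and use $X_2\eqdef(\hat C\setminus\fcfl{e_i})\cup\causes{e'_m}$ at the left child of $b_m$, then transport the witness back to $e_i$ through causal closure of $\hat U$, as you describe. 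The key structural point is that $C$ never needs to appear in the target at all: \cref{l:cutoff.findit} only asks that the target be \emph{compatible} with the node's configuration, not that it contain it, and this relaxation is exactly what lets the red part be separated from the already-explored blue part.
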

\begin{proof}
   The statement of this lemma is very similar to the one of
   \cref{l:compl.step}, the main lemma behind the proof of
   \cref{t:a1.completeness} (completeness).
   Consequently the proof is also similar.
   The proof is by induction on~$b$ using the same total order
   ${\lessdot} \in B \times B$ that we employed for \cref{l:compl.step}.

   \emph{Base case.}
   Node~$b$ is the least element in~$B$ \wrt $\lessdot$.
   It is therefore the leftmost leaf of the call tree.
   Then the first item holds.

   \emph{Step case.}
   Assume that the result holds for any node $\tilde b \lessdot b$.
   If~$C$ is maximal, we are done. 
   So assume that $C$ is not maximal.
   Then~$b$ has at least one left child.
   If we can find some $\hat e \in \hat C$ such that
   $\hat e \icfl e$, then the second item holds and we are done.
   
   So assume that that for some
   $\hat e \in \hat C$ it holds that $\hat e \icfl e$.
   We show that the third item holds in this case.
   For that we need to show that $b$ has a right child.
   The rest of this proof accomplishes that,
   it shows that there is some alternative
   $J \in \alternatives{$C, D \cup \set e$}$ whenever the algorithm asks
   for the existence of one.

   We define the set
   \[
   F \eqdef \set{e_1, \ldots, e_n} \eqdef D \cup \set e.
   \]
   This set contains the events that
   the alternative~$J$ needs to \emph{justify}.
   Let $e_i$ be any event in~$F$.
   By hypothesis there exists some $e'_i \in \hat C$ such that
   $e_i \icfl e'_i$.
   Thus, there exists at least one set
   \[
   J \eqdef [\set{e'_1, \ldots, e'_n}]
   \]
   where
   $e'_i \in \hat C$ and $e_i \icfl e'_i$
   for $i \in \set{1, \ldots, n}$.
   Clearly, $J \subseteq \hat C$ and so it is a red configuration of~$\pred$.
   We remark that~$J$ is not uniquely defined,
   there may be several~$e'_i$ to choose for each~$e_i$.
   For now, take any suitable $e'_i$ without further regard.
   We will later refine this choice if necessary.

   We show now that $J \in \alternatives{$C, D \cup \set e$}$
   when function \alternatives{$\cdot$} is called just before
   \cref{l:a1r} during the execution of \explore{$C,D,A$}.
   Let $\hat U$ be the set of events contained in the variable~$U$
   exactly when \alternatives{$\cdot$} is called.

   By construction $J \cup C$ is configuration, and contains an event in
   conflict with any event in $D \cup \set e$.
   We only need to check that~$J \subseteq \hat U$, \ie,
   that all events in~$J$ were are known (in fact, \emph{remembered})
   when function \alternatives{$\cdot$} is called.

   We reason about the call stack when the algorithm is situated at
   $b = \tup{C,D,A,e}$.
   For~$i \in \set{1, \ldots, n}$ let
   $b_i \eqdef \tup{C_i, D_i, A_i, e_i} \in B$ be the node in the call
   graph associated to event $e_i \in F$.
   These nodes are all situated in the unique path from~$b_0$ to~$b$.
   \Wlog assume (after possible reordering of the index~$i$) that
   \[
   b_0 \calls^*
   b_1 \calls^*
   b_2 \calls^*
   \ldots \calls^*
   b_n,
   \]
   where $b_n = b$ and $e_n = e$.
   Since every event $e_i$ is in $D = D_n$,
   for $i \in \set{1, \ldots, n-1}$,
   we know that the first step in the path that goes
   from $b_i$ to $b_{i+1}$ is a \emph{right child}.
   Also, we remark that by construction we have
   $\set{e_1, \ldots, e_{i-1}} = D_i$ for every 
   $i \in \set{2, \ldots, n}$.

   We need to show that $e'_i \in \hat U$, for $i \in \set{1,\ldots,n}$.
   We consider two cases.
   Consider the set $D_i = \set{e_1, \ldots, e_{i-1}}$.
   Only two things are possible:
   either there exists some $j \in \set{1, \ldots, i-1}$ such that
   \begin{equation}
   \label{e:ej}
   \fcfl{e_j} \cap \hat C \subseteq \fcfl{e_i} \cap \hat C
   \end{equation}
   holds, or for all $j \in \set{1, \ldots, i-1}$ the above statement is false.

   \begin{itemize}
   \item
     \emph{Case 1}:
     for all $j \in \set{1, \ldots, i-1}$ we have that \cref{e:ej} do not hold.
     This means that for all such~$j$,
     some event in $\fcfl{e_j} \cap \hat C$
     is not in $\fcfl{e_i} \cap \hat C$.
     Consider the set
     \[
     X_1 \eqdef \hat C \setminus \fcfl{e_i}.
     \]
     It is a red configuration of $\pred$, which satisfies the following
     properties:
     \begin{itemize}
     \item
       \emph{Fact 1: set $X_1 \cup C_i \cup \set{e_i}$ is a configuration.}
       Since $X_1 \cup C_i \subseteq \hat C \cup C$,
       clearly $X_1 \cup C_i$ is a configuration.
       Also, $X_1$ has no event in conflict with $e_i$ by construction.
     \item
       \emph{Fact 2: for any $\tilde e \in D_i$ there is some
       $e' \in X_1$ such that $\tilde e \icfl e'$.}
       This holds by construction.
       For any $\tilde e \in D_i = \set{e_1, \ldots, e_{i-1}}$ we know that
       some event in $\fcfl{\tilde e} \cap \hat C$ is not in
       $\fcfl{e_i} \cap \hat C$, so it is necessarily in $X_1$.
     \end{itemize}
     Consider the left
     child~$b'_i \eqdef \tup{C_i \cup \set{e_i}, D_i, \cdot, \cdot}$
     of~$b_i$.
     Every node $\hat b$ in the subtree rooted at $b'_i$
     (\ie, $b'_i \calls^* \hat b$)
     is such that $\hat b \lessdot b_i \lessdot b$.
     The induction hypothesis thus applies to~$\hat b$.
     By the previous facts,
     \cref{l:cutoff.findit} applied to~$b'_i$ and~$X_1$ implies that
     some leaf (maximal) configuration
     $C' \supseteq X_1$ has been explored in the subtree rooted at~$b'_i$.
     Since $e'_i$ is a red event (it will never be declared cutoff)
     and $e'_i \in \ex{C'}$, event $e'_i$ will be discovered when
     exploring~$C'$, and will be kept in~$U$ as long as~$e_i$ remains
     in~$U$. As a result $e'_i \in \hat U$, which we wanted to prove.
   \item
     \emph{Case 2}:
     there is some $j \in \set{1, \ldots, i-1}$ such that \cref{e:ej}
     holds.
     Let~$m$ be the minimum such integer.
     Consider the set $X_2$ defined as
     \[
     X_2 \eqdef \hat C \setminus \fcfl{e_i} \cup \causes{e'_m}
     \]
     It is clearly a subset of $\hat C$, so it is a red configuration of
     $\pred$, and it satisfies the following properties:
     \begin{itemize}
     \item
       \emph{Fact 3: set $X_2 \cup C_m \cup \set{e_m}$ is a configuration.}
       Since $X_2 \cup C_m \subseteq \hat C \cup C$,
       clearly $X_2 \cup C_m$ is a configuration.
       Also, $X_2$ has no event in conflict with $e_m$,
       since all such events are in
       $\fcfl{e_i}$ and we have removed them.
       Observe that by adding $\causes{e'_m}$ we do no add any conflict,
       as there is no conflict between $e_m$ and any event of
       $\causes{e'_m}$.
     \item
       \emph{Fact 4: for any $\tilde e \in D_m$ there is some
       $e' \in X_2$ such that $\tilde e \icfl e'$.}
       This holds by construction, as a result of the minimality of~$m$.
       For any $\tilde e \in D_m = \set{e_1, \ldots, e_{i-m}}$ we know that
       \cref{e:ej} do not hold for $\tilde e$.
       So some event in $\fcfl{\tilde e} \cap \hat C$ is not in
       $\fcfl{e_i} \cap \hat C$, and so it is necessarily in $X_2$.
     \end{itemize}
     Like before, consider now the left
     child~$b'_m \eqdef \tup{C_m \cup \set{e_m}, D_m, \cdot, \cdot}$
     of~$b_m$.
     The induction hypothesis applies to any node~$\hat b \in B$
     in the subtree rooted at $b'_m$ (\ie, $b'_m \calls^* \hat b$).
     By the previous facts,
     \cref{l:cutoff.findit} applied to~$b'_m$ and~$X_2$ implies that
     some leaf (maximal) configuration
     $C' \supseteq X_2$ has been explored in the subtree rooted at~$b'_m$.
     Since $e'_m$ is a red event (it will never be declared cutoff)
     and $e'_m \in \ex{C'}$, event $e'_m$ will be discovered when
     exploring~$C'$, and will be kept in~$U$ as long as~$e_m$ remains
     in~$U$. As a result $e'_m \in \hat U$.

     We actually wanted to prove that $e'_i$ is in $\hat U$.
     This is now easy.
     Since $e'_m \in \fcfl{e_i}$, by \cref{e:ej},
     there is some $\tilde e \in \ficfl[]{e_i}$ such
     that~$\tilde e \le e'_m$.
     Since~$\hat U$ is causally closed, we have that
     $\tilde e \in \hat U$.

     We have found some event $\tilde e \in \hat U$ such that
     $e_i \icfl \tilde e$.
     If $\tilde e \ne e'_i$, then we substitute~$e'_i$ in~$J$
     by~$\tilde e$.
     This means that in the definition of~$J$ we cannot chose any
     arbitrary $e'_i$ from~$\hat C$
     (as we said before, to keep things simple).
     But we can always find at least one event in~$\hat C$ that is in
     immediate conflict with~$e_i$ and is also present in~$\hat U$.
     Observe that the choice made for~$e_i$, with~$i \in \set{1, \ldots, n}$
     has no consequence for the choices made for~$j \in \set{1,\ldots,i-1}$.
     This means that we can always make a choice for index~$i$ after having
     made choices for every $j < i$.
   \end{itemize}

   This completes the argument showing that every~$e'_i$ (possibly
   modifying the original choice) is in~$\hat U$,
   and shows that $J \subseteq \hat U$.
   This implies, by construction of~$J$, that
   $J \in \alternatives{$C, D \cup \set e$}$ when the set of
   events~$U$ present in memory equals $\hat U$.
   As a result, the algorithm will do a \emph{right recursive call} 
   and~$b$ will have a right child. This is what we wanted to prove.
\end{proof}

\begin{lem}
   \label{l:cutoff.findit}
   Let $b \eqdef \tup{C,D,\cdot,e} \in B$ be any node the call graph.
   Let $\hat C \subseteq E_1$ be any configuration of $\pred$, \ie,
   consisting \emph{only} of red events. Assume that
   \begin{itemize}
   \item
     $C \cup \hat C$ is a configuration;
   \item
     for any $\tilde e \in D$ there is some
     $e' \in \hat C$ such that $\tilde e \icfl e'$;
   \item
     \cref{l:cutoff.step} holds on every node in the subtree rooted at~$b$.
   \end{itemize}
   Then there exist in~$B$ a node~$b' \eqdef \tup{C',\cdot,\cdot,\cdot}$
   such that $b \calls^* b'$ and $\hat C \subseteq C'$.
\end{lem}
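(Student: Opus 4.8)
The plan is to follow the proof of \cref{l:compl.findit}, the analogous statement in the cutoff-free setting, replacing its appeal to \cref{l:compl.step} by one to \cref{l:cutoff.step} and weakening the invariant to what is available here. Recall that the updated algorithm, like \cref{a:a1}, explores a finite binary tree all of whose root-to-leaf paths are finite (the analogs of \cref{c:btree} and \cref{l:finitepath}), so I would argue by induction on the height of the subtree of the call graph rooted at~$b$. The invariant carried along the descent is precisely the pair of hypotheses of the lemma --- that $C \cup \hat C$ is a configuration, and that every $\tilde e \in D$ is in immediate conflict with some event of $\hat C$ --- together with the standing assumption that \cref{l:cutoff.step} holds throughout the subtree rooted at the current node, which passes to any child trivially.

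For the base case $b$ is a leaf. I would show $\hat C \subseteq C$, so that $b' \eqdef b$ works. Suppose not; then $C \cup \hat C$ is a configuration strictly larger than $C$, so by \cref{l:twoconfs} there is an event $f \in \en C \cap (\hat C \setminus C)$. Since $f \in \hat C \subseteq E_1$, $f$ is a red event, so the predicate $\iscutoff{f,U,G}$ is \emph{false} whenever it is evaluated, in particular when \extendcut processes $\ex C$ at $C$; hence $f$ is inserted into~$U$, so $\en C \cap U \ne \emptyset$, contradicting that $b$ is a leaf. For the step case $b$ is not a leaf, so \cref{l:cutoff.step} at $b$ gives either (i) $e$ is in immediate conflict with no event of $\hat C$, and $b$ has a left child $b_1 \eqdef \tup{C \cup \set e, D, \cdot, e_1}$, or (ii) $e \icfl{} \hat e$ for some $\hat e \in \hat C$, and $b$ has a right child $b_1 \eqdef \tup{C, D \cup \set e, \cdot, e_1}$. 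In case (ii), the invariant transfers at once: $C \cup \hat C$ is unchanged, and every $\tilde e \in D \cup \set e$ is in immediate conflict with some event of $\hat C$ --- from the hypothesis on $b$ if $\tilde e \in D$, and from case (ii) itself if $\tilde e = e$. In case (i), the conflict condition for $D_1 = D$ is inherited from $b$, and the only non-trivial point is that $C \cup \set e \cup \hat C$ is a configuration (see below). Either way, the induction hypothesis applied to $b_1$ (whose subtree is strictly shorter) yields $b' \eqdef \tup{C',\cdot,\cdot,\cdot}$ with $b_1 \calls^* b'$ and $\hat C \subseteq C'$, so $b \calls b_1 \calls^* b'$ and we are done.

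The main obstacle is the configuration-preservation claim in case (i): that $C \cup \set e \cup \hat C$ is a configuration. Causal closure is immediate, because $e \in \en C$ gives $\causes e \subseteq C$ and $\hat C$ is causally closed. For conflict-freeness, $C \cup \set e$ and $C \cup \hat C$ are conflict-free by hypothesis, so a residual conflict would have to lie between $e$ and some $\hat e \in \hat C$; invoking the standard fact that any conflict $e \cfl \hat e$ in a LES is witnessed by an immediate conflict $e' \icfl{} e''$ with $e' \le e$ and $e'' \le \hat e$, and using $\causes e \subseteq C$, the causal closure of $\hat C$, and the conflict-freeness of $C \cup \hat C$, one forces $e' = e$ and $e'' \in \hat C$, \ie $e \icfl{} e''$ with $e'' \in \hat C$ --- contradicting case (i). Hence $C \cup \set e \cup \hat C$ is a configuration and the invariant holds at the left child. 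The rest is bookkeeping transported essentially verbatim from \cref{l:compl.findit}; in particular, $\hat C$ is a fixed configuration of $\pred$ throughout, so it stays a configuration of $\pred$ with no further work.
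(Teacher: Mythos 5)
Your proof is correct and follows essentially the same route as the paper's: a descent of the call tree driven by \cref{l:cutoff.step}, carrying the two-part invariant ($C \cup \hat C$ a configuration, every event of $D$ immediately conflicting with some event of $\hat C$), and terminating at a leaf where the redness of the events of $\hat C$ forces $\hat C \subseteq C$. The only substantive difference is that you explicitly justify the configuration-preservation step at a left child --- reducing an arbitrary conflict between $e$ and $\hat C$ to an immediate one, which \cref{l:cutoff.step} excludes --- whereas the paper's proof asserts that $C_1 \cup \hat C$ is a configuration without argument; this is a welcome filling-in of detail rather than a change of approach.
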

\begin{proof}
   Assume that \cref{l:cutoff.step} holds on any node~$b'' \in B$ such that
   $b \calls^* b''$, \ie, all nodes in the subtree rooted at~$b$.
   By hypothesis we can apply \cref{l:cutoff.step} to~$b$ and $\hat C$.
   If~$C$ is maximal, \ie, the algorithm do not find any non-cutoff
   extension of~$C$, then we have that $\hat C \subseteq C$, as otherwise
   any event in $\hat C \setminus C$ would be non-cutoff (as it is red)
   and would be enabled at $C$ (because $\hat C \cup C$ is a
   configuration).  So if~$b$ is a leaf, then we can take $b' \eqdef b$.
   
   If not, then~$e$ is enabled at~$C$ and there is at least a left child.
   Two things can happen now.
   Either~$e$ is in conflict with some event in~$\hat C$ or not.

   If~$e$ is not in conflict with any event in~$\hat C$, then the
   left child~$b_1 \eqdef \tup{C_1, D_1, \cdot, e_1}$,
   with $C_1 \eqdef C \cup \set e$ and $D_1 \eqdef D$,
   is such that $C_1 \cup \hat C$ is a configuration, and
   $\hat C$ contains some event in conflict with every event in $D_1$.
   Furthermore \cref{l:cutoff.step} applies to~$b_1$ as well.

   If $e$ is in conflict with some event in~$\hat C$,
   then by \cref{l:cutoff.step} we know
   that~$b$ has a right
   child~$b_1 \eqdef \tup{C_1, D_1, \cdot, e_1}$,
   with $C_1 \eqdef C$ and $D_1 \eqdef D \cup \set e$.
   Like before, $C_1 \cup \hat C$ is a configuration and
   for any event in~$D_1$ we have another one in $\hat C$ in conflict with
   it.

   In any case, if $C_1$ is maximal, then it holds that $\hat C \subseteq
   C_1$ and we are done.
   If not, we can reapply \cref{l:cutoff.step} at~$b_1$ and
   make one more step into one of the children $b_2$ of~$b_1$.
   If $C_2$ still do not contain $\hat C$, then
   we need to repeat the argument starting from~$b_2$ only a finite
   number~$n$ of times until we reach a
   node~$b_n \eqdef \tup{C_n, D_n, \cdot, \cdot}$ where $b_n$ has no
   further children in the call tree (\ie,
   $\en{C_n}$ is either empty or contains only cutoff events).
   This is because every time we repeat the argument on a non-leaf
   node~$b_i$ we advance one step down in the call tree, and all paths in
   the tree are finite. So eventually we find a leaf node~$b_n$, which, as
   argued earlier, satisfies that $\hat C \subseteq C_n$,
   and we can take $b' \eqdef b_n$.
\end{proof}

\thmcutoffcompleteness*
\begin{proof}
   Let $\ppref$ be an unfolding prefix constructed with the classic
   saturation-based unfolding algorithm, using the standard cutoff strategy in
   combination with an arbitrary adequate order $\prec$:
   an event~$e$ is a \emph{classic-cutoff} if there is another event~$e'$
   in~$\unf M$ such that $\state{[e]} = \state{[e']}$
   and $[e'] \prec [e]$.
   By construction all events in $\ppref$ are red, so they are in $\pred$.

   Let $\tup{B, {\calls}}$ be the call tree associated with one execution
   of \cref{a:a1} retrofitted with the cutoff mechanism.
   Let $s \in \reach M$ be an arbitrary state of the system.
   Owing to the properties of $\ppref_M$~\cite{ERV02}, there is a
   configuration~$\hat C$ in $\ppref$ such that~$\state{\hat C} = s$.
   Such a configuration is in $\pred$.

   Now, \cref{l:cutoff.findit} applies to the initial node~$b_0 \in B$
   and~$\hat C$, and guarantees that the algorithm will visit a node
   $b \eqdef \tup{C, \cdot, \cdot, \cdot} \in B$ such that
   such that $\hat C \subseteq C$. This is what we wanted to prove.
\end{proof}

\end{document}